\documentclass[dvipsnames]{article}
\usepackage[round]{natbib}
\let\cite\citep
\usepackage[margin=1in]{geometry}
\usepackage[parfill]{parskip}

\usepackage[utf8]{inputenc} 
\usepackage[T1]{fontenc}    
\usepackage{hyperref}       
\usepackage{url}            
\usepackage{booktabs}      
\usepackage{amsfonts}       
\usepackage{nicefrac}       
\usepackage{microtype}      

\usepackage{color}
\usepackage{amsmath,amssymb,enumerate}
\usepackage{algorithmic}
\usepackage{algorithm}
\usepackage{amsthm}
\usepackage{graphicx}
\usepackage{sidecap}
\usepackage{environ}
\usepackage{placeins}
\usepackage{subfig}
\usepackage[page]{appendix}
\usepackage{adjustbox}

\hypersetup{
    colorlinks=true,
    urlcolor=blue,
    linkcolor=black,
    citecolor=black
}

\usepackage{comment}

\usepackage{caption}
\sidecaptionvpos{figure}{t}

\usepackage{tikz}
\usetikzlibrary{arrows.meta}
\newcommand{\mc}{\mathcal}
\newcommand{\mb}{\mathbb}
\newcommand{\R}{\mb{R}}
\newcommand{\A}{\mathcal{A}}
\newcommand{\X}{\mathcal{X}}
\newcommand{\I}{I}
\renewcommand{\P}{P}
\newcommand{\W}{W}
\newcommand{\onev}{\boldsymbol{1}}
\newcommand{\ones}{\boldsymbol{1}}
\newcommand{\sconst}{\eta}
\newcommand{\supp}{\mathrm{supp}}
\newcommand{\species}{y}
\newcommand{\weights}{w}
\newtheorem{theorem}{Theorem}[section]
\newtheorem{lemma}{Lemma}[section]

\newtheorem{proposition}{Proposition}[section]
\newtheorem{corollary}{Corollary}[section]

\newcommand{\N}{\mathcal{N}}
\title{Evolutionary Game Theory Squared:\\Evolving Agents in Endogenously Evolving Zero-Sum Games}

\makeatletter
\def\blfootnote{\xdef\@thefnmark{}\@footnotetext}
\makeatother

\author{%
  Stratis Skoulakis\thanks{Singapore University of Technology and Design}\rule{.085\textwidth}{0pt}  \\
  \and 
  Tanner Fiez\thanks{University of Washington} \rule{.085\textwidth}{0pt}  \\
  \and 
  Ryann Sim\footnotemark[1] \rule{.02\textwidth}{0pt} \\
  \and  
  Georgios Piliouras\footnotemark[1]~\thanks{Joint last authors} \\
  \and 
  Lillian Ratliff\footnotemark[2]~\footnotemark[3]{\footnote{{Mail: \texttt{efstratios@sutd.edu.sg}, \texttt{fiezt@uw.edu}, \texttt{ryann\_sim@mymail.sutd.edu.sg}, \texttt{georgios@sutd.edu.sg}, \texttt{ratliffl@uw.edu}}}} \\
}
\date{}

\begin{document}

\maketitle

\begin{abstract}
The predominant paradigm in evolutionary game theory and more generally online 
learning in games is based on a clear distinction between a population of \textit{dynamic agents} that interact given a \textit{fixed, static game}.
 In this paper, we move away from the artificial divide between dynamic agents and static games, to introduce and analyze a large class of competitive settings where both the agents and the games they play evolve strategically over time.
We focus on arguably the most archetypal game-theoretic setting---zero-sum games (as well as network generalizations)---and the most studied evolutionary learning dynamic---replicator, the continuous-time analogue of multiplicative weights. Populations of agents compete against each other in a zero-sum competition that itself evolves adversarially to the current population mixture. Remarkably, despite the chaotic coevolution of agents and games,
 we prove that the system exhibits a number of regularities. First, the system has \textit{conservation laws} of an information-theoretic flavor that couple the behavior of all agents and games. Secondly, the system is \textit{Poincar\'{e} recurrent}, with effectively all possible initializations of agents and games lying on recurrent orbits that come arbitrarily close to their initial conditions infinitely often. Thirdly, the \textit{time-average agent behavior and utility converge} to the Nash equilibrium values of the \textit{time-average} game.
 Finally, we provide a polynomial time algorithm to efficiently predict this time-average behavior for any such coevolving network game.  
\end{abstract}

\section{Introduction}
\label{sec:intro}
The problem of analyzing evolutionary learning dynamics in games is of fundamental importance in several fields such as evolutionary game theory~\cite{sandholm2010population}, online learning in games~\cite{cesa2006prediction,nisan2007algorithmic}, and multi-agent systems~\cite{shoham2008multiagent}.
The dominant paradigm in each area is that of evolutionary agents adapting to each others behavior. 
In other words, the dynamism of the
environment of each agent is driven by the other agents, whereas the rules of interaction between the agents, that is, the game, is static. 
\begin{figure}[!htb]
\centering
  \subfloat[][Population $\species$ evolution ]{\includegraphics[width=0.32\linewidth]{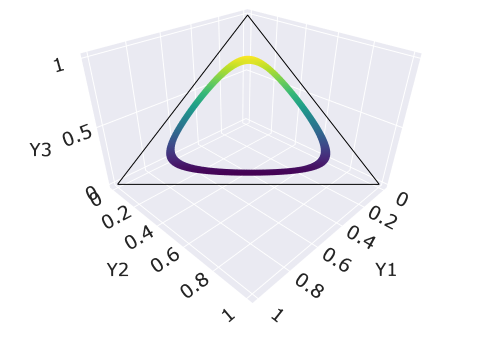}\label{fig:sub1}}\hfill
  \subfloat[][Environment $\weights$ evolution]{\includegraphics[width=0.32\linewidth]{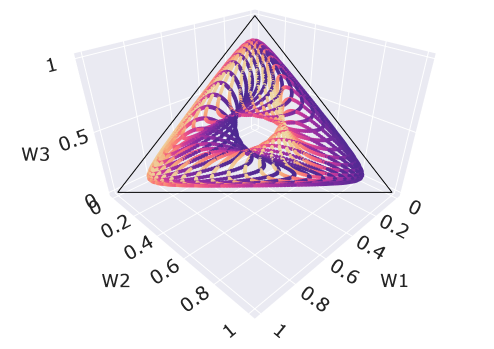}\label{fig:sub2}}\hfill
  \subfloat[][Coevolution of $\species$ and $\weights$]{\includegraphics[width=0.35\linewidth]{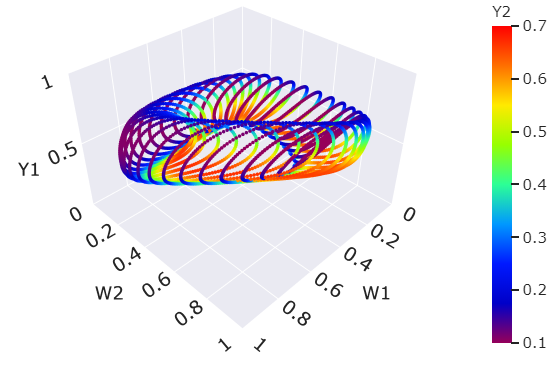}\label{fig:sub3}}\hfill
  \caption{Poincar\'e recurrence in a time-evolving generalized Rock-Paper-Scissors model. 
  }
    \label{fig:main0}
\end{figure}
This separation between \textit{evolving agents} and a \textit{static game} is so standard that it typically goes unnoticed, however, this fundamental restriction does not allow us to capture many applications of interest.  In artificial intelligence~\cite{wang2019evolutionary,garciarena2018evolved,costa2019coegan,miikkulainen2019evolving, wu2019logan,stanley2002evolving} as well as biology,  sociology, and economics~\cite{stewart2014collapse,tilman2020evolutionary,tilman2017maintaining,bowles2003co, weitz2016oscillating}, 
 the rules of interaction can themselves adapt to the collective history of the agent behavior.
 For example, in adversarial learning and curriculum learning~\cite{huang2011adversarial, bengio2009curriculum}, the difficulty of the game can increase over time by exactly focusing on the settings where the agent has performed the weakest.
 Similarly, in biology or economics, if a particular advantageous strategy is used exhaustively by agents, then its relative advantages typically dissipate over time (negative frequency-dependent selection, see~\citealt{heino1998enigma}), which once again drives the need for innovation and exploration. 

In all these cases, the game itself stops being a passive object that the agents act upon, but instead is best thought of as an algorithm itself. Similar to online learning algorithms employed by agents, the game itself may have a memory/state that encodes history. However,
 unlike  online learning algorithms that receive a history or sequence of payoff vectors and output the current behavior (e.g., a probability distribution over actions), an algorithmic game receives as input a history or sequence of agents' behavior and outputs a new payoff matrix. Hence, learning and games are ``dual" algorithmic objects which are coupled in their evolution (Figure~\ref{fig:main0}).

How does one even hope to analyze evolutionary learning in time-evolving games? Once we move away from the safe haven of static games, we lose our prized standard methodology that roughly consists of two steps: i) compute/understand the equilibria of the given game (e.g., Nash, correlated, etc., see~\citealt{nash1951non, aumann1974subjectivity}) and their properties; ii) connect the behavior of learning dynamics  to a target class of equilibria (e.g., convergence).  Indeed, the only prior work to ours, namely by \citet{mai2018cycles}, which considers games larger than $2\times2$, focused on a specific payoff matrix structure based on Rock-Paper-Scissors (RPS) and argued recurrent behavior via a tailored argument that was explicitly designed for the dynamical system in question with no clear connections to game theory. 
We revisit this problem and find a new systematic game-theoretic analysis that generalizes to arbitrary network zero-sum games.  

\subsubsection*{Contributions} We provide a general framework for analyzing learning agents in time-evolving zero-sum games as well as rescaled network generalizations thereof. 
To begin, we develop a novel \textit{reduction} that takes as input time-evolving games and reduces them to a game-theoretic graph that generalizes both graphical zero-sum games and evolutionary zero-sum games. 
 In this generalized but static game, evolving agents and evolving games represent different types of nodes (nodes with and without self-loops) in a graph connected by edge games. 
 The bridge we form between time-evolving games and static network games makes the latter far more interesting than previously thought: \emph{our reduction proves they are sufficiently expressive to capture not only multiple pairwise interactions, but time-varying environments as well.}
 Moreover, by providing a path back to the familiar territory of evolving agents interacting in a static game, the mathematical tools of game theory and dynamical systems theory become available. This allows us to perform a general algorithmic analysis of commonly studied systems from machine learning and biology  previously requiring individualized treatment.

From an algorithmic learning perspective, we focus on the most studied evolutionary learning dynamic: replicator, the continuous-time analogue of the multiplicative weights update. 
Remarkably, despite the chaotic coevolution of agents and games that forces agents to continually innovate, the system can be shown to exhibit a number of regularities. We prove the system is \textit{Poincar\'{e} recurrent}, with effectively all initializations of agents and games lying on recurrent orbits that come arbitrarily close to their initial conditions infinitely often (Figure~\ref{fig:main0}). 
As a crucial component of this result, we demonstrate the dynamics obey information-theoretic \textit{conservation laws} that couple the behavior of all agents and games (Figure \ref{fig:kldivtorus}). Moreover, while the system never equilibrates, the conservation laws allow us to prove the \textit{time-average behavior and utility of the agents converge} to the time-average Nash of their evolving games with bounded regret (Figures~\ref{fig:first_timeaverage} and~\ref{fig:5player} in the Appendix).
Finally, we provide a \textit{polynomial time algorithm} that predicts  these time-average quantities. 

\subsubsection*{Related Work and Technical Novelty} 
Our work relates with the rich previous literature studying the emerging recurrent behavior of replicator dynamics in (network) zero-sum games \cite{piliouras2014persistent,piliouras2014optimization,boone2019darwin, mertikopoulos2018cycles,nagarajanchaos20,perolat2020poincar}. 
All these results are based on the surprising fact that the Kullback-Leibler (KL) divergence between the dynamics produced by the replicator equation and the Nash Equilibrium remains constant. Unfortunately this proof technique is an immediate dead-end for time-evolving zero-sum games (cf. Figure~\ref{fig:klevolve} in Appendix \ref{appsecs:experiments} which shows the KL divergence between the (evolving) strategies and (evolving) Nash equilibrium for the central RPS example from \citealt{mai2018cycles}). In particular, it is not even clear what the static concept of a Nash equilibrium means in this context. Despite this, \citet{mai2018cycles} managed to prove recurrence via constructing an invariant function for this specific example. However, their invariant function relies on the symmetries of the RPS game and has no deeper interpretation or obvious generalization. A key contribution of our work is the development of a novel characterization of a general class of time-evolving games that possess a number of regularities including recurrence, which we demonstrate by deriving an information theoretic invariant.
In particular, this allows us to not only generalize the recurrence results of time-evolving games to a class with much richer and complex interactions than the one studied in \citet{mai2018cycles}, but also provides a naturally interpretable  invariant in such time-evolving games.

\section{Preliminaries and Definitions}
\label{sec:prelims}
In this section, we formalize the concept of polymatrix games, define the replicator dynamics for this class of games, and provide background material on dynamical systems that is relevant to our results. 

\subsubsection*{Polymatrix Games}
An $N$-player \emph{polymatrix game} is defined using an undirected graph $G = (V, E)$ where $V$ corresponds to the set of agents (or players) and $E$ corresponds to the set of edges between agents in which a \emph{bimatrix game} is played between the endpoints~\cite{cai2011minmax}. Each agent $i \in V$ has a set of actions $\A_i = \{1,\ldots,n_i\}$ that can be selected at random from a distribution $x_i$ called a \emph{mixed strategy}. The set of mixed strategies of player $i \in V$ is the standard simplex in $\mb{R}^{n_i}$ and is denoted $\X_i=\Delta^{n_i-1}=\{x_i\in \R^{n_i}_{\geq 0}:\ \sum_{\alpha \in \A_i} x_{i\alpha}=1\}$ where $x_{i\alpha}$ denotes the probability mass on action $\alpha \in \A_i$. The state of the game is then defined by the concatenation of the strategies of all players. We call the set of all possible strategies profiles the \emph{strategy space}, and denote it by $\X = \prod_{i \in V}\X_i$.

The bimatrix game on edge $(i,j)$ is described using a pair of matrices $A^{ij} \in \R^{n_i \times n_j}$ and $A^{ji} \in \R^{n_j \times n_i}$. An entry $A^{ij}_{\alpha\beta}$ for $(\alpha, \beta)\in \A_i\times \A_j$ represents the reward player $i$ obtains for selecting action $\alpha$ given that player $j$ chooses action $\beta$. We note that the graph $G$ may also contain \emph{self-loops}, meaning that an agent $i \in V$ plays a game defined by $A^{ii}$ against itself. The \emph{utility} or \emph{payoff} of agent $i \in V$ under the strategy profile $x \in \X$ is denoted by $u_i(x)$ and corresponds to the sum of payoffs from the bimatrix games the agent participates in. 
The payoff is equivalently expressed as $u_i(x_i,x_{-i})$ when distinguishing between the strategy of player $i$ and all other players $-i$. 
More precisely,
\begin{equation}\label{eq:payoff}
 u_i(x) = \sum\nolimits_{j:(i,j) \in E} x_i^\top A^{ij} x_j.
\end{equation}
We further denote by $u_{i\alpha}(x) = \sum_{j:(i,j) \in E} (A^{ij} x_j)_{\alpha}$ the utility of player $i \in V$ under the strategy profile $x=(\alpha, x_{-i}) \in \X$ for $\alpha \in \A_i$.
The game is called \emph{zero-sum} if $\sum_{i \in V}u_i(x) = 0$  for all $x \in \X$.
Moreover, if there are positive coefficients $\{\sconst_i\}_{i\in V}$ such that 
$\sum_{i \in V}\sconst_i u_i(x)=0$ for all $x \in \X$ and the self-loops are antisymmetric (meaning $A^{ii} = - (A^{ii})^\top$), the game is called \emph{rescaled zero-sum}.

A common notion of equilibrium behavior in game theory is that of a Nash equilibrium, which is defined as a mixed strategy profile $x^\ast\in \X$ such that for each player $i\in V$,
\begin{equation}
    u_i(x_i^\ast, x_{-i}^{\ast})\geq u_{i}(x_i, x_{-i}^\ast), \ \forall x_i\in \X_i.
\end{equation}
We denote the support of $x^\ast_i\in \X_i$ by $\supp(x^\ast_i)=\{\alpha \in \A_i:\ x_{i\alpha}>0\}$. A Nash equilibrium is said to be an \emph{interior} or \emph{fully mixed} Nash equilibrium if $\supp(x_i^\ast)=\A_i\ \forall i\in V$.
\subsubsection*{Replicator Dynamics}
\label{subsec:replicator}
In polymatrix games, \emph{replicator dynamics}~\cite{sandholm2010population} for each $i\in V$ are given by
\begin{equation}\label{eq:replicator}
\dot{x}_{i \alpha} = x_{i \alpha} 
(u_{i\alpha}(x) - u_i(x)),\ \ \forall \alpha\in \A_i.
\end{equation}
We suppress the explicit dependence on time $t$ in the system and do so throughout where clear from context to simplify notation. Moreover, we consider initial conditions on the interior of the simplex.
The replicator dynamics are equivalently given in vector form for each $i\in V$ by the system
\begin{equation}\label{eq:replicator_vector}
 \dot{x}_{i} = x_i \cdot
\big(\sum_{j: (i,j)\in E} A^{ij}x_j - \big(\sum_{j: (i,j)\in E}x_i^\top A^{ij}x_j\big) \cdot \ones \big), 
\end{equation}
where $\ones$ is an $n_i$--dimensional vector of ones and the operator $(\cdot)$ denotes elementwise multiplication. 

For the purpose of analysis, the replicator dynamics in \eqref{eq:replicator} are often translated by a diffeomorphism from the interior of $\X$ to the cumulative payoff space $\mc{C}=\prod_{i\in V}\R^{n_i-1}$, which is defined by a mapping such that
$x_i=(x_{i1}, \ldots, x_{in_i})\mapsto (\ln\tfrac{x_{i2}}{x_{i1}}, \ldots, \ln\tfrac{x_{in_i}}{x_{i1}})$ for each player $i\in V$.

\subsubsection*{Review of Topology of Dynamical Systems}
We now review some concepts from dynamical systems theory that will help us prove Poincar\'e recurrence. 
Further background material can be found in the book of~\citet{alongi2007recurrence}. 

\emph{\textbf{Flows:}}  Consider a differential equation $\dot{x}=f(x)$ on a topological space $X$.
The existence and uniqueness theorem for ordinary differential equations guarantees that there exists a unique continuous function $\phi:\R\times X\to X$, which is termed the \emph{flow}, that satisfies
(i) $\phi(t,\cdot):X\to X$---often denoted $\phi^t:X\to X$---is a homeomorphism for each $t\in \R$,  (ii) $\phi(t+s,x)=\phi(t,\phi(s,x))$ for all $t,s\in\R$ and all $x\in X$, and (iii) for each $x\in X$, $\tfrac{d}{dt}|_{t=0}\phi(t,x)=f(x)$. 
Since the replicator dynamics are Lipschitz continuous, a unique flow $\phi$ of the replicator dynamics  exists.

\emph{\textbf{Conservation of Volume}}: 
The flow $\phi$ of a system of ordinary differential equations is called \emph{volume preserving} if the volume of the image of any set $U \subseteq \R^d$ under $\phi^t$ is preserved. More precisely, for any set $U\subseteq \R^d$,
$\text{vol}(\phi^t(U)) = \text{vol}(U)$.
Whether or not a flow preserves volume can be determined by applying \emph{Liouville's theorem}, which says the flow is volume preserving if and only if the divergence of $f$ at any point $x \in \R^d$ equals zero---that is,
$\text{div}f(x)=\mathrm{tr}(Df(x))=\sum_{i=1}^d \frac{d f(x)}{dx_{i}} = 0.$

 \emph{\textbf{Poincar\'{e} Recurrence:}} 
 If a dynamical system preserves volume and every orbit remains bounded, almost all trajectories return arbitrarily close to their
initial position, and do so infinitely often~\cite{poincare1890probleme}. Given a flow $\phi^t$ on a topological space $X$, a point $x\in X$ is \emph{nonwandering} for $\phi^t$ if for each open neighborhood $U$ containing $x$, there exists $T>1$ such that $U\cap \phi^T(U)\neq \emptyset$. The set of all nonwandering points for $\phi^t$, called the \emph{nonwandering set}, is denoted $\Omega(\phi^t)$. 

\begin{theorem}[Poincar\'{e} Recurrence~\cite{poincare1890probleme}]\label{t:poincare}
If a flow preserves volume and has
only bounded orbits, then for each open set almost all orbits intersecting the set
intersect it infinitely often: if $\phi^t$ is a volume preserving flow on a bounded set $Z\subset \R^d$, then $\Omega(\phi^t)=Z$.
\end{theorem}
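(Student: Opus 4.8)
The plan is to reduce the statement to the classical measure-theoretic recurrence lemma for the time-one map of the flow. First I would observe that, since $Z\subset\R^d$ is bounded, Lebesgue measure $\mu$ is finite on $Z$; and since $\phi^t$ is volume preserving and maps $Z$ onto itself, the homeomorphism $T:=\phi^1:Z\to Z$ satisfies $\mu(T^{-1}E)=\mu(E)$ for every measurable $E\subseteq Z$ (immediate from $\mathrm{vol}(\phi^1(U))=\mathrm{vol}(U)$), i.e.\ $T$ is a measure-preserving transformation of the finite measure space $(Z,\mu)$. It then suffices to prove three things: (i) for every measurable $A\subseteq Z$, $\mu$-almost every point of $A$ has $T$-orbit meeting $A$ infinitely often; (ii) using second countability of $Z$, that $\mu$-almost every point of $Z$ is nonwandering for $\phi^t$; and (iii) that the nonwandering set, being closed and of full measure, equals $Z$.

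For (i) I would fix a measurable $A$ and consider $B=\{x\in A:\ T^n x\notin A \text{ for all } n\geq 1\}$, the points of $A$ that never return. The crux is that $B,T^{-1}B,T^{-2}B,\dots$ are pairwise disjoint: if $y\in T^{-m}B\cap T^{-n}B$ with $0\leq m<n$ then $T^m y\in B$, yet $T^{n-m}(T^m y)=T^n y\in B\subseteq A$ with $n-m\geq 1$, contradicting that membership of $B$ forbids any forward image from lying in $A$. Since $T$ preserves $\mu$, each of these sets has measure $\mu(B)$, so $\sum_{k\geq 0}\mu(B)\leq\mu(Z)<\infty$ forces $\mu(B)=0$: almost every point of $A$ returns at least once. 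To upgrade this to ``infinitely often'' I would note that a point of $A$ returning only finitely often either lies in $B$ itself or satisfies $T^{n}x\in B$ for the largest $n\geq1$ with $T^{n}x\in A$; in all cases such points lie in $\bigcup_{k\geq0}T^{-k}B$, a $\mu$-null set.

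For (ii)--(iii) I would fix a countable base $\{U_k\}_{k\geq1}$ for the topology of $Z$, let $Z_k\subseteq U_k$ be the set of points of $U_k$ whose $T$-orbit meets $U_k$ only finitely often---so $\mu(Z_k)=0$ by (i) applied with $A=U_k$---and set $Z_0=\bigcup_k Z_k$, a $\mu$-null set. For $x\in Z\setminus Z_0$ and any open neighborhood $U$ of $x$, choosing a basic $U_k$ with $x\in U_k\subseteq U$ yields arbitrarily large integers $n$ with $\phi^n x\in U_k$, hence $\emptyset\neq U_k\cap\phi^n(U_k)\subseteq U\cap\phi^n(U)$ with $n>1$; thus $x$ is nonwandering, and $\Omega(\phi^t)\supseteq Z\setminus Z_0$ has full $\mu$-measure. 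Finally $\Omega(\phi^t)$ is closed---its complement is open, since an open neighborhood witnessing that one point wanders witnesses it for every point in that neighborhood---so $Z\setminus\Omega(\phi^t)$ is relatively open in $Z$ and $\mu$-null; on the state spaces relevant here (finite products of simplices and related manifolds) the volume measure has full support, so this set is empty and $\Omega(\phi^t)=Z$.

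I expect the only real obstacle to be a matter of discipline rather than difficulty: one must resist reasoning pointwise and instead lean on the disjointness-plus-finite-measure argument, where boundedness of $Z$ is exactly what makes the ``never-returns'' set null. The subtlest point is the final one---promoting ``$\mu$-almost every point is nonwandering'' to the literal equality $\Omega(\phi^t)=Z$---which in full rigor needs, beyond boundedness, that the volume measure have full support on the state space, as is the case in every setting considered in this paper.
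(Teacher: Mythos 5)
The paper does not prove this statement: it is quoted as classical background (cited to Poincar\'e) and used as a black box, so there is no in-paper argument to compare against. Your proof is the standard one and is correct: passing to the time-one map $T=\phi^1$ on the finite measure space $(Z,\mu)$, the disjointness of $B,T^{-1}B,T^{-2}B,\dots$ forces the never-returning set to be null, the countable-base argument upgrades this to ``almost every point is nonwandering,'' and closedness of $\Omega(\phi^t)$ gives the final equality. You are also right to flag the one genuinely delicate point, which the paper's informal statement glosses over: the literal conclusion $\Omega(\phi^t)=Z$ (rather than merely $\mu(Z\setminus\Omega(\phi^t))=0$) needs the volume measure to have full support on $Z$, which holds in the settings where the paper invokes the theorem (bounded invariant regions of the cumulative-payoff space carrying full-dimensional Lebesgue measure). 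The only cosmetic quibble is that your witness times $n$ for nonwandering should be taken strictly greater than $1$ to match the paper's definition verbatim, which your ``arbitrarily large $n$'' already supplies.
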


\section{Studying Doubly Evolutionary Processes via
Polymatrix Games}
\label{sec:evolution}

Numerous applications from  artificial intelligence (AI) and machine learning (ML) to biology cast competition between populations (e.g., neural networks/algorithms or species/agents) and the environment (e.g., hyperparameters/network configurations or resources) as a time-evolving dynamical system.
The basic abstraction takes the form of a population $y$ of \emph{species} which evolve dynamically in time as a function of itself and some \emph{environment} parameters $w$ whose evolution, in turn, depends on $y$. 
We now review models from each application and then connect a broad class of time-evolving dynamical systems to static polymatrix games. This reduction provides a path toward analyzing complex non-stationary dynamics using tools developed for the typical static game formulation.

\subsubsection*{Doubly Evolutionary Behavior in AI and ML}  
Evolutionary game theory methods for training generative adversarial networks commonly exhibit time-evolving dynamic behavior and there is a pair of predominant doubly evolutionary process models~\cite{costa2020using,wang2019evolutionary,garciarena2018evolved,costa2019coegan,miikkulainen2019evolving}.
In the first formulation,
\citet{wang2019evolutionary} describe training the generator network, with parameters $y$, via a gradient-based algorithm composed of \emph{variation}, \emph{evaluation}, and \emph{selection}. The discriminator network, with parameters $w$ updated via gradient-based learning, is modeled as the environment operating in a feedback loop with $y$.
The second model is such that the generator and discriminator are different species  (or \emph{modules}) in the population $y$ which follows evolutionary dynamics, 
and network hyperparameters (or \emph{chromosomes}) $w$ evolve in time as a function of $y$~\cite{garciarena2018evolved,costa2019coegan,miikkulainen2019evolving}.
We connect further to AI and ML applications in the discussion where we highlight exciting future directions.
 % (Section~\ref{sec:conclusion}).

\subsubsection*{Doubly Evolutionary Behavior in Biology} There are also two common formulations emerging in biology. 
In the first, the focus is on the level of coordination in a population as a function of evolving environmental variables. The prevailing model is comprised of  replicator dynamics $\dot{y}=y(1-y)((A(w)y)_1-(A(w)y)_2)$ in which a population of two species $y$ plays a prisoner's dilemma (PD) game against themselves in a setting where the payoff matrix $A(w)$ depends on an environment variable $w$ which, in turn, depends on the population via $\dot{w}=w(1-w)G(y)$ where $G(y)$  is a feedback mechanism describing when environmental degradation or enhancement occurs as a function of $y$ \cite{weitz2016oscillating,tilman2020evolutionary,tilman2017maintaining,lade2013regime}; e.g., in \citet{weitz2016oscillating}, $G(y)$ takes the form $\theta y-(1-y)$ for some $\theta>0$ which represents  the ratio of the enhancement rate to degradation rate of `cooperators' and `defectors' in the time-evolving PD game.
In the second formulation, the focus is on studying how competition among species is modulated by resource availability.
Indeed, from a biological perspective,~\citet{mai2018cycles} argue that the environment parameters $w$ on which a population 
$y$ of $n$ antagonistic species depend are not constant, but rather evolve over time. Since the species fitness depends on the environment, the game among the species is also time-varying. 
The adopted 
model of the dynamic behavior with initial conditions on the interior of the simplex for both $w$ and $y$
is given for each $i \in \{1,\dots, n\}$ by
\begin{equation} 
\begin{aligned}
\dot{\weights}_i &= \weights_i \sum_{j=1}^n\weights_j(\species_j - \species_i)\\ 
\dot{\species}_i &= \species_i \big((\P(\weights) \species)_i - \species^\top \P(\weights) \species \big)
\end{aligned}
\label{eq:replicator_evolution}
\end{equation}
where 
$\P(w) =\P + 
\mu \W$ for $\mu>0$ with $P$ defined as the generalized RPS payoff matrix 
\[\P=\begin{pmatrix}
0 & -1 & 0 & \cdots & 0 & 0 & 1 \\
1 & 0 & -1 & \cdots & 0 & 0 & 0\\
\vdots & \vdots & \vdots & \vdots & \vdots & \vdots & \vdots\\
0 & 0 & 0 & \cdots & 1 & 0 & -1\\
-1 & 0 & 0 & \cdots & 0 & 1 & 0
\end{pmatrix},\]
and the environmental variations matrix
\[\W=\begin{pmatrix}
0 & w_1 - w_2 &  \cdots &  w_1 - w_n\\
w_2 - w_1 & 0 & \cdots & w_2-w_n\\
\vdots & \vdots  & \vdots & \vdots\\
w_n - w_1 &w_n-w_2 & \cdots & 0\\
\end{pmatrix}.\]

\subsubsection*{Reducing Time-Evolving RPS to a Polymatrix Game}
\label{sec:timeevolveRPS}
\citet{mai2018cycles} studied the dynamical system in~\eqref{eq:replicator_evolution} and showed it exhibits a special type of cyclic behavior: \emph{Poincar\'{e} recurrence}. By capturing the evolution of the environment (dynamics of the payoff matrix) as additional players that dynamically change their strategies,
we reduce the coevolution of $w$ and $y$ to 
a \textit{static polymatrix game} of greater dimensionality (greater number of players).
Given this reduction, Theorem~\ref{t:main}, which establishes the Poincar\'{e} recurrence of replicator dynamics in rescaled zero-sum polymatrix games, immediately captures the results of~\citet{mai2018cycles} (see Corollary~\ref{cor:evolving}).

\begin{proposition}
The time-evolving generalized rock-paper-scissors game from~\eqref{eq:replicator_evolution} is equivalent to replicator dynamics in a two-player rescaled zero-sum polymatrix game.
\label{prop:rps}
\end{proposition}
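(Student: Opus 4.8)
The plan is to exhibit the explicit two-node polymatrix game and to verify, term by term, that the replicator dynamics \eqref{eq:replicator} of that game reproduce \eqref{eq:replicator_evolution}. Since the reduction is an equivalence of dynamical systems, no invariant functions or dynamical-systems machinery are needed here; the work is entirely bookkeeping, once the right game is written down.

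First I would construct the graph $G$ on two nodes, one for the environment $\weights$ and one for the species $\species$, each with action set $\{1,\dots,n\}$, so that both strategies live in $\Delta^{n-1}$. I would attach a single self-loop to the $\species$-node carrying the generalized RPS matrix $\P$ (which is antisymmetric, $\P^\top=-\P$, as the definition of a rescaled zero-sum game requires of self-loops), leave the $\weights$-node without a self-loop, and place on the edge $\{\species,\weights\}$ the bimatrix game with payoff matrices $A^{\species\weights}=\mu I_n$ and $A^{\weights\species}=-I_n$.

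Next I would simplify \eqref{eq:replicator_evolution}. Using $\sum_\beta \species_\beta=1$, the environmental variations matrix satisfies $(\W\species)_i=\weights_i-\weights^\top\species$ and $\species^\top\W\species=0$, and since also $\species^\top\P\species=0$ by antisymmetry, $\species^\top\P(\weights)\species=0$; hence the $\species$-equation becomes $\dot{\species}_i=\species_i\big((\P\species)_i+\mu\weights_i-\mu\species^\top\weights\big)$, which is exactly \eqref{eq:replicator} for the $\species$-node, since there the per-action payoff is $u_{\species,i}(x)=(\P\species)_i+\mu\weights_i$ and the average payoff is $u_\species(x)=\species^\top\P\species+\mu\species^\top\weights=\mu\species^\top\weights$. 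For the $\weights$-equation, a one-line computation gives $\sum_i\dot{\weights}_i=\sum_i\weights_i\sum_j\weights_j(\species_j-\species_i)=0$, so $\sum_i\weights_i$ is a first integral and, given the stated interior-of-simplex initial conditions, $\weights(t)\in\Delta^{n-1}$ for all $t$; using $\sum_j\weights_j=1$ the $\weights$-equation then collapses to $\dot{\weights}_i=\weights_i(\weights^\top\species-\species_i)$, which is precisely \eqref{eq:replicator} for the $\weights$-node, since there $u_{\weights,i}(x)=-\species_i$ and $u_\weights(x)=-\weights^\top\species$.

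Finally I would verify the game is rescaled zero-sum: taking the positive coefficients $\sconst_\species=1$ and $\sconst_\weights=\mu$ yields $\sconst_\species u_\species(x)+\sconst_\weights u_\weights(x)=\mu\species^\top\weights-\mu\weights^\top\species=0$ for every $x\in\X$, and the unique self-loop $\P$ is antisymmetric, so both conditions in the definition are met. The only point that requires genuine care is that $\weights$ in \eqref{eq:replicator_evolution} is a priori a positive vector rather than a mixed strategy, so the reduction to a polymatrix game (whose states live on simplices) is only legitimate after one checks that $\sum_i\weights_i$ is conserved and equals $1$ along trajectories; this invariance is the single nontrivial observation, and everything else is a direct matching of terms.
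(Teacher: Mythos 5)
Your proposal is correct and follows essentially the same route as the paper's proof: the same two-node game with self-loop $\P$ on the $\species$-node and edge payoffs $A^{\species\weights}=\mu \I$, $A^{\weights\species}=-\I$, the same use of the conserved sum $\sum_i \weights_i$ to keep $\weights$ on the simplex, and the same rescaling coefficients $\sconst_\species=1$, $\sconst_\weights=\mu$. The only cosmetic difference is that you dispatch the quadratic terms via antisymmetry of $\W$ and $\P$ directly, where the paper expands the double sums explicitly; the computations are identical.
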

\begin{proof}[Proof Sketch \textnormal{(see Appendix~\ref{appsecs:evolution_proof} for formal proof)}]
The initial condition $\weights(0)$ is on the interior of the simplex and $\sum_{i=1}^n \dot{w}_i=0$. Consequently, $\sum_{i=1}^n w_i(0)= \sum_{i=1}^n w_i(t)= 1$, and we obtain 
\begin{equation*}
\dot{\weights}_i = \weights_i \sum_{j=1}^n\weights_j(\species_j - \species_i)
= \weights_i\big(-\species_i +  \sum_{j=1}^n\weights_j \species_j\big),
\end{equation*}
which is the replicator equation of a node $w$ in a polymatrix game  with payoff matrix $A^{wy} = -\I$.
Using a similar decomposition, we reformulate the $y$ dynamics:
\begin{align*}
\dot{\species}_i 
&= \species_i  \big ((\P \species)_i - \species^\top \P \species\big) + \species_i \big(
\mu w_i - \mu\sum_{j=1}^n  w_j \species_j\big).
\end{align*}
This corresponds to the replicator equation of node $\species$ playing against itself with $A^{yy} = \P$ and against $w$ with $A^{yw} = \mu \I$. The game is rescaled zero-sum with $\eta_\species = 1$ and $\eta_\weights = \mu$.
\end{proof}
\subsubsection*{Generalized Reduction}
The previous reduction generalizes to a class of time-evolving games defined by a set of populations $y=(y_1,\ldots, y_{n_y})$ and environments $w=(w_1,\ldots, w_{n_w})$, where $y_\ell\in \Delta^{n-1}$ for each $\ell\in\{1,\ldots, n_y\}$ and $w_k\in \Delta^{n-1}$ for each $k\in \{1,\ldots, n_w\}$. Environments coevolve with only populations and not other environments, while any population coevolves only with  environments and itself. Let $\mc{N}_k^w$ be the set of populations which coevolve with $w_k$ and $\mc{N}_\ell^y$ be the set of environments which coevolve with $y_\ell$. The time-evolving dynamics for each environment $k$ and population $\ell$ are given componentwise by
\begin{align}
 \dot{\weights}_{k,i} &=\weights_{k,i} \sum_{\ell\in \N^w_k}\sum_{j}\weights_{k,j}\left(  (A^{k,\ell}\species_{\ell})_{i}-(A^{k,\ell}\species_{\ell})_{j}\right),\label{eq:fbevolutionwa}\\
 \dot{\species}_{\ell,i}&= \species_{\ell,i}  \left(
(\P_\ell({w}) \species_\ell)_{i} - \species_\ell^\top \P_\ell(\weights) \species_\ell 
\right), 
\label{eq:fbevolutionya}
\end{align}
where 
$\P_\ell(w)=\P_\ell+\sum_{k\in \N^y_\ell} \W^{\ell,k}$ with $\P_\ell\in \mathbb{R}^{n\times n}$ and $W^{\ell,k}\in \mathbb{R}^{n\times n}$ is defined such that the $(i,j)$--th entry is $(A^{\ell,k}w_k)_i-(A^{\ell,k}w_k)_j$. 

Despite the complex nature of this dynamical system, we can show that it is equivalent to replicator dynamics in a polymatrix game.
The proof of this result is in Appendix~\ref{appsecs:evolution_proof}.
\begin{theorem}
\label{thm:evolution_general}
Any time-evolving system defined by the dynamics in (\ref{eq:fbevolutionwa}-\ref{eq:fbevolutionya}) is equivalent to replicator dynamics in a polymatrix game. 
\end{theorem}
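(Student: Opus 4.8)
The plan is to carry out, for every environment node $\weights_k$ and population node $\species_\ell$ simultaneously, the same algebraic manipulation used in the sketch of Proposition~\ref{prop:rps}. The reduction is an \emph{identity} on state spaces: I would identify the phase space of the time-evolving system, $\prod_k\Delta^{n-1}\times\prod_\ell\Delta^{n-1}$, verbatim with the strategy space of a polymatrix game whose vertex set is $\{\weights_k\}_k\cup\{\species_\ell\}_\ell$, the $\weights_k$ carrying no self-loop and the $\species_\ell$ carrying a self-loop, and an edge joining $\weights_k$ to $\species_\ell$ precisely when $\ell\in\N^w_k$ (equivalently $k\in\N^y_\ell$). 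The content of the theorem is then just that the vector field \eqref{eq:fbevolutionwa}--\eqref{eq:fbevolutionya} coincides, after rewriting, with the replicator vector field \eqref{eq:replicator_vector} of this game for a suitable assignment of payoff matrices to edges.

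First I would establish that each simplex is forward-invariant, which is what licenses the normalizations $\sum_i\weights_{k,i}(t)=1$ and $\sum_i\species_{\ell,i}(t)=1$ used below. Summing \eqref{eq:fbevolutionwa} over $i$ produces $\sum_{\ell\in\N^w_k}\sum_{i,j}\weights_{k,i}\weights_{k,j}\big((A^{k,\ell}\species_\ell)_i-(A^{k,\ell}\species_\ell)_j\big)$, whose summand is antisymmetric in $(i,j)$, so $\sum_i\dot\weights_{k,i}=0$; invariance for $\species_\ell$ is the standard fact that replicator preserves the simplex. Using $\sum_j\weights_{k,j}=1$ in \eqref{eq:fbevolutionwa} then gives
\[
\dot\weights_{k,i}=\weights_{k,i}\sum_{\ell\in\N^w_k}\Big((A^{k,\ell}\species_\ell)_i-\weights_k^\top A^{k,\ell}\species_\ell\Big),
\]
which is exactly \eqref{eq:replicator_vector} for a vertex $\weights_k$ with no self-loop and edge payoff matrix $A^{k,\ell}$ toward each $\species_\ell$, $\ell\in\N^w_k$.

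Next I would treat the population dynamics by expanding $\P_\ell(\weights)=\P_\ell+\sum_{k\in\N^y_\ell}\W^{\ell,k}$ inside \eqref{eq:fbevolutionya}. Since the $(i,j)$ entry of $\W^{\ell,k}$ is $(A^{\ell,k}\weights_k)_i-(A^{\ell,k}\weights_k)_j$, using $\sum_j\species_{\ell,j}=1$ yields $(\W^{\ell,k}\species_\ell)_i=(A^{\ell,k}\weights_k)_i-\species_\ell^\top A^{\ell,k}\weights_k$, while $\species_\ell^\top\W^{\ell,k}\species_\ell=0$ because $\W^{\ell,k}$ is antisymmetric ($v^\top Mv=-v^\top Mv$ when $M^\top=-M$). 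Substituting,
\[
\dot\species_{\ell,i}=\species_{\ell,i}\Big((\P_\ell\species_\ell)_i-\species_\ell^\top\P_\ell\species_\ell\Big)+\species_{\ell,i}\sum_{k\in\N^y_\ell}\Big((A^{\ell,k}\weights_k)_i-\species_\ell^\top A^{\ell,k}\weights_k\Big),
\]
which is \eqref{eq:replicator_vector} for a vertex $\species_\ell$ with self-loop matrix $\P_\ell$ and edge matrix $A^{\ell,k}$ toward each $\weights_k$, $k\in\N^y_\ell$. Collecting the vertices, self-loops, edges and matrices identified in the last two displays gives the desired polymatrix game, and by construction its replicator dynamics are literally \eqref{eq:fbevolutionwa}--\eqref{eq:fbevolutionya}.

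None of this is deep; the only points requiring care—rather than being genuine obstacles—are the two antisymmetry cancellations (one to get simplex-invariance of each $\weights_k$, one to kill the quadratic self-term $\species_\ell^\top\W^{\ell,k}\species_\ell$), and the bookkeeping that the bipartite incidence structure $(\N^w_k,\N^y_\ell)$ of the time-evolving system exactly matches the edge set of the constructed graph, so that no spurious couplings are introduced and none are dropped. I would also remark that the resulting game is, in general, only a polymatrix game (not necessarily rescaled zero-sum); identifying the payoff relations $A^{k,\ell}$ versus $A^{\ell,k}$ under which it becomes rescaled zero-sum—so that Theorem~\ref{t:main} applies—is a separate observation, not needed for this statement.
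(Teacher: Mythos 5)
Your proposal is correct and follows essentially the same route as the paper's proof in Appendix~\ref{appsecs:generalization}: establish simplex invariance, normalize using $\sum_j \weights_{k,j}=1$ and $\sum_j\species_{\ell,j}=1$ to put the environment dynamics in replicator form with edge matrices $A^{k,\ell}$, and expand $\P_\ell(\weights)$ to identify the self-loop $\P_\ell$ and edge matrices $A^{\ell,k}$ for each population. The only cosmetic difference is that you kill the quadratic term $\species_\ell^\top\W^{\ell,k}\species_\ell$ by citing antisymmetry of $\W^{\ell,k}$ where the paper does the explicit sum-swapping computation; these are the same cancellation.
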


 The expressive power we gain from this reduction permits us to efficiently describe and characterize coevolutionary processes of higher complexity than past work since we can return to the familiar territory of analyzing dynamic agents in static games. In what follows we focus on providing theoretical results for the subclass of time-evolving systems which reduce to a rescaled zero-sum game. However, this reduction is of independent interest since it can prove useful for future work analyzing the class of general-sum games after the behavior of network zero-sum games and rescaled generalizations are well understood.

\section{Poincar\'{e} Recurrence} 
\label{sec:recurrence}

In this section, we
show that the replicator dynamics are Poincar\'{e} recurrent in $N$-player rescaled zero-sum polymatrix games with interior Nash equilibria. In particular,
 for almost all initial conditions $x(0) \in \X$, the replicator dynamics will return arbitrarily close to $x(0)$ an infinite number of times. 

\begin{theorem}\label{t:main}
The replicator dynamics given in~\eqref{eq:replicator} are Poincar\'{e} recurrent in any $N$-player rescaled zero-sum polymatrix game that has an interior Nash equilibrium.
\end{theorem}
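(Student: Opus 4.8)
The plan is to invoke Theorem~\ref{t:poincare} (Poincar\'e recurrence), which requires two ingredients: (i) the replicator flow preserves a suitable volume/measure on its state space, and (ii) all orbits stay bounded inside that space. The standard obstruction to applying Liouville's theorem directly is that replicator dynamics on $\X=\prod_i \Delta^{n_i-1}$ do \emph{not} preserve the Lebesgue volume in the native simplex coordinates; the fix is to pass to the cumulative-payoff coordinates via the diffeomorphism $x_i\mapsto(\ln\tfrac{x_{i2}}{x_{i1}},\ldots,\ln\tfrac{x_{in_i}}{x_{i1}})$ mentioned in Section~\ref{sec:prelims}, under which the transformed system turns out to be divergence-free (this is the classical fact for a single zero-sum bimatrix game, due to Akin--Losert / Hofbauer, which I would extend additively over the polymatrix edge structure).

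First I would write the replicator system in the payoff coordinates $z_{i\alpha}=\ln(x_{i\alpha}/x_{i1})$ for $i\in V$, $\alpha\in\{2,\ldots,n_i\}$. A direct computation gives $\dot z_{i\alpha}=u_{i\alpha}(x)-u_{i1}(x)$, where each $u_{i\alpha}(x)=\sum_{j:(i,j)\in E}(A^{ij}x_j)_\alpha$ is a function of the other players' strategies $x_j$ (and, through the self-loop term $A^{ii}$, of $x_i$ itself). The Jacobian entry $\partial \dot z_{i\alpha}/\partial z_{i\alpha}$ picks up only the self-loop contribution: differentiating $(A^{ii}x_i)_\alpha-(A^{ii}x_i)_1$ with respect to $z_{i\alpha}$ and using $\partial x_{i\beta}/\partial z_{i\alpha}=x_{i\alpha}(\delta_{\alpha\beta}-x_{i\beta})$ yields, after summing over $\alpha$, a trace contribution proportional to $\sum_\alpha x_{i\alpha}(A^{ii}_{\alpha\alpha}-A^{ii}_{1\alpha}-\text{(average terms)})$. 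Here is where the \emph{antisymmetry} of the self-loops, $A^{ii}=-(A^{ii})^\top$, enters: it forces $A^{ii}_{\alpha\alpha}=0$ and makes the relevant quadratic/linear forms vanish in the trace, so $\mathrm{div}\,f=0$. (Off-diagonal blocks $\partial\dot z_{i\alpha}/\partial z_{j\beta}$ with $j\neq i$ do not contribute to the trace at all.) Thus by Liouville's theorem the flow in $z$-coordinates preserves Lebesgue volume. I would also need to observe that volume preservation is inherited back on $\X$ with respect to the pulled-back measure, but recurrence is a topological/measure-theoretic statement that is invariant under the diffeomorphism, so it suffices to argue in whichever coordinates are convenient.

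Next, for boundedness: the state space $\X$ is already compact, but the $z$-coordinate image $\mathcal C=\prod_i\R^{n_i-1}$ is not, so I need an invariant compact set. This is exactly what the interior-Nash hypothesis buys. Let $x^\ast$ be an interior Nash equilibrium; the rescaled zero-sum condition $\sum_i\eta_i u_i(x)=0$ together with $x^\ast$ being interior gives the usual equalizer property, and the quantity $\sum_i \eta_i\,\mathrm{KL}(x_i^\ast\,\|\,x_i(t))$ is a constant of motion along replicator trajectories in a (rescaled) zero-sum polymatrix game (this is the information-theoretic conservation law the paper advertises; its derivation is the term-by-term cancellation $\frac{d}{dt}\sum_i\eta_i\mathrm{KL}(x_i^\ast\|x_i)=-\sum_i\eta_i(u_i(x^\ast_i,x_{-i})-u_i(x))=\sum_i\eta_i u_i(x)-\sum_i\eta_i u_i(x^\ast_i,x_{-i})=0$, using the Nash/equalizer identity and the rescaled zero-sum property). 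Since each KL term is nonnegative and the coefficients $\eta_i$ are positive, every level set of this invariant is a closed subset of the interior of $\X$, hence bounded away from the boundary, hence compact; and it is flow-invariant by construction. Pushing this compact invariant set through the diffeomorphism gives a compact, flow-invariant set in $z$-coordinates on which the flow is volume-preserving, so Theorem~\ref{t:poincare} applies and yields that almost every point is nonwandering, i.e.\ Poincar\'e recurrence. Finally I would note ``almost all'' is with respect to the invariant volume measure restricted to a level set; ranging over all level sets (indexed by the value of the invariant, which foliates the interior up to a measure-zero set) upgrades this to almost all initial conditions in $\X$.

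I expect the main obstacle to be the divergence computation in the self-loop block: one has to be careful that the chart $z_{i\alpha}=\ln(x_{i\alpha}/x_{i1})$ treats coordinate $1$ asymmetrically, so the trace bookkeeping must correctly account for the $-u_{i1}(x)$ term and for the Jacobian factor $x_{i\alpha}(\delta_{\alpha\beta}-x_{i\beta})$, and then see that antisymmetry of $A^{ii}$ is exactly the condition making the surviving terms cancel. The conservation-law step is then a routine but essential check that the pairwise-game KL cancellations still go through once payoffs are summed over the polymatrix graph and weighted by the $\eta_i$; the rescaled zero-sum identity is precisely what makes the weighted sum telescope to zero.
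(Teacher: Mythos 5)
Your overall architecture is exactly the paper's: pass to the cumulative-payoff chart $z_{i\alpha}=\ln(x_{i\alpha}/x_{i1})$, show the transformed field is divergence-free via Liouville (with antisymmetry of the self-loops killing the only surviving diagonal-block terms), establish a weighted-KL constant of motion to confine orbits to a compact set away from the boundary, and invoke Theorem~\ref{t:poincare}. The volume-preservation half of your argument is essentially correct and matches Lemma~\ref{l:divergence}.

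The gap is in the conservation law. Your displayed chain ends with
$\sum_i\eta_i u_i(x)-\sum_i\eta_i u_i(x^\ast_i,x_{-i})=0$, where the first sum vanishes by the rescaled zero-sum property, but the second sum, $\sum_i\eta_i u_i(x^\ast_i,x_{-i})$, is \emph{not} covered by that property: it is a sum of utilities evaluated at $N$ \emph{different} profiles (each player $i$ unilaterally replaced by $x_i^\ast$ against the actual $x_{-i}$), not a single profile. Nor does the equalizer property at the interior Nash apply directly: that property controls expressions of the form $\sum_{i}(A^{ji}x_i^\ast)_\beta$ (the rows of player $j$'s payoff against equilibrium opponents), whereas $\frac{d}{dt}\Phi$ involves $\sum_i\eta_i\,(x_i^\ast)^\top A^{ij}$ acting on $x_j$, i.e.\ the transposes $(A^{ij})^\top x_i^\ast$ weighted by the \emph{other} player's coefficient $\eta_i$. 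These two coincide only when $ \eta_iA^{ij}=-(\eta_jA^{ji})^\top$ up to constants, i.e.\ in pairwise (constant-)zero-sum games. For a general rescaled zero-sum polymatrix game the individual edge games carry no such structure, and the ``telescoping'' you invoke simply does not go through as written. The missing ingredient --- and the main technical content of the paper's Lemma~\ref{t:invariant} --- is the payoff-preserving transformation of \citet{cai2011minmax}: every (rescaled) zero-sum polymatrix game is payoff-equivalent to one with edge matrices $B^{ij}$ satisfying $B^{ij}+(B^{ji})^\top=c_{ij}\onev$. Only after replacing $\eta_iA^{ij}$ by $B^{ij}$ (legitimate because the substitution changes each summand by a term proportional to $\sum_\beta(x_{j\beta}-x^\ast_{j\beta})=0$) can one transpose the quadratic forms, swap the summation roles of $i$ and $j$, and land on $-\sum_j\sum_i\eta_j(x_j-x_j^\ast)^\top A^{ji}x_i^\ast$, which finally vanishes by the interior-Nash equalizer identity. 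The conclusion you want is true, but it is a theorem, not a routine check, and your proof as stated would fail on any rescaled zero-sum polymatrix game whose edge games are not individually antisymmetric.
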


\citet{boone2019darwin}, the closest known result, prove replicator dynamics are Poincar\'{e} recurrent in $N$-player \emph{pairwise zero-sum} polymatrix games with an interior Nash equilibria, which requires $A^{ij} = -(A^{ji})^{\top}$ for every  $(i, j)\in E$. Our extension to $N$-player \emph{rescaled} zero-sum polymatrix games is a far more general characterization of the Poincar\'{e} recurrence of replicator dynamics since there are no explicit restrictions on the edge games and the polymatrix game itself need not even be strictly zero-sum. 
The significance of this result is further enhanced by the connection developed in Section~\ref{sec:evolution} between a  class of time-evolving games and $N$-player rescaled zero-sum polymatrix games. 
As a concrete example, 
given the reduction of Proposition~\ref{prop:rps}, Theorem~\ref{t:main} recovers the work of~\citet{mai2018cycles}.
 \begin{corollary}
 \label{cor:evolving}
The time-evolving generalized rock-paper-scissors game in~\eqref{eq:replicator_evolution} is Poincar\'{e} recurrent.
 \end{corollary}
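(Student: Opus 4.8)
The plan is to obtain Corollary~\ref{cor:evolving} by simply chaining together the two facts already in hand: the reduction of Proposition~\ref{prop:rps} and the recurrence result of Theorem~\ref{t:main}. Proposition~\ref{prop:rps} shows that, once the environment vector $\weights$ is regarded as a second player, the coevolutionary dynamics in~\eqref{eq:replicator_evolution} are exactly the replicator dynamics~\eqref{eq:replicator} of a two-player polymatrix game whose cross edges carry payoffs $A^{\weights\species}=-\I$ and $A^{\species\weights}=\mu\I$ and whose population node has the antisymmetric self-loop $A^{\species\species}=\P$; that game is rescaled zero-sum with coefficients $\sconst_\species=1$ and $\sconst_\weights=\mu$. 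Theorem~\ref{t:main} then delivers Poincar\'e recurrence as soon as this polymatrix game is known to possess an interior Nash equilibrium, so the only substantive step is to exhibit such an equilibrium, after which recurrence must be transported back to the original $(\weights,\species)$ coordinates.

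First I would verify that the uniform profile $x^\ast=\bigl(\tfrac1n\ones,\tfrac1n\ones\bigr)$ is a fully mixed Nash equilibrium of the reduced game. At $x^\ast_\species=\tfrac1n\ones$ the payoff of any pure action $i$ of the environment node equals $(-\I\,x^\ast_\species)_i=-\tfrac1n$, independently of $i$, so $x^\ast_\weights$ is a best response. For the population node the payoff of a pure action $\alpha$ equals $(\P x^\ast_\species)_\alpha+(\mu\I\,x^\ast_\weights)_\alpha=\tfrac1n(\P\ones)_\alpha+\tfrac{\mu}{n}$; since every row of the generalized rock-paper-scissors matrix $\P$ sums to zero we have $\P\ones=0$, so this payoff equals $\tfrac{\mu}{n}$ for every $\alpha$ and $x^\ast_\species$ is likewise a best response. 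Hence $x^\ast$ is an interior Nash equilibrium and the hypotheses of Theorem~\ref{t:main} are met for the reduced game, which is therefore Poincar\'e recurrent.

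It then remains to push recurrence back through the reduction. The correspondence underlying Proposition~\ref{prop:rps} is the identity on coordinates and is a bijection between the state space of~\eqref{eq:replicator_evolution}, namely the interior of $\Delta^{n-1}\times\Delta^{n-1}$ (note that $\sum_i\weights_i$ is conserved by~\eqref{eq:replicator_evolution} and equals $1$ under the assumed interior initial conditions, so this genuinely is the state space rather than an extra restriction), and the strategy space $\X$ of the two-player polymatrix game; thus it is a homeomorphism conjugating the two flows. Poincar\'e recurrence is precisely the statement that the nonwandering set coincides with the whole bounded state space (Theorem~\ref{t:poincare}), a property invariant under such a conjugacy, so almost every orbit of~\eqref{eq:replicator_evolution} returns arbitrarily close to its initial condition infinitely often.

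I do not anticipate a genuine obstacle here: the corollary is essentially bookkeeping layered on Proposition~\ref{prop:rps} and Theorem~\ref{t:main}. The single point requiring any care is the interior-Nash check, and even that reduces to the elementary structural observation that $\P$ has vanishing row sums; a secondary, purely administrative point is confirming that the normalization $\sum_i\weights_i=1$ used in the reduction is consistent with, rather than imposed upon, the dynamics~\eqref{eq:replicator_evolution}.
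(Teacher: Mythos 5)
Your proposal is correct and follows exactly the route the paper takes: invoke the reduction of Proposition~\ref{prop:rps} and then apply Theorem~\ref{t:main}. The only addition is your explicit check that the uniform profile is an interior Nash equilibrium of the reduced game (using $\P\ones=0$ and the antisymmetry of $\P$), a hypothesis of Theorem~\ref{t:main} that the paper leaves implicit; your verification of it is accurate.
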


The following proof sketch provides intuition that highlights our analysis techniques and we defer the finer points to Appendix~\ref{appsecs:recurrence_proof}. It is worth noting that the technical results we prove in order to show the system is Poincar\'{e} recurrent, namely volume preservation and the bounded orbits property, are themselves independently important as they provide conservation laws that couple the behavior of agents. In fact, they are fundamental to showing that while the system never equilibrates, the time-average dynamics and utility converge to the Nash equilibrium and its utility.

\subsubsection*{Overview of Proof Methods}
To prove Poincar\'{e} recurrence, we need to show the flow corresponding to the system of ordinary differential equations in~\eqref{eq:replicator} is volume preserving and has bounded orbits (cf.~Theorem~\ref{t:poincare}). Notice that the flow of \eqref{eq:replicator}  always has bounded orbits since
$x_{i\alpha} \geq 0$ and $\sum_{\alpha \in \A_i}x_{i\alpha}(t)=1 \ \forall \ i \in V$, however proving the volume preserving property is not as straightforward.
To show volume preservation, we transform the dynamics via a \emph{canonical transformation}. 
Indeed, we prove Poincar\'{e} recurrence of the flow of a system of ordinary differential equations that is diffeomorphic  to
the flow of the replicator equation.  
Given $x \in \X$, consider  the transformed variable $z \in \mathbb{R}^{n_1 + \cdots + n_N-N}$ defined by
\begin{equation}\label{eq:transformation}
z_{i}  = \big(\ln\tfrac{x_{i2}}{ x_{i1}}
, \ldots, \ln \tfrac{x_{in_i}}{ x_{i1}}\big), \ \forall i \in V.
\end{equation}
Given the vector $z_i$, the components of $x_i$ are given by
$x_{i\alpha}=  e^{z_{i\alpha}}/(\sum_{\ell = 1}^{n_i}e^{z_{i \ell}})$.
Under this transformation,  $\dot{z}=F(z)$ is given componentwise for each $\alpha\in \A_i$ and all $i\in V$ by 
\begin{equation}
\dot{z}_{i \alpha} =F_{i\alpha}(z)
= \frac{\dot{x}_{i\alpha}}{x_{i\alpha}} - 
\frac{\dot{x}_{i1}}{x_{i1}}
=\sum_{j \in V} \sum_{\beta \in \A_j} (A^{ij}_{\alpha \beta}-
 A^{ij}_{1 \beta})  e^{z_{j \beta}}/\sum_{\ell = 1}^{n_j}e^{z_{j \ell}}.
 \label{eq:zdynamics} 
\end{equation}
Observe that $F_{i1} =0$, meaning $z_{i1} =0$ for all time. To show Poincar\'{e} recurrence of \eqref{eq:replicator}, we prove two key properties: \emph{(i)} the flow of $\dot{z}$
is volume preserving, meaning the trace Jacobian of the respective vector field $\dot{z}=F(z)$ is zero, and,
\emph{(ii)} $\dot{z}$ has bounded orbits from any interior initial condition. Then, the Poincar\'{e} recurrence of $\dot{z}$, and consequently $\dot{x}$,  follows from Theorem~\ref{t:poincare}.

\subsubsection*{Conservation of Volume}
We show that the trace of the vector field $F(z)$ is zero, which then from {Liouville's theorem} 
guarantees $\dot{z}$, as defined in~\eqref{eq:zdynamics}, is volume preserving. 
\begin{lemma}\label{l:divergence}
For any $N$-player rescaled zero-sum polymatrix game, 
\[\mathrm{tr}(DF(z)) = \sum_{i \in V}\sum_{\alpha \in \A_i}\frac{d }{dz_{i\alpha}}F_{i\alpha}(z)=0.\]
\end{lemma}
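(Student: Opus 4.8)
The plan is to compute the diagonal entries of the Jacobian $DF(z)$ directly from the closed form~\eqref{eq:zdynamics} and sum them, using the rescaled zero-sum condition to cancel everything. First I would fix a player $i\in V$ and an action $\alpha\in\A_i$, and differentiate
\[
F_{i\alpha}(z)=\sum_{j\in V}\sum_{\beta\in\A_j}(A^{ij}_{\alpha\beta}-A^{ij}_{1\beta})\,\frac{e^{z_{j\beta}}}{\sum_{\ell=1}^{n_j}e^{z_{j\ell}}}
\]
with respect to $z_{i\alpha}$ itself. Only the $j=i$ term depends on $z_{i\alpha}$, and writing $x_{i\beta}=e^{z_{i\beta}}/\sum_\ell e^{z_{i\ell}}$ we have the standard softmax derivative $\partial x_{i\beta}/\partial z_{i\alpha}=x_{i\beta}(\mathbf{1}[\beta=\alpha]-x_{i\alpha})$. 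Hence
\[
\frac{\partial F_{i\alpha}}{\partial z_{i\alpha}}
=\sum_{\beta\in\A_i}(A^{ii}_{\alpha\beta}-A^{ii}_{1\beta})\,x_{i\beta}\big(\mathbf{1}[\beta=\alpha]-x_{i\alpha}\big)
=(A^{ii}_{\alpha\alpha}-A^{ii}_{1\alpha})x_{i\alpha}-x_{i\alpha}\sum_{\beta\in\A_i}(A^{ii}_{\alpha\beta}-A^{ii}_{1\beta})x_{i\beta}.
\]
Summing over $\alpha\in\A_i$, the first group of terms gives $\sum_\alpha (A^{ii}_{\alpha\alpha}-A^{ii}_{1\alpha})x_{i\alpha}$ and the second gives $-\sum_\alpha x_{i\alpha}\sum_\beta(A^{ii}_{\alpha\beta}-A^{ii}_{1\beta})x_{i\beta}$; the latter double sum is $-\big(\sum_\alpha x_{i\alpha}A^{ii}_{\alpha\beta}\big)$-weighted and, crucially, the $A^{ii}_{1\beta}$ piece contributes $-(\sum_\alpha x_{i\alpha})\sum_\beta A^{ii}_{1\beta}x_{i\beta}=-\sum_\beta A^{ii}_{1\beta}x_{i\beta}$, exactly cancelling the corresponding $-A^{ii}_{1\alpha}x_{i\alpha}$ terms after relabelling. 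So the contribution of player $i$ to the trace reduces to
\[
\sum_{\alpha\in\A_i}\frac{\partial F_{i\alpha}}{\partial z_{i\alpha}}
=\sum_{\alpha\in\A_i}A^{ii}_{\alpha\alpha}x_{i\alpha}-\sum_{\alpha\in\A_i}\sum_{\beta\in\A_i}x_{i\alpha}A^{ii}_{\alpha\beta}x_{i\beta}
=\sum_{\alpha}A^{ii}_{\alpha\alpha}x_{i\alpha}-x_i^\top A^{ii}x_i.
\]

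Next I would invoke the antisymmetry of the self-loops, $A^{ii}=-(A^{ii})^\top$, which is part of the definition of a rescaled zero-sum polymatrix game. This forces $A^{ii}_{\alpha\alpha}=0$ for every $\alpha$, so $\sum_\alpha A^{ii}_{\alpha\alpha}x_{i\alpha}=0$, and it also forces $x_i^\top A^{ii}x_i=0$ since any antisymmetric quadratic form vanishes. Therefore each player's diagonal contribution is identically zero, and summing over all $i\in V$ gives $\mathrm{tr}(DF(z))=\sum_{i\in V}\sum_{\alpha\in\A_i}\partial F_{i\alpha}/\partial z_{i\alpha}=0$, as claimed. Note that this argument uses \emph{only} the self-loop antisymmetry, not the cross-edge structure or the rescaling coefficients $\{\eta_i\}$ — the off-diagonal blocks of $DF$ never enter the trace, since $\partial F_{i\alpha}/\partial z_{i\alpha}$ picks out only $j=i$.

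The main obstacle is purely bookkeeping: making the cancellation of the $A^{ii}_{1\beta}$ terms in the second sum fully rigorous requires carefully tracking which index is summed against $x_i$ and exploiting $\sum_\alpha x_{i\alpha}=1$, and one should double-check the degenerate case $n_i=1$ (where $z_i$ is empty and the contribution is vacuously zero). An alternative, perhaps cleaner, route that avoids the row-$1$ subtraction entirely is to observe that $F_{i\alpha}=\dot z_{i\alpha}=\dot x_{i\alpha}/x_{i\alpha}-\dot x_{i1}/x_{i1}$ and compute $\mathrm{tr}(DF)$ in the original $x$-coordinates up to the known Jacobian of the change of variables~\eqref{eq:transformation}; since a diffeomorphism does not affect whether the divergence vanishes only up to a conformal factor, some care is still needed, so I would prefer the direct computation above. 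Either way, the antisymmetry of $A^{ii}$ is the one structural fact doing all the work.
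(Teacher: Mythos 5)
Your proposal is correct and follows essentially the same route as the paper's proof: a direct computation of the diagonal Jacobian entries, noting that only the $j=i$ (self-loop) terms survive differentiation, cancelling the row-$1$ reference terms via $\sum_{\alpha}x_{i\alpha}=1$, and concluding from the antisymmetry of $A^{ii}$ that both the diagonal sum $\sum_{\alpha}A^{ii}_{\alpha\alpha}x_{i\alpha}$ and the quadratic form $x_i^\top A^{ii}x_i$ vanish. The only difference is cosmetic: you package the computation with the softmax-derivative identity and vector notation, whereas the paper carries the explicit exponential fractions throughout and pairs off $(\alpha,\beta)$ with $(\beta,\alpha)$ at the end, which amounts to the same cancellation.
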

The proof of Lemma~\ref{l:divergence} crucially relies on the fact the self-loops are antisymmetric, $(A^{ii})^\top=-A^{ii}$. 

\subsubsection*{Bounded Orbits}
\noindent In order to prove that the orbits from any initial interior point $z(0)$ are bounded, we show that for any initial interior point $x(0)$, the orbit produced by the replicator dynamics stays on the interior of the simplex, that is, there exists a fixed parameter $\epsilon >0$ such that for any agent $i \in V$ and strategy $\alpha \in \A_i$, $\epsilon \leq x_{i\alpha} \leq 1- \epsilon$. Then, $|z_{i\alpha}|$ is clearly bounded since $z_{i\alpha} = \ln (x_{i\alpha}/x_{1\alpha})$. 

\begin{lemma}\label{t:invariant}
Consider an $N$-player rescaled zero-sum polymatrix game such that for positive coefficients $\{\sconst_i\}_{i\in V}$, $\sum_{i\in V} \sconst_i  u_i(x) =0 $ for $x \in \X$. If the game admits an interior Nash Equilibrium $x^\ast$, then $\Phi(t) = \sum_{i \in V} \sum_{\alpha \in \A_i}\sconst_i  x_{i\alpha}^\ast \ln x_{i\alpha}$ is time-invariant, meaning $\Phi(t) =\Phi(0)$ for $t\geq 0$. Hence, orbits from any interior initial condition $x(0)$ remain on the interior of the simplex.
\end{lemma}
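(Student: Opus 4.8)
\textbf{Proof plan for Lemma~\ref{t:invariant}.}

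The plan is to show that $\Phi(t)=\sum_{i\in V}\sum_{\alpha\in\A_i}\sconst_i x^\ast_{i\alpha}\ln x_{i\alpha}$ is constant along the replicator flow by differentiating it directly and using the rescaled zero-sum structure together with the Nash property of $x^\ast$. First I would compute
\[
\dot\Phi=\sum_{i\in V}\sum_{\alpha\in\A_i}\sconst_i x^\ast_{i\alpha}\frac{\dot x_{i\alpha}}{x_{i\alpha}}
=\sum_{i\in V}\sconst_i\sum_{\alpha\in\A_i}x^\ast_{i\alpha}\bigl(u_{i\alpha}(x)-u_i(x)\bigr),
\]
using the replicator equation \eqref{eq:replicator}. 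Since $\sum_{\alpha\in\A_i}x^\ast_{i\alpha}=1$, the $u_i(x)$ term contributes $-\sum_{i\in V}\sconst_i u_i(x)$, which is zero by the rescaled zero-sum hypothesis. The remaining term is $\sum_{\alpha\in\A_i}x^\ast_{i\alpha}u_{i\alpha}(x)=u_i(x^\ast_i,x_{-i})$, i.e.\ the payoff player $i$ would get by deviating to $x^\ast_i$ against the current profile $x_{-i}$. So $\dot\Phi=\sum_{i\in V}\sconst_i\,u_i(x^\ast_i,x_{-i})$.

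The key step is then to argue this quantity vanishes. Here I would invoke the standard fact that in a (rescaled) zero-sum polymatrix game with an interior Nash equilibrium $x^\ast$, the equilibrium payoff of every player is the value that holds identically: because $x^\ast$ is interior, every pure strategy $\alpha\in\A_i$ is a best response, so $u_{i\alpha}(x^\ast)=u_i(x^\ast)$ for all $\alpha$; more importantly, the bilinearity of $u_i$ in the polymatrix structure gives $u_i(x^\ast_i,x_{-i})=\sum_{j:(i,j)\in E}(x^\ast_i)^\top A^{ij}x_j$, which is linear in $x_{-i}$, and one shows $\sum_{i\in V}\sconst_i\sum_{j:(i,j)\in E}(x^\ast_i)^\top A^{ij}x_j=0$ for \emph{all} $x$. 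The cleanest way to see this: apply the rescaled zero-sum identity $\sum_i\sconst_i u_i(\cdot)=0$ to the profile where player $i$ uses $x^\ast_i$ and everyone else uses $x_{-i}$; doing this simultaneously for all $i$ and summing, the cross terms on each edge $(i,j)$ pair up as $\sconst_i(x^\ast_i)^\top A^{ij}x_j+\sconst_j x_i^\top A^{ji}(x^\ast_j)$, and using the antisymmetry of self-loops plus the fact that the rescaled zero-sum condition forces $\sconst_i A^{ij}=-\sconst_j(A^{ji})^\top$ on every edge, each such pair telescopes against the value at $x^\ast$. Concretely, $\sum_{i}\sconst_i u_i(x^\ast_i,x_{-i})$ can be rewritten, edge by edge, as $\sum_i\sconst_i u_i(x^\ast)$ which is zero by rescaled zero-sum applied at $x^\ast$ itself. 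Hence $\dot\Phi\equiv 0$ and $\Phi(t)=\Phi(0)$.

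Finally, from $\Phi(t)=\Phi(0)$ I would deduce the orbit stays uniformly interior. Since each $\sconst_i>0$ and each $x^\ast_{i\alpha}>0$ (interior Nash), and $\ln x_{i\alpha}\le 0$, boundedness of $\Phi$ from below forces each $\ln x_{i\alpha}$ to be bounded below: if some $x_{i\alpha}\to 0$, the corresponding term $\sconst_i x^\ast_{i\alpha}\ln x_{i\alpha}\to-\infty$ while all other terms are $\le 0$, contradicting $\Phi(t)=\Phi(0)>-\infty$. Thus there is $\epsilon>0$ with $x_{i\alpha}(t)\ge\epsilon$ for all $i,\alpha,t$, and since probabilities sum to one, $x_{i\alpha}(t)\le 1-\epsilon$ as well (taking $\epsilon$ small enough, using $n_i\ge 2$); the orbit remains on a compact subset of the interior.

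I expect the main obstacle to be the middle step: carefully verifying that $\sum_{i\in V}\sconst_i u_i(x^\ast_i,x_{-i})=0$ holds for all $x$, not merely at $x=x^\ast$. This requires using both the antisymmetry of the self-loop matrices and the precise form of the rescaled zero-sum condition to pair up edge contributions correctly; the interior Nash hypothesis enters to guarantee $x^\ast$ exists and that indifference holds, but the algebraic identity itself is where the bookkeeping is delicate. The differentiation and the final interiority argument are routine once that identity is in hand.
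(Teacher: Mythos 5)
Your overall architecture is sound, and its first and last steps match the paper's proof exactly: differentiate $\Phi$, use $\sum_{\alpha}x^*_{i\alpha}=1$ together with $\sum_i\sconst_iu_i(x)=0$ to kill the $-u_i(x)$ term, and close with the compactness argument from boundedness of $\Phi$ (that part is correct as written). The gap is precisely in the step you yourself flag as delicate. The assertion that the rescaled zero-sum condition ``forces $\sconst_iA^{ij}=-\sconst_j(A^{ji})^\top$ on every edge'' is false: it would say that every rescaled zero-sum polymatrix game is \emph{pairwise} (rescaled) zero-sum, whereas the global condition only constrains the sum over all edges and permits cancellations across edges. For a counterexample, take three players on a triangle with two actions each, $A^{12}=A^{23}=A^{31}=\onev e_1^\top$ and $A^{21}=A^{32}=A^{13}=-\onev e_1^\top$; then $u_1+u_2+u_3\equiv0$ on $\X$ but $A^{12}+(A^{21})^\top=\onev e_1^\top-e_1\onev^\top\neq 0$. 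This is exactly why the paper routes through the Cai--Daskalakis result: a zero-sum polymatrix game is only \emph{payoff-equivalent} to a pairwise constant-sum game via a nontrivial transformation $A^{ij}\mapsto B^{ij}$, and their Lemmas 3.1, 3.2, 3.4 are what replace your edge-by-edge telescoping. As written, your proposal does not establish the central identity $\sum_i\sconst_i\sum_{j:(i,j)\in E}(x^*_i)^\top A^{ij}x_j=0$.

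The identity is nevertheless true, and your other suggestion---apply the zero-sum identity at perturbed profiles and sum---can be made rigorous by a polarization argument that is arguably cleaner than the paper's. Set $B(u,v)=\sum_i\sconst_i\sum_{j:(i,j)\in E}u_i^\top A^{ij}v_j$, so that $B(y,y)=\sum_i\sconst_iu_i(y)=0$ for every $y\in\X$ and $\dot\Phi=B(x^*,x)$. (A side caution: with a self-loop, $\sum_\alpha x^*_{i\alpha}u_{i\alpha}(x)$ contains the term $(x^*_i)^\top A^{ii}x_i$, which is what $B$ encodes but is not literally $u_i(x^*_i,x_{-i})$; your bilinear formula is the right object, the label is not.) Evaluating $B$ on the convex combinations $\lambda x^*+(1-\lambda)x\in\X$ and using $B(x^*,x^*)=B(x,x)=0$ gives $B(x^*,x)=-B(x,x^*)$; and $B(x,x^*)=\sum_i\sconst_i\sum_\alpha x_{i\alpha}u_{i\alpha}(x^*)=\sum_i\sconst_iu_i(x^*)=0$ by the interior-Nash indifference $u_{i\alpha}(x^*)=u_i(x^*)$ followed by rescaled zero-sum at $x^*$. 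Note this is where the interior-equilibrium hypothesis genuinely does the work, rather than merely guaranteeing existence. With that substitution your plan closes; without it, the crucial identity rests on a false structural claim.
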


From the preceding discussion, Lemma~\ref{t:invariant} guarantees orbits from any interior initial condition $z(0)$ remain bounded.
The proof of Lemma~\ref{t:invariant} is the primary novelty in the proof of Theorem~\ref{t:main} and the techniques may be of independent interest. 
To show $\Phi(t)$ is time-invariant, we prove that the time derivative of the function is equal to zero. From the given form of the replicator dynamics and the rescaled zero-sum property of the polymatrix game, we obtain $\dot{\Phi}(t) = \sum_{i\in V}\sum_{j:(i, j)\in E}\eta_i (x_i^\ast)^\top A^{ij}(x_j-x_j^\ast)$ nearly immediately, where the sum over edges describes how the rescaled utility of agent $i \in V$ changes at her equilibrium strategy when the rest of the players are allowed to deviate. To continue, we draw a key connection to a fascinating result regarding the payoff structure of zero-sum polymatrix games.  

\citet{cai2011minmax} proved there exists a payoff preserving transformation from any zero-sum polymatrix game to a pairwise constant-sum polymatrix game. We translate this result to rescaled zero-sum polymatrix games. 
The primary implication is that 
 the change in player $i$'s  rescaled utility at equilibrium when all other players connected to $i$ deviate is equal to the change in player $j$'s rescaled  utility from deviating while all other players connected to $j$ remain in equilibrium. This is a direct consequence of the fact that the game is equivalent to a pairwise constant-sum game.  Explicitly, we prove that $\dot{\Phi}(t) = \sum_{j \in V}\sum_{i:(j, i)\in E}\eta_j(x_j^{\ast}-x_j)^\top A^{ji}x_i^\ast$ and conclude $\dot{\Phi}(t)=0$ since $x^\ast$ is an interior Nash equilibrium, which means $u_{j\alpha}(x^\ast)=u_j(x^\ast)$ for $\alpha\in \A_j$ and any linear combination.

 \begin{proof}[Proof of Theorem~\ref{t:main}]
 The proof follows directly from Lemma~\ref{l:divergence}, Lemma~\ref{t:invariant}, and Theorem~\ref{t:poincare}. Indeed, the dynamics in~\eqref{eq:zdynamics} are Poincar\'{e} recurrent since from Lemma~\ref{l:divergence} they are volume preserving and from Lemma~\ref{t:invariant} the orbits are bounded. This property in the cumulative payoff space carries over to the dynamics in the strategy space from~\eqref{eq:replicator} since the transformation is a  diffeomorphism.
 \end{proof}

\section{Time-Average Behavior, Equilibrium Computation, \& Bounded Regret}
\label{sec:timeaverage}
In this section, we transition away from analyzing the dynamic behavior of replicator dynamics and focus on characterizing the long-term behavior along with its connections to notions of equilibrium and regret. 
We prove that the enduring system behavior is guaranteed to satisfy a number of desirable game-theoretic metrics of consistency and optimality. Moreover, we design a polynomial time algorithm able to predict this behavior. 
The proofs of results from this section are in Appendix~\ref{appsecs:regret}.

While the replicator dynamics exhibit complex dynamics and never equilibriate in rescaled zero-sum polymatrix games with interior Nash equilibrium, the time-average behavior of the dynamics is closely tied to the equilibrium. The following result shows that given the existence of a unique interior Nash equilibrium, the time-average of the replicator dynamics converges to the equilibrium and the time-average utility converges to the utility at the equilibrium.
\begin{theorem}\label{t:avg1}
Consider an $N$-player rescaled zero-sum polymatrix game that admits a unique interior Nash equilibrium $x^\ast$.
The trajectory $x(t)$ produced by replicator dynamics given in~\eqref{eq:replicator} is such that \textbf{i)} $\lim_{t \rightarrow \infty}\frac{1}{t}\int_{0}^t x(\tau)d\tau = x^\ast$ and \textbf{ii)}
$\lim_{t \rightarrow \infty}\frac{1}{t}\int_{0}^t u_i(x(\tau))d\tau = u_i(x^\ast)$.
\end{theorem}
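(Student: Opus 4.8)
The plan is to exploit the two conservation laws established in Section~\ref{sec:recurrence} --- volume preservation and, crucially, the time-invariance of $\Phi(t)=\sum_{i\in V}\sum_{\alpha\in\A_i}\sconst_i x^\ast_{i\alpha}\ln x_{i\alpha}$ from Lemma~\ref{t:invariant} --- together with boundedness of the orbit away from the simplex boundary. First I would work in the cumulative payoff coordinates $z$ of~\eqref{eq:transformation}, where Lemma~\ref{t:invariant} guarantees $\epsilon\le x_{i\alpha}(t)\le 1-\epsilon$ for all $t$, so the trajectory lives in a compact set and all the integrands below are bounded. The key quantity is the running time-average $\bar{x}(t)=\frac1t\int_0^t x(\tau)\,d\tau$. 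Since $\X$ is compact and convex, $\bar x(t)\in\X$ for all $t$, and it suffices to show every subsequential limit point of $\bar x(t)$ equals $x^\ast$; uniqueness of the interior Nash then gives convergence of the whole trajectory average.

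The engine is the standard ``no-regret from bounded divergence'' argument adapted to the rescaled setting. Consider the weighted relative-entropy-type function $H(t)=\sum_{i\in V}\sconst_i\sum_{\alpha}x^\ast_{i\alpha}\ln\frac{x^\ast_{i\alpha}}{x_{i\alpha}(t)}$, which is exactly $\sum_i\sconst_i\sum_\alpha x^\ast_{i\alpha}\ln x^\ast_{i\alpha}-\Phi(t)$, hence constant in $t$ by Lemma~\ref{t:invariant}. Differentiating (or, directly, using the computation behind Lemma~\ref{t:invariant}) gives
\begin{equation*}
0=\dot H(t)=-\sum_{i\in V}\sum_{j:(i,j)\in E}\sconst_i (x_i^\ast)^\top A^{ij}\big(x_j(t)-x_j^\ast\big)
= -\sum_{i\in V}\sconst_i\big(u_i(x_i^\ast,x_{-i}(t))-u_i(x^\ast)\big),
\end{equation*}
so the $\sconst$-weighted sum of the deviation gains of every player against the current profile is identically zero. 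More importantly, for each individual player $i$ one has the pointwise lower bound $u_i(x_i^\ast,x_{-i}(t))-u_i(x_i(t),x_{-i}(t))\ge$ a quantity whose time integral is controlled: integrating the per-player relative entropy $H_i(t)=\sum_\alpha x^\ast_{i\alpha}\ln(x^\ast_{i\alpha}/x_{i\alpha}(t))$ and using $\dot H_i(t)=-\big(u_i(x_i^\ast,x_{-i}(t))-u_i(x_i(t),x_{-i}(t))\big)$ (this is the classical replicator identity, valid per coordinate block), boundedness of $H_i$ on the interior-staying orbit yields $\frac1t\int_0^t\big(u_i(x_i^\ast,x_{-i}(\tau))-u_i(x_i(\tau),x_{-i}(\tau))\big)d\tau=\frac{H_i(0)-H_i(t)}{t}\to 0$. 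Thus every player has asymptotically vanishing average regret against the fixed benchmark $x_i^\ast$.

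From here the game-theoretic structure closes the argument. Summing the vanishing-average-regret statements with weights $\sconst_i$ and using the rescaled zero-sum identity $\sum_i\sconst_i u_i(x)=0$ (so the ``realized'' terms cancel), we get $\frac1t\int_0^t\sum_i\sconst_i u_i(x_i^\ast,x_{-i}(\tau))\,d\tau\to 0$; combined with the pointwise identity $\dot H\equiv 0$ above this is consistent and, more usefully, by linearity each $\frac1t\int_0^t u_i(x_i^\ast,x_{-i}(\tau))d\tau \to u_i(x^\ast)$. Now take any subsequential limit $\hat x$ of $\bar x(t)$; by linearity of the payoff in each argument, $u_i(x_i^\ast,\hat x_{-i}) = u_i(x^\ast)$ for every $i$, and the vanishing-regret bound gives $u_i(\hat x_i,\hat x_{-i})\ge \cdots$ --- here I would invoke that $\hat x$ is a Nash equilibrium of the game (a coarse-correlated/average-play limit of a no-regret trajectory in a game with the rescaled zero-sum convexity structure is Nash), and uniqueness forces $\hat x=x^\ast$, proving (i). For (ii), write $u_i(x(\tau))$ and use part (i) together with boundedness: $\frac1t\int_0^t u_i(x(\tau))d\tau$ has limit points of the form $u_i$ evaluated at limit points of the joint empirical measure; the weighted-sum cancellation plus (i) pin each down to $u_i(x^\ast)$.

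The main obstacle I anticipate is the last step of (i): passing from ``vanishing average external regret of each player against the specific benchmark $x_i^\ast$'' to ``the time-average profile is a Nash equilibrium.'' Vanishing regret against one fixed strategy is weaker than vanishing regret against all strategies, so one cannot directly conclude the limit is Nash without using more structure --- specifically the rescaled zero-sum property and the fact that the benchmark is itself the (interior) Nash. The clean way around it is to run the same $H_i$-based argument with $x_i^\ast$ replaced by an \emph{arbitrary} $p_i\in\X_i$: then $\dot H_i^{p_i}(t)=-(u_i(p_i,x_{-i}(t))-u_i(x_i(t),x_{-i}(t)))$ and $H_i^{p_i}$ is bounded (since $x_{i\alpha}\ge\epsilon$, independent of $p_i$), giving $\limsup_t \frac1t\int_0^t(u_i(p_i,x_{-i}(\tau))-u_i(x_i(\tau),x_{-i}(\tau)))d\tau\le 0$ for every $p_i$ --- genuine no-regret --- which does force every limit point of $\bar x$ to be a Nash equilibrium, and uniqueness finishes. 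I would structure the proof around this stronger bounded-divergence estimate from the outset rather than the single-benchmark version.
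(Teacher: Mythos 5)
Your proposal is correct in substance but follows a genuinely different route from the paper. The paper's proof is more elementary: it integrates the identity $\tfrac{d}{dt}\ln x_{i\alpha}=u_{i\alpha}(x)-u_i(x)$ directly, uses Lemma~\ref{t:invariant} to keep $\ln x_{i\alpha}(T)$ bounded so that $\tfrac1T\int_0^T(u_{i\alpha}(x(\tau))-u_i(x(\tau)))d\tau\to 0$ for \emph{every} pure strategy $\alpha$, and then exploits linearity of $u_{i\alpha}$ to evaluate this at a Bolzano--Weierstrass limit point $\bar x$ of the running average: all pure strategies of every player are indifferent at $\bar x$, so $\bar x$ is Nash, and uniqueness finishes both (i) and (ii) in one stroke. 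Notably, the rescaled zero-sum structure enters the paper's argument only through Lemma~\ref{t:invariant} (to keep the orbit in the interior), not through any equilibrium-characterization step. Your route is the regret-theoretic one: the KL-divergence-to-an-arbitrary-benchmark estimate (essentially Lemma~\ref{lem:support}/Proposition~\ref{thm:regretbdd}) gives genuine no-regret, and you then pass from no-regret to Nashness of the average via the zero-sum polymatrix structure. What your approach buys is generality and modularity (the regret bound needs only an interior initial condition, not the full interior-orbit invariant); what the paper's approach buys is that the Nash identification is immediate from per-coordinate indifference, with no appeal to coarse-correlated-equilibrium machinery.

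Two steps in your write-up must be made explicit rather than asserted. First, ``a no-regret trajectory's time-average is Nash'' is \emph{false} for general games; it holds here only because of the rescaled zero-sum identity, and the argument should be written out: summing the bounds $\tfrac1t\int_0^t(u_i(p_i,x_{-i}(\tau))-u_i(x(\tau)))d\tau\le \Omega_i/t$ with weights $\sconst_i$, the realized terms cancel pointwise since $\sum_i\sconst_i u_i(x(\tau))=0$, and linearity gives $\sum_i\sconst_i u_i(p_i,\hat x_{-i})\le 0=\sum_i\sconst_i u_i(\hat x)$ for every limit point $\hat x$ and every $(p_1,\dots,p_N)$; since $\max_{p_i}u_i(p_i,\hat x_{-i})\ge u_i(\hat x)$ termwise and $\sconst_i>0$, all inequalities are equalities and $\hat x$ is Nash. (Your middle paragraph's single-benchmark computation, and the claim that each $\tfrac1t\int u_i(x_i^\ast,x_{-i}(\tau))d\tau\to u_i(x^\ast)$ individually, do not follow from the weighted sum alone and should be dropped in favor of this.) Second, part (ii) needs its own short argument because $u_i$ is bilinear and does not commute with time-averaging: from no-regret, $\liminf_t\tfrac1t\int_0^t u_i(x(\tau))d\tau\ge \max_{p_i}u_i(p_i,x^\ast_{-i})=u_i(x^\ast)$ for each $i$, and since $\sum_i\sconst_i\tfrac1t\int_0^t u_i(x(\tau))d\tau\equiv 0=\sum_i\sconst_i u_i(x^\ast)$ with all $\sconst_i>0$, any common subsequential limit vector $(c_i)_i$ satisfies $c_i\ge u_i(x^\ast)$ and $\sum_i\sconst_i(c_i-u_i(x^\ast))=0$, forcing $c_i=u_i(x^\ast)$. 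With these two repairs your proof is complete.
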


The preceding result provides a broad generalization of past results that show the time-average of replicator dynamics converges to the unique interior Nash equilibrium in zero-sum bimatrix games~\cite{hofbauer2009time}. We remark that our proof crucially relies on Lemma~\ref{t:invariant} since the trajectory of the dynamics must remain on the interior of the simplex to guarantee there exists a bounded sequence which admits a subsequence that converges to a limit corresponding to the time-average. 

We now provide a polynomial time algorithm that  efficiently predicts the time-average quantities even for an arbitrary networks of players. Linear programming formulations for computing and characterizing the set of Nash equilibria for zero-sum polymatrix games are known~\cite{CCDP16}. The following result extends this formulation to rescaled zero-sum polymatrix games. 
\begin{theorem}
Consider an $N$-player rescaled zero-sum polymatrix game such that for positive coefficients $\{\sconst_i\}_{i\in V}$, $\sum_{i=1}^N \sconst_i  u_i(x) =0 $ for $x \in \X$. The optimal solution of the following linear program is a Nash equilibrium of the game:
\begin{equation*}
 \min_{x \in \X}\  \{\sum_{i=1}^{n} \sconst_i v_i |\ \displaystyle v_i \geq u_{i\alpha}(x),\ \forall \ i \in V, \  \forall\  \alpha \in \A_i\}
\end{equation*}

\label{thm:lp}
\end{theorem}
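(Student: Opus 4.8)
The plan is to mimic the classical LP characterization of Nash equilibria in zero-sum (poly)matrix games, but carried out with the rescaled utilities $\sconst_i u_i$ in place of $u_i$. First I would verify feasibility and well-posedness: for any fixed $x \in \X$, the constraints $v_i \ge u_{i\alpha}(x)$ for all $\alpha \in \A_i$ are satisfied by taking $v_i = \max_{\alpha} u_{i\alpha}(x)$, so the feasible region is nonempty; since $\X$ is compact and each $u_{i\alpha}$ is continuous (indeed multilinear), and since at an optimum each $v_i$ will be pinned exactly to $\max_\alpha u_{i\alpha}(x)$, the objective $\sum_i \sconst_i v_i = \sum_i \sconst_i \max_\alpha u_{i\alpha}(x)$ attains its minimum. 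Call an optimal solution $(x^\ast, v^\ast)$, and note $v_i^\ast = \max_\alpha u_{i\alpha}(x^\ast) \ge u_i(x^\ast)$ for every $i$ (the utility is a convex combination of the $u_{i\alpha}$).

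Next I would establish the key inequality $\sum_i \sconst_i v_i^\ast \ge 0$ and in fact $=0$. On one hand, $\sum_i \sconst_i v_i^\ast \ge \sum_i \sconst_i u_i(x^\ast) = 0$ by the rescaled zero-sum hypothesis. On the other hand, I claim the optimal value is $\le 0$: it suffices to exhibit some $x \in \X$ with $\sum_i \sconst_i \max_\alpha u_{i\alpha}(x) \le 0$. This is exactly where a genuine game-theoretic fact is needed, and it is the main obstacle. The cleanest route is to invoke the existence of a Nash equilibrium $\bar x$ of the rescaled zero-sum polymatrix game (which exists, e.g.\ via Nash's theorem, or via the reduction of \citet{cai2011minmax} to a pairwise constant-sum game as already used in the proof of Lemma \ref{t:invariant}): at $\bar x$, optimality of each player's strategy against $\bar x_{-i}$ means $\max_\alpha u_{i\alpha}(\bar x) = u_i(\bar x)$, hence $\sum_i \sconst_i \max_\alpha u_{i\alpha}(\bar x) = \sum_i \sconst_i u_i(\bar x) = 0$. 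Therefore the optimal value of the LP is exactly $0$, and consequently $\sum_i \sconst_i v_i^\ast = 0$ with $v_i^\ast \ge u_i(x^\ast)$; combined with $\sum_i \sconst_i u_i(x^\ast)=0$ and $\sconst_i > 0$, we force $v_i^\ast = u_i(x^\ast)$ for every $i$.

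Finally I would translate $v_i^\ast = u_i(x^\ast)$ into the Nash property. For each $i$ we have $u_{i\alpha}(x^\ast) \le v_i^\ast = u_i(x^\ast)$ for all $\alpha \in \A_i$. Since $u_i(x_i, x^\ast_{-i}) = \sum_{\alpha \in \A_i} x_{i\alpha}\, u_{i\alpha}(x^\ast)$ is linear in $x_i$, taking the convex combination gives $u_i(x_i, x^\ast_{-i}) \le u_i(x^\ast)$ for every $x_i \in \X_i$, which is precisely the Nash equilibrium condition for player $i$. Ranging over all $i \in V$ shows $x^\ast$ is a Nash equilibrium, completing the proof. The only subtlety to be careful about is the bookkeeping between $v_i$ being a free decision variable versus being forced to equal $\max_\alpha u_{i\alpha}(x)$ at optimality — this follows because decreasing any slack $v_i$ strictly decreases the objective while keeping feasibility, using $\sconst_i > 0$ — and the appeal to existence of a Nash equilibrium, which I would cite rather than reprove.
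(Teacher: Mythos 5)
Your proof is correct and follows essentially the same route as the paper's: both establish that the optimal value is at most $0$ by plugging in an existing Nash equilibrium (where $\max_{\alpha\in\A_i} u_{i\alpha}=u_i$ for every $i$), and both use the rescaled zero-sum identity $\sum_{i}\sconst_i u_i(x^\ast)=0$ together with $\sconst_i>0$ to force $v_i^\ast=\max_{\alpha}u_{i\alpha}(x^\ast)=u_i(x^\ast)$ for every $i$, which yields the Nash condition. The paper phrases this last step as a proof by contradiction, but that is only a cosmetic difference from your direct argument.
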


It cannot be universally expected that an interior equilibrium exists or that players are fully rational and obey a common learning rule. Similarly, players may not always be able to determine an equilibrium strategy \emph{a priori} depending on the information available.
This motivates an evaluation of the trajectory of a player who is oblivious to opponent behavior.
We consider a notion of \emph{regret} for a player. That is, the time-averaged utility difference between the mixed strategies selected along the learning path $t\geq 0$ and the fixed strategy that maximizes the utility in hindsight. Even in polymatrix games (with self-loops), the regret of replicator dynamics stays bounded.
\begin{proposition} Any player following the replicator dynamics~\eqref{eq:replicator} in an $N$-player polymatrix game (with self-loops) achieves an $\mathcal{O}(1/t)$  regret bound independent of the rest of the players. Formally, for every trajectory $x_{-i}(t)$, the regret of player  $i \in V$ is bounded as follows for a player-dependent positive constant $\Omega_i$,
\[\mathrm{Reg}_i(t):=\max_{y\in \X_i}\frac{1}{t}\int_0^t\left[u_i(y, x_{-i}(s))-u_i(x(s))\right]ds\leq \frac{\Omega_i}{t}.\]
\label{thm:regretbdd}
\end{proposition}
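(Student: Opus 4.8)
The plan is to exploit the fact that replicator dynamics is exactly the continuous-time analogue of multiplicative weights, and to use a KL-divergence potential argument of the kind familiar from online learning regret analysis. Fix a player $i\in V$ and an arbitrary trajectory $x_{-i}(t)$ of the remaining players. For a fixed benchmark strategy $y\in\X_i$, define the (relative entropy) potential $\Psi_y(t)=\sum_{\alpha\in\A_i} y_\alpha \ln\frac{y_\alpha}{x_{i\alpha}(t)}$, which is well defined and finite as long as $x_i(t)$ stays interior (and $y$ can first be taken interior, then one passes to the limit). Differentiating along the replicator flow~\eqref{eq:replicator} and using $\dot{x}_{i\alpha}/x_{i\alpha}=u_{i\alpha}(x)-u_i(x)$, the cross terms telescope and one obtains the clean identity
\begin{equation*}
\frac{d}{dt}\Psi_y(t) \;=\; -\sum_{\alpha\in\A_i} y_\alpha\big(u_{i\alpha}(x(t))-u_i(x(t))\big) \;=\; -\big(u_i(y,x_{-i}(t))-u_i(x(t))\big),
\end{equation*}
where the last equality is just linearity of $u_i$ in its own argument, $\sum_\alpha y_\alpha u_{i\alpha}(x)=u_i(y,x_{-i})$, together with $\sum_\alpha y_\alpha u_i(x)=u_i(x)$. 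Note this derivation never uses any zero-sum or rescaled-zero-sum structure, nor does it care about the self-loop $A^{ii}$: it is purely a statement about one node's replicator update given exogenous opponent play, which is exactly why the bound holds in general polymatrix games with self-loops.

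Integrating this identity from $0$ to $t$ gives $\int_0^t\big[u_i(y,x_{-i}(s))-u_i(x(s))\big]ds = \Psi_y(0)-\Psi_y(t) \le \Psi_y(0)$, since $\Psi_y(t)\ge 0$ by nonnegativity of relative entropy. Dividing by $t$ and taking the supremum over $y\in\X_i$, it remains to bound $\sup_{y\in\X_i}\Psi_y(0)$ by a finite constant $\Omega_i$. Since the initial condition $x_i(0)$ is on the interior of the simplex, $\min_\alpha x_{i\alpha}(0)=:c_i>0$, and for any $y\in\X_i$ we have $\Psi_y(0)=\sum_\alpha y_\alpha\ln y_\alpha - \sum_\alpha y_\alpha\ln x_{i\alpha}(0) \le 0 + \ln(1/c_i)$, so $\Omega_i:=\ln(1/\min_\alpha x_{i\alpha}(0))$ works. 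This yields $\mathrm{Reg}_i(t)\le \Omega_i/t$, which is the claimed $\mathcal{O}(1/t)$ bound.

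The one technical point that needs care — the main (minor) obstacle — is justifying that $\Psi_y(t)$ is finite and differentiable for all $t\ge 0$, i.e.\ that $x_i(t)$ never hits the boundary. For a single player with arbitrary bounded opponent trajectory this follows directly from the closed form of the replicator solution: $x_{i\alpha}(t)=x_{i\alpha}(0)\exp(\int_0^t u_{i\alpha}(x(s))ds)/Z_i(t)$, and since the utilities $u_{i\alpha}$ are bounded on the compact strategy space, the ratio $x_{i\alpha}(t)/x_{i\beta}(t)$ stays bounded away from $0$ and $\infty$ on any finite interval, so $x_i(t)$ remains interior. (Alternatively, one first proves the inequality for interior benchmarks $y$ with $x_i(0)$ interior, where everything is smooth, and then extends to boundary $y$ by continuity/monotone convergence since $y\mapsto\Psi_y(0)$ and $y\mapsto\int_0^t u_i(y,x_{-i}(s))ds$ are both continuous on $\X_i$.) With that in hand the argument above is complete.
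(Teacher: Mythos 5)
Your proof is correct, and it reaches the same bound as the paper by a noticeably more elementary route. The paper's proof (Appendix~\ref{appsecs:regret_proof}) casts replicator as continuous-time Follow-the-Regularized-Leader with the negative-entropy regularizer and runs the regret analysis in the dual: it bounds $\int_0^t\langle v_i(x(s)),x_i\rangle ds$ via Fenchel's inequality, differentiates the convex conjugate $h_i^\ast(y_i(t))$ using the envelope theorem, and arrives at $\mathrm{Reg}_i(t)\leq (h_{\max,i}-h_{\min,i})/t$. Your argument is the primal counterpart: the potential $\Psi_y(t)=\mathrm{KL}(y\,\|\,x_i(t))$ is exactly the Bregman divergence of that regularizer, and the identity $\tfrac{d}{dt}\Psi_y=-\big(u_i(y,x_{-i})-u_i(x)\big)$ plus nonnegativity of $\Psi_y$ gives the bound in three lines, with no conjugacy or envelope-theorem machinery. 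A small bonus of your version is that it transparently handles arbitrary interior initializations, yielding $\Omega_i=\ln(1/\min_\alpha x_{i\alpha}(0))$, whereas the paper's derivation uses $h_i^\ast(y_i(0))=-h_{\min,i}$, i.e.\ implicitly starts from $y_i(0)=0$ (uniform $x_i(0)$); the two constants coincide in that case ($\ln n_i$). Your treatment of interiority via the closed-form exponential-weights solution is also fine, and in fact no limiting argument over $y$ is needed since $\Psi_y(t)$ is finite and differentiable for boundary benchmarks $y$ as well, provided $x_i(t)$ stays interior on $[0,t]$. One point worth being aware of, which your proof shares with the paper's rather than fixes: both arguments read the comparator utility $u_i(y,x_{-i}(s))$ as the linearization $\sum_\alpha y_\alpha u_{i\alpha}(x(s))=\langle v_i(x(s)),y\rangle$, so in the presence of a self-loop the benchmark plays $y$ against player $i$'s \emph{realized} trajectory $x_i(s)$ in the game $A^{ii}$ rather than against itself; this is the convention under which the stated proposition is proved in the paper, and your claim that the derivation ``does not care about the self-loop'' is accurate precisely under that reading.
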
 
The proof of this proposition mirrors closely more general arguments in~\citet{mertikopoulos2018cycles}. A standalone derivation is provided in the appendix sake of completeness.

\section{Simulations}
\label{sec:simulations}

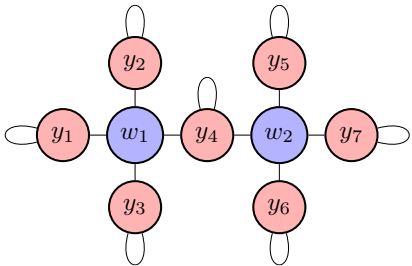
\begin{figure}[t]
    \centering
    \resizebox{0.35\linewidth}{!}{
    \begin{tikzpicture}
    \tikzset{every loop/.style={}}
    \begin{scope}[every node/.style={circle,thick,draw}]
        \node[fill =red!30] (x1) at (-3,2) {$\species_1$};
        \node[fill =red!30] (x2) at (-2,3) {$\species_2$};
        \node[fill =red!30] (x3) at (-2,1) {$\species_3$};
        \node[fill =red!30] (x4) at (-1,2) {$y_4$};
        \node[fill =red!30] (x5) at (0,3) {$y_5$};
        \node[fill =red!30] (x6) at (0,1) {$y_6$};
        \node[fill =red!30] (x7) at (1,2) {$y_7$};
        
        \node[fill =blue!30] (w1) at (-2,2) {$\weights_1$};
        \node[fill =blue!30] (w2) at (0,2) {$\weights_2$};
    \end{scope}
    \begin{scope}[>={Stealth[black]},
                 every node/.style={fill=white,circle},
                 every edge/.style={draw=black}]
          \path [-] (x1) edge (w1);
          \path [-] (x2) edge (w1);
          \path [-] (x3) edge (w1);
          \path [-] (x4) edge (w1);
          \path [-] (x5) edge (w2);
          \path [-] (x6) edge (w2);
          \path [-] (x7) edge (w2);
          \path [-] (x4) edge (w2);
          \draw (x1) to [out=165,in=195,looseness=8] (x1);
          \path (x2) edge [loop above] node {} (x2);
          \path (x3) edge [loop below] node {} (x3);
          \path (x4) edge [loop above] node {} (x4);
          \path (x5) edge [loop above] node {} (x5);
          \path (x6) edge [loop below] node {} (x6);
          \draw (x7) to [out=15,in=-15,looseness=8] (x7);
    \end{scope}
    \end{tikzpicture}}
    \caption{Two clusters of nodes that join together to form a `butterfly' structure. Self-loops represent RPS self-play games, while edges between nodes represent $(I,-I)$. The \textit{red} nodes denote a population of species, while the \textit{blue} nodes stand for an environment.}
    \label{fig:butterfly1}
\end{figure}
\begin{figure}[t]
 \centering
\includegraphics[width=0.99\linewidth]{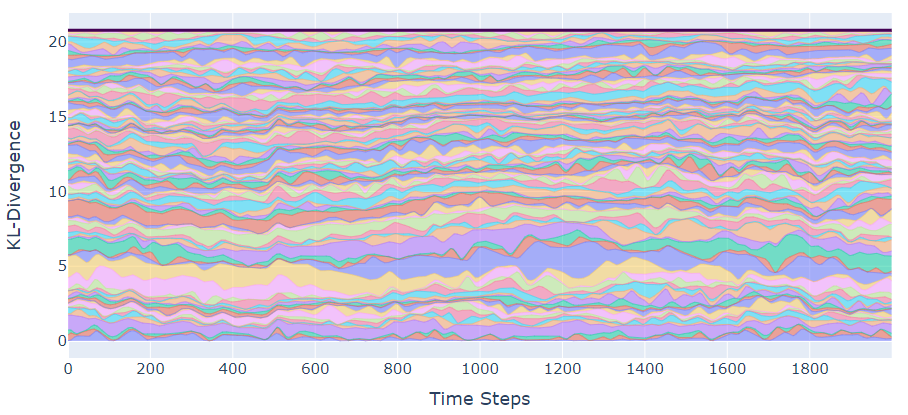}
 \caption{Weighted KL divergence for 25 cluster (100 player) time-evolving zero-sum game (for details see Appendix~\ref{appsecs:multiplayer}).}
 \label{fig:kldivtorus}
\end{figure}

The goal of this section is to experimentally verify some of the key results, and to highlight other empirically observed properties outside the established theoretical results.\footnote{Code is available at \href{https://github.com/ryanndelion/egt-squared}{github.com/ryanndelion/egt-squared}}

Theorem~\ref{t:main} states that any population/environment dynamics which can be captured via a \textit{rescaled zero-sum game} (no matter the complexity of such a description) exhibit a type of \emph{cyclic behavior} known as Poincar\'{e} recurrence. Indeed, the  trajectories shown in Figure~\ref{fig:main0} from the 
time-evolving generalized RPS game 
of Section~\ref{sec:evolution} are cyclic in nature.  
Specifically, Figure~\ref{fig:sub3} shows the coevolution of the system for a fixed initial condition. We plot the joint trajectory of the first two strategies for both the population $\species$ and environment $\weights$, which creates a 4D space where the color legend acts as the final dimension. 
In the supplementary code, we provide an animation of these dynamics for a range of initial conditions.
The simulation demonstrates that as the initial conditions move closer to the interior equilibrium, the trajectories themselves remain bounded within a smaller region around the equilibrium, which confirms the bounded regret property of the dynamics from Proposition~\ref{thm:regretbdd}.

 Lemma~\ref{t:invariant} shows that for any \textit{rescaled zero-sum game} there is a constant of motion, namely $\Phi(t)$. It is easy to see from the definition of $\Phi(t)$ that a weighted sum of KL-divergences between the strategy vectors produced by replicator dynamics and an interior Nash Equilibrium is also a constant of motion (see Corollary~\ref{cor:newconstant} in the Appendix). We simulated an extension to the game depicted in Figure~\ref{fig:butterfly1} in which many `butterfly' clusters are joined in a toroid shape.
Figure~\ref{fig:kldivtorus} depicts our claim: although each agent specific divergence 
term $\eta_i \mathrm{KL}(x^*_i||x_i(t))$  fluctuates, the weighted sum $\sum_{i \in V}\eta_i\mathrm{KL}(x^*_i||x_i(t))$
remains constant.

\begin{figure}[t]\centering
\begin{tabular}{cccc}
  \includegraphics[width=0.13\linewidth]{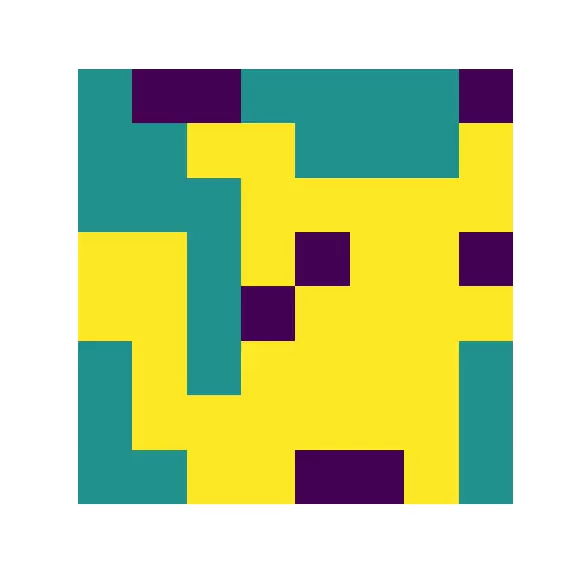} &   \includegraphics[width=0.13\linewidth]{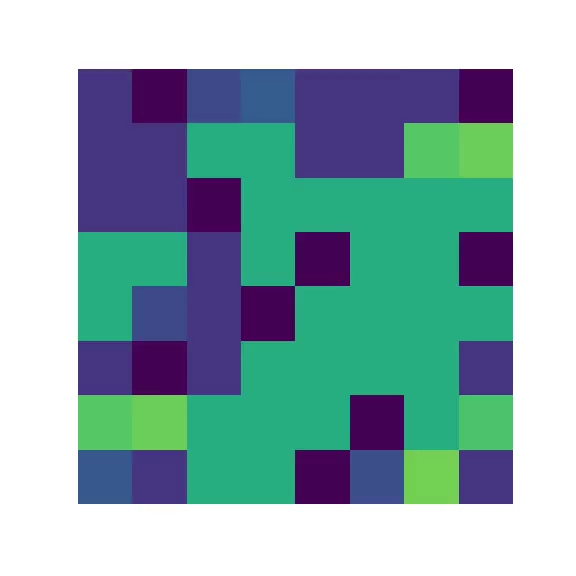} & \includegraphics[width=0.13\linewidth]{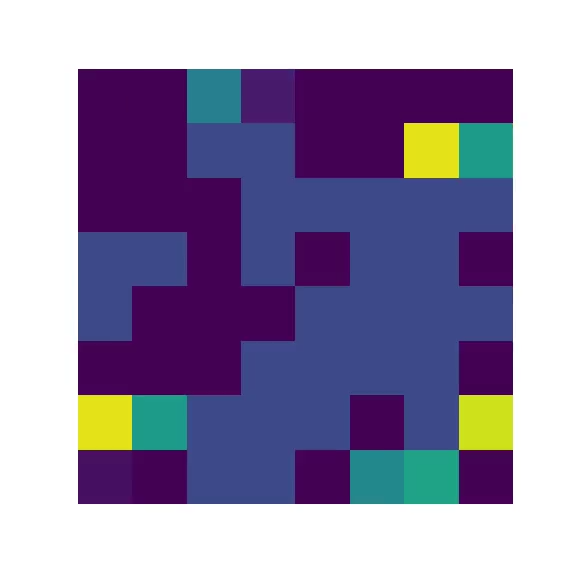} &   \includegraphics[width=0.13\linewidth]{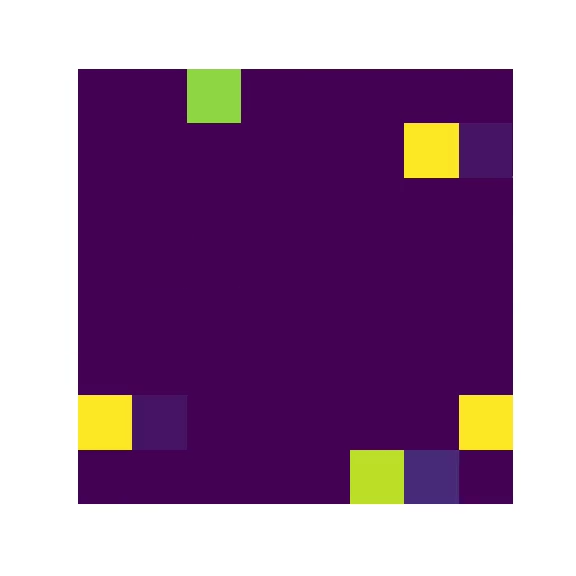} \\
(a) T=1 & (b) T=3 & (c) T=5 & (d) T=20 \\
 \includegraphics[width=0.13\linewidth]{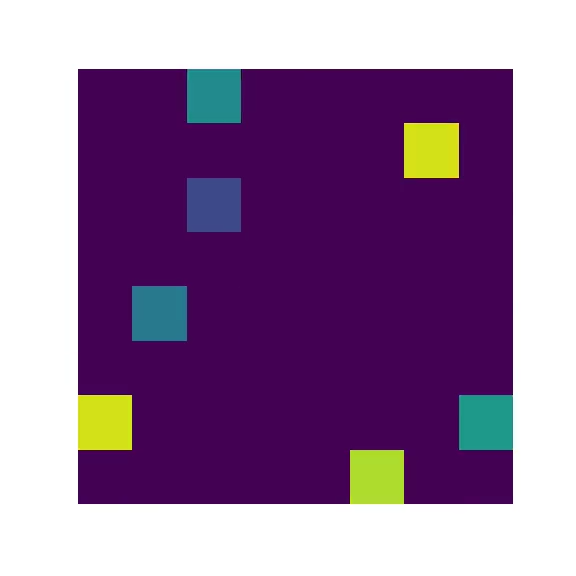} &   \includegraphics[width=0.13\linewidth]{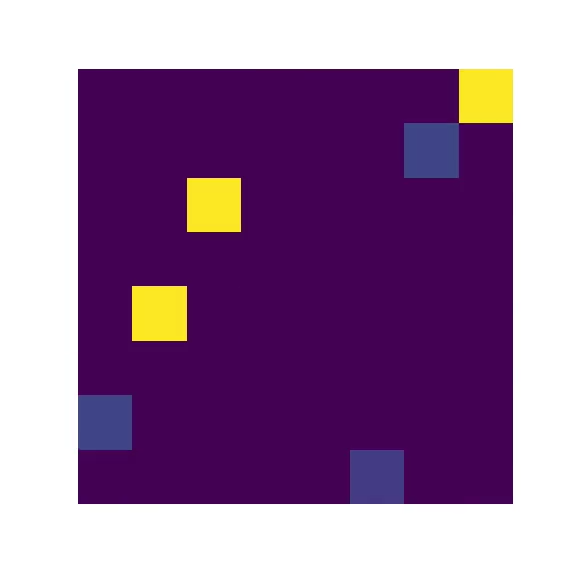} & \includegraphics[width=0.13\linewidth]{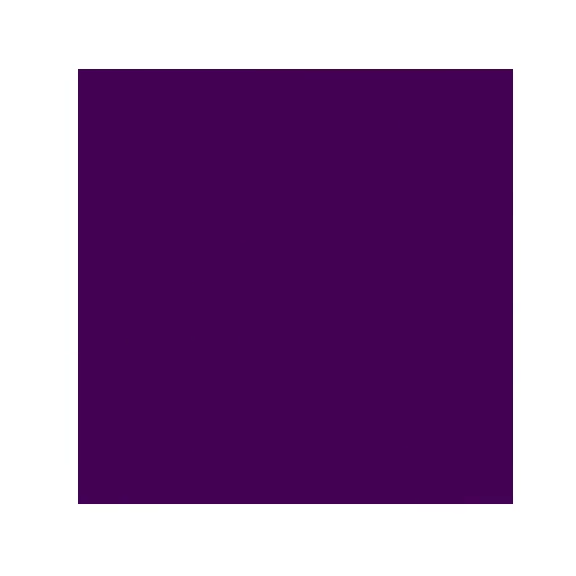} &   \includegraphics[width=0.13\linewidth]{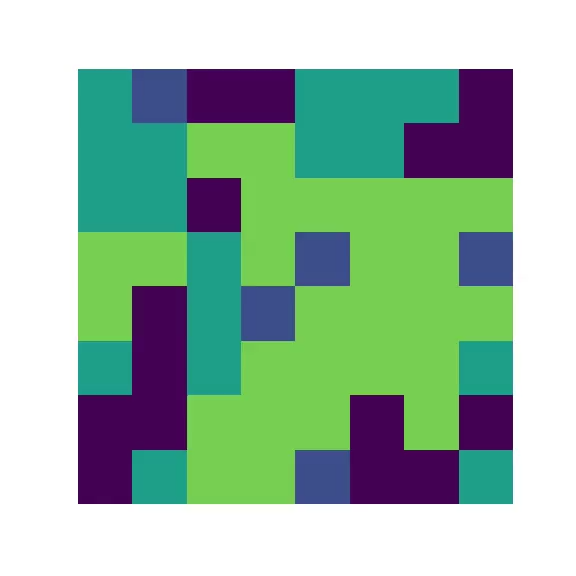} \\
 (e) T=47 & (f) T=54 & (g) T=68 & (h) T=93 \\
 \includegraphics[width=0.13\linewidth]{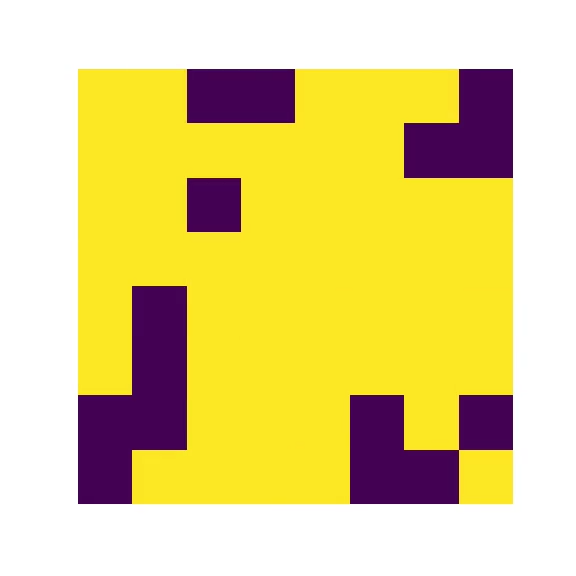} &   \includegraphics[width=0.13\linewidth]{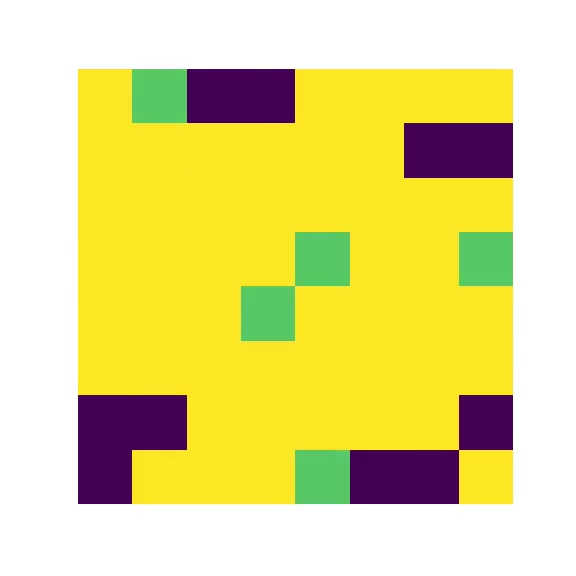} & \includegraphics[width=0.13\linewidth]{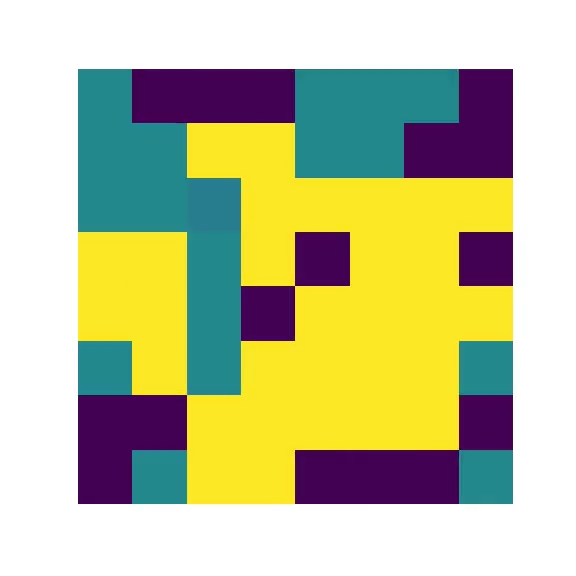} &   \includegraphics[width=0.13\linewidth]{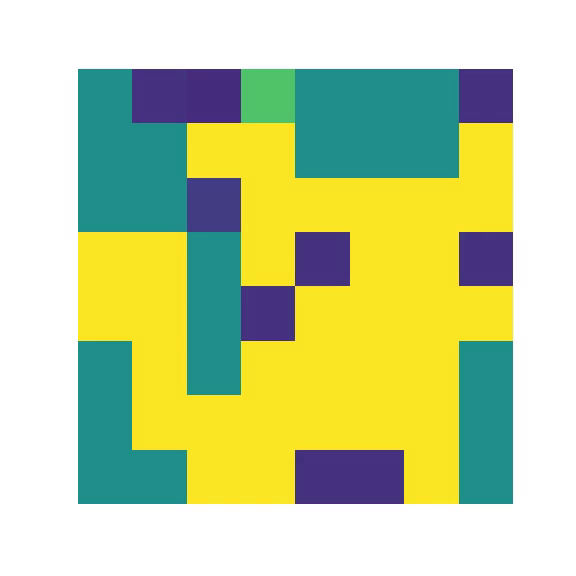} \\
 (i) T=99 & (j) T=112 & (k) T=117 & (l) T=101701 \\
\end{tabular}
\caption{Sequence of Pikachu images showing approximate recurrence in an $8\times 8$ zero-sum polymatrix game, where the changing color of each pixel on the grid represents the strategy of the player over time.}
\label{fig:pikachumain}
\end{figure}

To generate Figure~\ref{fig:pikachumain}, we scale-up the game structure from~\citet{mai2018cycles} to $64$ nodes. This is a relatively dense graph, where the initial condition of each player informs the RGB value of a corresponding pixel on a grid. If the system exhibits Poincar\'e recurrence, we should eventually see similar patterns emerge as the pixels change color over time (i.e., as their corresponding strategies evolve). 
In general, an upper bound on the expected time to see recurrence in such a system is exponential in the number of agents. As observed in Figure~\ref{fig:pikachumain}, the system returns near the initial image in the first several hundred iterations, but takes more than $100$k iterations for a clearer Pikachu to reappear. Further details on the simulation methodology and additional experiments can be found in Appendix~\ref{appsecs:experiments}. 
\FloatBarrier

\section{Discussion}
\label{sec:conclusion}

We show that systems in which populations of dynamic agents interact with environments that evolve as a function of the agents themselves can equivalently be modeled as polymatrix games. 
For the class of rescaled zero-sum games, we prove replicator dynamics are Poincar\'{e} recurrent and converge in time-average to the equilibrium, while experiments show the complexity of systems to which the results apply. A future direction of theoretical research is on the study of games that evolve exogenously instead of only endogenously.

Moreover, there are several exciting applications where our theory has relevance. Google DeepMind trains populations of AI agents against each other and computes win probabilities in heads-up competition resulting in a symmetric  constant-sum game~\cite{WGTSMJ20, balduzzi2019open}. Up to a shift by an all $0.5$ matrix, these are exactly anti-symmetric self-loop games connecting a population of users (programs) to itself as the programs are trying to out-compete each other.
The game always remains (anti)-symmetric, but the payoff entries change as stronger agents replace old agents. While we cannot capture the system fully, we can create the following abstract model of it. The self-loop zero-sum game is the initialization of the system and is equal to the original anti-symmetric empirical zero-sum game. There is another zero-sum game between the population and a meta-agent which simulates the reinforcement policy that chooses which programs get replaced and thus generates a new empirical zero-sum payoff matrix. We can mimic this randomized choice of the policy as a mixed strategy that chooses a convex combination from a  large number of  possible empirical zero-sum payoff matrices. One of these payoff matrices is the all zero matrix, and the initial strategy of the reinforcement policy chooses that game with high probability at time zero, so that the population is at the start of the process effectively playing just their original empirical game. For such systems, our results provide some theoretical justification for the preservation of diversity and for the satisfying empirical performance.

To conclude, we briefly touch on the connection to progressive training of generative adversarial networks~\cite{karras2018progressive}. The basic idea is to start the training process with small generator and discriminator networks and, over time, periodically add layers to the networks of higher dimension to grow the resolution of the generated images. This process causes the zero-sum game (between  generator and discriminator) to evolve with time. Importantly, as a consequence, the equilibrium is not fixed in the game. For instance, we can capture behavior of this process as a time evolving game in our model: the base game matrix $P$ is sparse and of high dimension; as the environment $w$ changes in time the nonzero values in the time-evolving payoff $P(w)$ `turn on', progressively making the matrix dense. Despite the critical nature of the above AI architectures, which are both based on the guided evolution of zero-sum games, no model of them exists in the literature.

\section*{Acknowledgments}
 Stratis Skoulakis gratefully acknowledges NRF 2018 Fellowship NRF-NRFF2018-07.
 Tanner Fiez acknowledges support from the DoD NDSEG Fellowship. 
Ryann Sim gratefully acknowledges support from the SUTD President's Graduate Fellowship (SUTD-PGF).
Lillian Ratliff is supported by NSF CAREER Award number 1844729 and an Office of Naval Research Young Investigor Award. 
Georgios Piliouras gratefully acknowledges support from  grant PIE-SGP-AI-2018-01, NRF2019-NRF-ANR095 ALIAS grant and NRF 2018 Fellowship NRF-NRFF2018-07.

\bibliographystyle{plainnat}
\bibliography{refs}

\renewcommand\appendixtocname{Appendix}
\renewcommand\appendixname{Appendix}
\begin{appendices}

\setcounter{secnumdepth}{3}

We provide a detailed overview of related work in Appendix~\ref{appsecs:related}, omitted proofs in Appendix~\ref{appsecs:proofs}, and further experimental results and details in Appendix~\ref{appsecs:experiments}. 

\section{Related Work}
\label{appsecs:related}
We now cover a broader class of related work. The related work can be categorized into the following topics: (i) learning in zero-sum games, Poincar\'{e} recurrence, and cycles, (ii) learning in time-evolving games, and (iii) experimental works.
\paragraph{Learning in Zero-Sum Games, Poincar\'{e} Recurrence, and Cycles.} 
 Classical works in evolutionary game theory have long explored the interface between dynamical systems theory and learning in games with the goal of understanding when cycling or other non-convergent behaviors emerge \cite{hofbauer1998evolutionary,sandholm2010population}. For specific classes of games such as zero-sum or partnership bimatrix games, constants of motion are known to exist \cite{hofbauer1996evolutionary}, and volume preservation properties of the replicator dynamics have been shown \cite{eshel1983coevolutionary,hofbauer1998evolutionary}. 
 More recently, applications of dynamical systems tools to the analysis of learning algorithms has lead to new insights about non-convergent behavior and its interpretation \citep{vlatakis2019poincare,boone2019darwin,piliouras2014optimization,piliouras2014persistent,mertikopoulos2018cycles,perolat2020poincar}. For instance, several works demonstrate the surprising property that replicator dynamics are Poincar\'{e} recurrent in pairwise zero-sum polymatrix games without self-loops by showing both the existence of a constant of motion and volume preservation~\cite{piliouras2014optimization,piliouras2014persistent}. \citet{boone2019darwin} extend this analysis to pairwise zero-sum polymatrix games with self loops. \citet{mai2018cycles} consider a biologically-inspired time-evolving game in which the payoff of a collection of species playing against itself depends on a dynamically changing environmental variable.  The dynamics are shown to be Poincar\'{e} recurrent  for certain parameter regimes, which is interpreted as promoting diversity.

\paragraph{Learning in Time-Evolving Games.} 
Following a similar theme, there has been renewed interest in learning in games in which the payoffs change in time or are affected by a feedback mechanism from the environment. Such reciprocal feedback between strategies and environment variables arises in a number of applications including biology \cite{akccay2011evolution,cortez2010understanding,tilman2020evolutionary}, ecology \cite{worden2007evolutionary,lade2013regime}, sociology~\cite{bowles2003co,tilman2017maintaining}, economics \cite{friedman1991evolutionary}, and more recently machine learning and artificial intelligence~\cite{cardoso2019competing,duvocelle2018learning,lykouris2016learning}. For instance, \citet{cardoso2019competing} design algorithms with \emph{small regret}---tantamount to the long-term payoff of both players being close to minimax optimal in hindsight---for a class of repeated play  zero-sum games, termed \emph{online matrix games}, such that   players' payoff matrices may change in each round.
In related work, in the class of continuous games, \citet{duvocelle2018learning} analyze the long-run behavior of regret minimizing players in time-evolving games which are executed in a sequence of concave, monotone \emph{stage games}. In other work~\cite{lykouris2016learning,cavaliere2012prosperity}, dynamically changing environments are modeled via a dynamic player population in which players leave the game with some probability and new players enter.

Closer to the class of time-evolving games we consider, 
another body of work captures various natural dynamical processes via action-dependent games \cite{SAGD06,SBGA07,SB02,KB10,RDK14,BGE03,BE04,GRC07,weitz2016oscillating,mai2018cycles,akccay2011evolution,stewart2014collapse,mullon2018social}. In such settings,
the actions of the players (or in terms of evolutionary game-theory, the frequencies of species within a population), may affect the environment and thus change the game's payoffs. For instance, the dynamics of the vaccinated human population are efficiently captured by such action dependent-games. Parents decide to vaccinate their newborns by weighing the risk of a potential disease to the risk of morbidity of vaccination. However, as the unvaccinated population increases so does the cost of the \textit{do not vaccinate action} \cite{KB10,RDK14,BGE03,BE04,GRC07}. Similar instances appear in the dynamics of bacteria and microbe populations since it is common for certain types of bacteria to cause certain environmental changes (e.g., increase of nutrient-scavenging enzymes or fixation of inorganic nutrients) that affect differently
the various individuals of the population \cite{SAGD06,SBGA07,SB02}.

\paragraph{Experimental Works.}
Motivated by the observation of cycling behavior in applications of game-theoretic learning dynamics to machine learning dynamics, there has been a push to better understand recurrence and to potentially see it  as a solution concept. Indeed, standard gradient dynamics are known to result in cycling or recurrent behavior in continuous time~\cite{piliouras2014optimization, mertikopoulos2018cycles} and chaotic and divergent behavior in discrete time~\cite{bailey2018multiplicative,cheung2019vortices,gidel2019negative}.  \citet{daskalakis2018training} and \citet{mertikopoulos2018cycles} explore adaptations to follow-the-regularized learning (FTRL) dynamics that enable convergence in bilinear and more general nonconvex-nonconcave problems, respectively.
Each work shows that versions of optimistic mirror descent can successfully train generative adversarial networks on challenging datasets.
Similarly, \citet{balduzzi2018mechanics} design dynamics that adjust for components of the gradient dynamics that cause cycling by drawing connections to Hamiltonian dynamics. 

As opposed to trying to mitigate cycling behavior and converge to fixed points via carefully designed learning dynamics, a separate line of work instead makes use of the fact that in convex-concave games the time-average of standard gradient dynamics converge to the interior equilibrium~\cite{freund1999adaptive}
. In particular,~\citet{ganaveraging} show that training generative adversarial networks and then averaging the parameters of the networks uniformly or by an exponential moving average is an empirically effective method. \citet{gidel2019negative} explore a similar perspective of uniform averaging for simultaneous and alternating gradient updates using negative momentum. Moreover,~\citet{vlatakis2019poincare} show for a class of nonconvex-nonconcave minimax games, which generalize bilinear zero-sum games, that the time-average of gradient dynamics  converges to an equilibrium for certain problem instances and initial conditions. This provides further evidence for the efficacy of recurrence as a solution concept that is relevant to machine learning applications such as generative adversarial networks. A final line of work explores evolutionary algorithms as a training method for generative adversarial networks~\cite{costa2020using, karras2018progressive, wang2019evolutionary, costa2019coegan, costa2019coevolution, garciarena2018evolved, al2018towards, toutouh2019spatial}.

\section{Proofs}
\label{appsecs:proofs}
We organize the proofs in the order that the results appeared in the paper. Proofs for results from Sections~\ref{sec:evolution},~\ref{sec:recurrence}, and~\ref{sec:timeaverage} of the paper can be found in Appendix~\ref{appsecs:evolution_proof},~\ref{appsecs:recurrence_proof},~\ref{appsecs:regret}, respectively.

In Appendix~\ref{appsecs:evolution_proof}, we begin by proving Proposition~\ref{prop:rps}, which shows a reduction from the time-evolving generalized rock-paper-scissors game presented by~\citet{mai2018cycles} to a rescaled zero-sum polymatrix game. Following that proof, we prove Theorem~\ref{thm:evolution_general}, which shows the reduction of Proposition~\ref{prop:rps} extends to a general class of time-evolving dynamical systems. This result demonstrates the breadth of time-evolving games that can in fact be studied as rescaled zero-sum polymatrix games.

In Appendix~\ref{appsecs:recurrence_proof}, we prove Lemma~\ref{l:divergence} and Lemma~\ref{t:invariant}. Recall that Lemma~\ref{l:divergence} and Lemma~\ref{t:invariant} show that the replicator dynamics are volume preserving and have bounded orbits in rescaled zero-sum polymatrix games, respectively. Moreover, as shown in Section~\ref{sec:recurrence} via the proof of Theorem~\ref{t:main}, Lemma~\ref{l:divergence} and Lemma~\ref{t:invariant} nearly immediately imply Theorem~\ref{t:main}, which guarantees the replicator dynamics are Poincar\'{e} recurrent in rescaled zero-sum polymatrix games with interior Nash equilibria.

Appendix~\ref{appsecs:regret} contains the proofs of Theorem~\ref{t:avg1} and Proposition~\ref{thm:regretbdd}, which show time-average equilibria and utility convergence of replicator dynamics in rescaled zero-sum polymatrix games along with the bounded regret property in general polymatrix games, respectively. The proof of Theorem~\ref{thm:lp}, which provides a linear program to compute Nash equilibrium in rescaled zero-sum polymatrix games can also be found in Appendix~\ref{appsecs:regret}. 

\subsection{Proofs of Reductions from Time-Varying Games to Polymatrix Games from Section~\ref{sec:evolution}}
\label{appsecs:evolution_proof}
In Appendix~\ref{appsecs:reduction_proof}, we provide the proof of Proposition~\ref{prop:rps} from Section~\ref{sec:evolution}. This result shows a reduction from a time-evolving generalized rock-paper-scissors game to an appropriate rescaled zero-sum polymatrix game. Moreover, in Appendix~\ref{appsecs:generalization}, we provide the proof of Theorem~\ref{thm:evolution_general} from Section~\ref{sec:evolution}, which generalizes the reduction of Proposition~\ref{prop:rps} to a broad class of dynamical systems that represent multiple evolving populations interacting with multiple evolving environments.
\subsubsection{Proof of Proposition~\ref{prop:rps}}
\label{appsecs:reduction_proof}
Let $\species$ and $\weights$ denote the mixed strategies of player 1 and player 2, respectively, which correspond to the population and the environment. In what follows, we show that both the population and environment dynamics can be simplified so that it is clear each player is following replicator dynamics in a static rescaled zero-sum polymatrix game. 

\paragraph{Environment Dynamics.}
We begin by considering the environment dynamics. The dynamics of player 2 ($\weights$-player) for each action $i \in \{1,\dots, n\}$ with an initial condition on the interior of the simplex are given by
\begin{equation}
\dot{\weights}_i = \weights_i \sum_{j=1}^n\weights_j\left(\species_j - \species_i\right).
\label{eq:repeat_w}
\end{equation}
Now observe that
\begin{align*}
\sum_{i=1}^n\dot{\weights}_i 
&= \sum_{i=1}^n\weights_i\sum_{j=1}^n\weights_j\left(\species_j - \species_i\right) \\
&= \sum_{i=1}^n\weights_i\sum_{j=1}^n\weights_j\species_j - \sum_{i=1}^n\weights_i\sum_{j=1}^n\weights_j\species_i \\
&= \sum_{i=1}^n\weights_i\sum_{j=1}^n\weights_j\species_j - \sum_{j=1}^n\weights_j\sum_{i=1}^n\weights_i\species_i \\
&= 0.
\end{align*}
Since $\sum_{i=1}^n\dot{\weights}_i =0$ as shown above and the given initial condition is such that $\weights(0) \in \Delta^{n-1}$, we conclude ${\weights}(t) \in \Delta^{n-1}$ and $\sum_{j=1}^n\weights_j(t) = 1$ for any $t\geq 0$. 
From a series of algebraic manipulations and the fact that $\sum_{j=1}^n\weights_j = 1$, we obtain an equivalent form of the dynamics given in~\eqref{eq:repeat_w} for each action $i \in \{1,\dots, n\}$ as follows:
\begin{align}
\dot{\weights}_i 
&= \weights_i \sum_{j=1}^n\weights_j\left(\species_j - \species_i\right) \notag  \\
&= \weights_i \sum_{j=1}^n\weights_j\species_j - \weights_i \sum_{j=1}^n\weights_j\species_i \notag \\
&= \weights_i \sum_{j=1}^n\weights_j\species_j - \weights_i\species_i \notag \\
&= \weights_i \big(- \species_i +  \sum_{j=1}^n\weights_j \species_j\big).\label{eq:w_final}
\end{align}
The dynamics for player 2 ($\weights$-player) from~\eqref{eq:w_final} in vector form are then given by
\begin{equation}
\dot{\weights} = \weights\cdot (-\I \species +  \weights^{\top} \I \species).
\label{eq:w_dynam_vec}
\end{equation}
We now see that the dynamics in~\eqref{eq:w_dynam_vec} are replicator dynamics in which player 2 ($w$-player) plays against player 1 ($y$-player) with the payoff matrix $A^{w,y} = -\I$, where the superscript indices $(w,y)$ indicate the players.

\paragraph{Population Dynamics.}
We now perform a 
similar analysis on the population dynamics. The dynamics for player 1 ($\species$-player) for each action $i \in \{1,\dots,n\}$ with an initial condition on the interior of the simplex are given by
\begin{equation}
\dot{\species}_i = \species_i  \left(
(\P(\weights) \species)_i - \species^\top \P(\weights) \species
\right).
\label{eq:repeat_y}
\end{equation}
From an expansion of the payoff matrix $\P(\weights)$ in~\eqref{eq:repeat_y}, the dynamics of player 1 ($\species$-player) for each action $\{1,\dots, n\}$ are equivalently
\begin{equation}
\dot{\species}_i = \species_i  \left( 
(\P \species)_i - \species^\top 
\P \species
\right) + \species_i  \Bigg(
\mu \sum_{j=1}^n(\weights_i - \weights_j)\species_j-\mu\sum_{\ell=1}^n \species_\ell \sum_{j=1}^n(\weights_\ell - \weights_j) \species_j
\Bigg).
\label{eq:y_expanded}
\end{equation}
Observe that
\begin{align*}
\sum_{\ell=1}^n \species_\ell \sum_{j=1}^n(\weights_\ell - \weights_j) \species_j
&=\sum_{\ell=1}^n \species_\ell \sum_{j=1}^n\weights_\ell\species_j - \sum_{\ell=1}^n\species_\ell \sum_{j=1}^n\weights_j \species_j  \\
&=\sum_{j=1}^n\species_j\sum_{\ell=1}^n \weights_\ell\species_\ell - \sum_{\ell=1}^n\species_\ell \sum_{j=1}^n\weights_j \species_j \\
&=0.
\end{align*}
Consequently, for each action $i \in \{1,\dots, n\}$, the dynamics in~\eqref{eq:y_expanded}  simplify to the form
\begin{equation}
\dot{\species}_i = \species_i  \left( 
(\P \species)_i - \species^\top \P \species
\right) +  \species_i  \left(
\mu \sum_{j=1}^n(\weights_i - \weights_j)\species_j\right).
\label{eq:y_inter}
\end{equation}
Finally, $\species(0) \in \Delta^{n-1}$ so that $\sum_{j=1}^n\species_j(t) = 1$ for any $t\geq 0$ since clearly $\dot{\species}$ is replicator dynamics with the payoff matrix $\P(\weights)$ in~\eqref{eq:repeat_y}. Accordingly, for each action $i \in \{1,\dots, n\}$, we simplify the dynamics in~\eqref{eq:y_inter} as follows:
\begin{align}
\dot{\species}_i &= \species_i  \left( 
(\P \species)_i - \species^\top \P \species
\right) +  \species_i  \big(
\mu \sum_{j}(\weights_i - \weights_j)\species_j\big) \notag \\
 &= \species_i  \left( 
(\P \species)_i - \species^\top \P \species
\right) +  \species_i  \big(
\mu\weights_i \sum_{j=1}^n\species_j - \sum_{j=1}^n\weights_j\species_j\big) \notag \\
 &= \species_i  \left( 
(\P \species)_i - \species^\top \P \species
\right) + \species_i \big(
\mu \weights_i - \sum_{j=1}^n \mu \weights_j \species_j
\big). \label{eq:y_final}
\end{align}
The dynamics for player 1 ($\species$-player) from~\eqref{eq:y_final} in vector form are then given by
\begin{equation}
\dot{\species} = \species\cdot (\P \species +  \species^{\top} \P \species\cdot \ones) + \species\cdot (\mu\I \weights +  \mu \species^{\top} \I \weights\cdot \ones).
\label{eq:y_dynam_vec}
\end{equation}
We now see that the dynamics in \eqref{eq:y_dynam_vec} are replicator dynamics in which player 1 ($\species$-player) plays against itself with the payoff matrix $A^{y,y} = \P$ and against player 2 ($w$-player) with the payoff matrix $A^{y,w} = \mu \I$, where again the superscript indices indicate the players in the payoff matrix.

\paragraph{Static Polymatrix Game.}
 We have shown that the dynamics of~\eqref{eq:repeat_y} and~\eqref{eq:repeat_w} correspond to replicator dynamics for a two-player polymatrix game in which player 1 ($\species$-player) has utility $u_{\species}(\species, \weights)=\species^\top \P \species +\mu \species^\top \I \weights$ and player 2 ($\weights$-player) has utility $u_{\weights}(\species, \weights)=-\weights^\top \I \species$ for any strategy profile $(\species, \weights)$. The self-loop of player 1
 ($\species$-player) is antisymmetric and for $\sconst_{\species}=1$ and $\sconst_{\weights} =\mu$ the rescaled sum of utility $\sconst_{\species} u_{\species}(\species, \weights) + \sconst_{\weights} u_{\weights}(\species, \weights)=0$ for every strategy $(\species, \weights)$. This allows us to conclude by definition that the time-evolving generalized rock-paper-scissors game is equivalent to replicator dynamics in a two-player rescaled zero-sum polymatrix game.

\subsubsection{Proof of Theorem~\ref{thm:evolution_general}}
\label{appsecs:generalization}
In this section, we provide the proof of Theorem~\ref{thm:evolution_general}. The theorem 
shows that the reduction from Proposition~\ref{prop:rps} extends to more general dynamical systems.
Before giving the proof, we provide some intuition for the underlying structure. 
\paragraph{Time-Evolving Games that Admit Reduction to Polymatrix Games.}
\label{appsecs:structure}
The class of time-evolving systems that admit a reduction are such that the basic interaction structure is of the form in Figure~\ref{fig:buildingblock}. 
The key component of any general structure formed from the building block is that each environment $w_k$ is only connected to populations, and each population $y_\ell$ is only connected to environments or themselves via a self-loop. 
As an example of the type of generalization that is possible for the reduction, consider the polymatrix game in Figure~\ref{fig:graphicalgame}. Of course the game graph does not have to be a line, but population nodes should be separated by environment nodes.

\begin{figure}[t]
    \centering
    \begin{tikzpicture}
    \begin{scope}[every node/.style={circle,thick,draw}]
        \node[fill =red!30] (x2) at (0,2) {$\species_\ell$};
        \node[fill =blue!30] (w1) at (3,2) {$\weights_k$};
    \end{scope}
    \begin{scope}[>={Stealth[black]},
                 every node/.style={fill=white,circle},
                 every edge/.style={draw=black}]
        
        \path [->] (x2) edge [bend right=30] (w1);
        \path [->] (w1) edge [bend right=30]  (x2);
        \path (x2) edge [loop left]  (x2);
    \end{scope}
    \end{tikzpicture}
    \caption{Basic interaction structure in the time-evolving systems that reduce to rescaled zero-sum polymatrix games. The \textit{red} nodes denotes a population of species, while the \textit{blue} node is an environment. }
    \label{fig:buildingblock}
\end{figure}
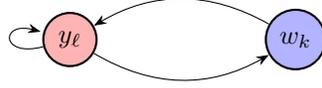
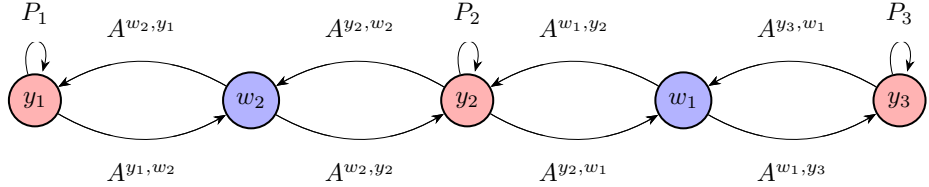
\begin{figure}[t]
    \centering
    \resizebox{0.75\linewidth}{!}{
    \begin{tikzpicture}
    \begin{scope}[every node/.style={circle,thick,draw}]
        \node[fill =red!30] (x1) at (-6,2) {$\species_1$};
        \node[fill =red!30] (x2) at (0,2) {$\species_2$};
        \node[fill =red!30] (x3) at (6,2) {$\species_3$};
        
        \node[fill =blue!30] (w1) at (3,2) {$\weights_1$};
        \node[fill =blue!30] (w2) at (-3,2) {$\weights_2$};
    \end{scope}
    \begin{scope}[>={Stealth[black]},
                 every node/.style={fill=white,circle},
                 every edge/.style={draw=black}]
        \path [->] (x1) edge [bend right=30] node [below=-0.2cm] {$A^{y_1,w_2}$}(w2);
          \path [->] (x1) edge [bend right=30] (w2);
        
        \path [->] (w2) edge [bend right=30] node [above=-0.2cm] {$A^{w_2,y_1}$} (x1);
                \path [->] (w2) edge [bend right=30] (x1);
        \path (x1) edge [loop above] node {$\P_1$} (x1);
        
        \path [->] (x2) edge [bend right=30] node [below=-0.2cm] {$A^{y_2,w_1}$}(w1);
        \path [->] (x2) edge [bend right=30] (w1);
        
        \path [->] (w1) edge [bend right=30] node [above=-0.2cm] {$A^{w_1,y_2}$}(x2);
          \path [->] (w1) edge [bend right=30](x2);
        
        \path (x2) edge [loop above] node {$\P_2$} (x2);
        
        \path [->] (x2) edge [bend right=30] node [above=-0.2cm] {$ A^{y_2,w_2}$}(w2);
         \path [->] (x2) edge [bend right=30] (w2);

        \path [->] (w2) edge [bend right=30] node [below=-0.2cm] {$A^{w_2,y_2}$} (x2);
     \path [->] (w2) edge [bend right=30]  (x2);
     
        \path [->] (x3) edge [bend right=30] node [above=-0.2cm] {$A^{y_3,w_1}$}(w1);
        
            \path [->] (x3) edge [bend right=30] (w1);
        
        \path [->] (w1) edge [bend right=30] node [below=-0.2cm] {$A^{w_1,y_3}$} (x3);
          \path [->] (w1) edge [bend right=30] (x3);
           
        \path (x3) edge [loop above] node {$\P_3$} (x3);
    \end{scope}
    \end{tikzpicture}}
    \caption{Example polymatrix game that can be formed from a reduction of a time-evolving dynamical system. The \textit{red} nodes denote a population of species, while the \textit{blue} nodes stand for an environment.}
    \label{fig:graphicalgame}
\end{figure}
Formally, a time-evolving game is defined by a set of populations $(y_1,\ldots, y_{n_y})$ and a set of environments $(w_1,\ldots, w_{n_w})$, where $y_\ell(0)\in \Delta^{n-1}$ for each $\ell\in\{1,\ldots, n_y\}$ and $w_k(0)\in \Delta^{n-1}$ for each $k\in \{1,\ldots, n_w\}$. Let $\mc{N}_k^w$ be the set of populations which coevolve with the environment $w_k$ and $\mc{N}_\ell^y$ be the set of environments which coevolve with the population $y_\ell$ via the building block structure from Figure~\ref{fig:buildingblock}. The time-evolving dynamics for each population $\ell$ are given componentwise by
\begin{equation}
\dot{\species}_{\ell,i} = \species_{\ell,i}  \left(
(\P_\ell({w}) \species_\ell)_{i} - \species_\ell^\top \P_\ell(\weights) \species_\ell 
\right), 
\label{eq:fbevolutiony}
\end{equation}
where 
\[\P_\ell(w)=\P_\ell+\sum_{k\in \N^y_\ell}\W^{\ell,k}\]
and $\W^{\ell,k}$ is a matrix such that the $(r,s)$ entry is given by 
\[W^{\ell,k}=\begin{pmatrix}
0 & (A^{\ell,k}w_{k})_1 - (A^{\ell,k}w_{k})_2 &  \cdots &  (A^{\ell,k}w_{k})_1 - (A^{\ell,k}w_{k})_n\\
(A^{\ell,k}w_{k})_2 - (A^{\ell,k}w_{k})_{1} & 0 & \cdots & (A^{\ell,k}w_{k})_{2}-(Aw_{k})_{n}\\
\vdots & \vdots  & \vdots & \vdots\\
(A^{\ell,k}w_{k})_{n} - (A^{\ell,k}w_{k})_{1} &(A^{\ell,k}w_{k})_{n}-(A^{\ell,k}w_{k})_{2} & \cdots & 0\\
\end{pmatrix}.\]

Further, the time-evolving dynamics for each environment $k$ are given componentwise by
\begin{equation}
\dot{\weights}_{k,i} = \weights_{k,i} \sum_{\ell\in \N^w_k}\sum_{j=1}^n\weights_{k,j}\left(  (A^{k,\ell}\species_{\ell})_{i}-(A^{k,\ell}\species_{\ell})_{j}\right).
\label{eq:fbevolutionw}
\end{equation}
We now prove that any time-evolving game defined by the dynamics in~(\ref{eq:fbevolutiony}--\ref{eq:fbevolutionw}) is equivalent to replicator dynamics in a polymatrix game.

\paragraph{Environment Dynamics.} 
We begin by showing that the dynamics for each environment reduces to replicator dynamics for a polymatrix game in which each environment plays edge games with each of the populations to which it is connected. 

Following a similar argument as in the proof of Proposition~\ref{prop:rps}, for each $k\in \{1,\ldots, n_w\}$, given that $w_k(0)\in \Delta^{n-1}$, we have that $w_k(t)\in \Delta^{n-1}$ for all $t\geq 0$. 
Since $\sum_{j=1}^n\weights_{k,j}(t) = 1$  for any fixed $t$ and for each environment $k$, an equivalent form of the dynamics given in~\eqref{eq:fbevolutionw} is
\begin{align*}
\dot{\weights}_{k,i} &= \weights_{k,i} \sum_{\ell\in \N^w_k}\sum_{j=1}^n\weights_{k,j}\big(  (A^{k,\ell}\species_{\ell})_{i}-(A^{k,\ell}\species_{\ell})_{j}\big)\\
 &= \weights_{k,i} \sum_{\ell\in \N^w_k}\Big(\sum_{j=1}^n\weights_{k,j}(A^{k,\ell}\species_{\ell})_{i}-\sum_{j=1}^n\weights_{k, j}(A^{k,\ell}\species_{\ell})_{j}\Big)\\
&= \weights_{k,i} \sum_{\ell\in \N^w_k}\Big((A^{k,\ell}\species_{\ell})_{i}-\sum_{j=1}^n\weights_{k,j}(A^{k,\ell}\species_{\ell})_{j}\Big) \\
&= \weights_{k,i} \sum_{\ell\in \N^w_k}\big((A^{k,\ell}\species_{\ell})_{i}-\weights_{k}^{\top}A^{k,\ell}\species_{\ell}\big).
\end{align*}
It is now clear that the dynamics from~\eqref{eq:fbevolutionw} equivalently correspond to replicator dynamics where each environment $k$
plays against each connected population $\ell\in \N_k^w$ with payoff matrix $A^{k,\ell}$.

\paragraph{Population Dynamics.} We now show that the dynamics for each of the populations reduce to replicator dynamics for a population playing against themselves in a self-loop game and against the environments to which the population is connected.

To begin, from an expansion of the payoff matrix $\P_\ell(\weights)$, the dynamics from~\eqref{eq:fbevolutiony} are equivalent to
\begin{equation*}
\begin{split}
\dot{\species}_{\ell,i}&= \species_{\ell,i}  \left( 
(\P_\ell \species_\ell)_i - \species_\ell^\top 
\P_\ell \species_\ell
\right) \\ &\quad+ \species_{\ell,i}  \Big(
\sum_{k\in \N_\ell^y} \sum_{j=1}^n((A^{\ell,k}\weights_{k})_{i} - (A^{\ell,k}\weights_{k})_{j})\species_{\ell,j}- \sum_{k\in \N_\ell^y}\sum_{s=1}^n \species_{\ell,s} \sum_{j=1}^n((A^{\ell,k}\weights_{k})_{s} - (A^{\ell,k}\weights_{k})_{j}) \species_{\ell,j}
\Big).
\end{split}
\label{eq:y_expanded_long}
\end{equation*}
Now, observe that for each $k\in \N_\ell^y$,
\begin{align*}
\sum_{s=1}^n \species_{\ell,s} \sum_{j=1}^n((A^{\ell,k}\weights_{k})_{s} - (A^{\ell,k}\weights_{k})_{j}) \species_{\ell,j} &=\sum_{s=1}^n \species_{\ell,s} \sum_{j=1}^n(A^{\ell,k}\weights_{k})_{s}\species_{\ell,j}\mkern-2mu  - \mkern-3mu\sum_{s=1}^n\species_{\ell,s} \sum_{j=1}^n(A^{\ell,k}\weights_{k})_{j} \species_{\ell,j}  \\
&=\sum_{j=1}^n\species_{\ell,j}\sum_{s=1}^n (A^{\ell,k}\weights_{k})_{s}\species_{\ell,s}\mkern-2mu - \mkern-3mu\sum_{s=1}^n\species_{\ell,s} \sum_{j=1}^n(A^{\ell,k}\weights_{k})_{j} \species_{\ell,j} \\
&=0.
\end{align*}
Hence, along with the fact that $\sum_{j=1}^ny_{\ell,j}=1$, we obtain
\begin{align*}
\dot{\species}_{\ell,i} &= \species_{\ell,i} \left( (\P_\ell \species_\ell)_i - \species_\ell^\top \P_\ell \species_\ell\right) + \species_{\ell,i}  \sum_{k\in \N_\ell^y} \sum_{j=1}^n((A^{\ell,k}\weights_{k})_{i} - (A^{\ell,k}\weights_{k})_{j})\species_{\ell,j}\\
&=\species_{\ell,i}  \left( 
(\P_\ell \species_\ell)_i - \species_\ell^\top 
\P_\ell \species_\ell
\right) + \species_{\ell,i} 
\sum_{k\in \N_\ell^y} \Big(\sum_{j=1}^n (A^{\ell,k}\weights_{k})_{i}\species_{\ell, j} -\sum_{j=1}^n (A^{\ell,k}\weights_{k})_{j}\species_{\ell,j}
\Big)\\
&=\species_{\ell,i}  \left( 
(\P_\ell \species_\ell)_i - \species_\ell^\top \P_\ell \species_\ell
\right)+ \species_{\ell,i} \sum_{k\in \N_\ell^y} \Big((A^{\ell,k}\weights_{k})_{i} -\sum_{j=1}^n (A^{\ell,k}\weights_{k})_{j}\species_{\ell,j}\Big) \\
&=\species_{\ell,i}  \left( 
(\P_\ell \species_\ell)_i - \species_\ell^\top \P_\ell \species_\ell
\right)+ \species_{\ell,i} \sum_{k\in \N_\ell^y} \big((A^{\ell,k}\weights_{k})_{i} -\species_{\ell}^{\top}A^{\ell,k}\weights_{k}\big).
\end{align*}
The final equation shows that the dynamics from~\eqref{eq:fbevolutiony} equivalently correspond to replicator where each population $\ell$ plays against itself with the payoff matrix $A^{\ell, \ell} = \P_{\ell}$ and against each environment $k\in \N_{\ell}^y$ to which it is connected with payoff matrix $A^{\ell,k}$.

\paragraph{Static Polymatrix Game.} It is now clear that the dynamics from (\ref{eq:fbevolutiony}--\ref{eq:fbevolutionw}) correspond to  replicator dynamics for a polymatrix game with $V$ the combined index set of environments and populations such that $|V|=n_y+n_w$.
The edge games are defined such that each population player $\ell$ plays against themselves with $A^{\ell,\ell}=\P_\ell$ and against each environment $k\in \N_\ell^w$ to which they are connected with game $A^{\ell,k}$, and such that each environment $k$ plays against each population $\ell\in \N_k^w$ to which it is connected with $A^{k,\ell}$. 
If each $P_\ell=-P_\ell^\top$ and $\sum_{i\in V}\eta_iu_i(x)=0$ for all $x \in \X$ and some positive coefficients $\{\eta_i\}_{i\in V}$, then the polymatrix game is rescaled zero-sum. 

Finally, we remark that while it may appear complex to verify if the polymatrix game resulting from the reduction of the time-evolving dynamics is rescaled zero-sum, \citet{CCDP16} have shown that whether a polymatrix game is constant-sum can be determined in polynomial time and this result can apply to rescaled zero-sum games.
\begin{theorem}[Theorem 8~\cite{CCDP16}]
Let $G= (V, E)$ be a polymatrix game. For any player $i \in V$, pure strategy $\alpha \in \A_i$, and joint strategy $x_{-i}$ of the rest of the players, denote by $W(\alpha, x_{-i})=\sum_{j\in V}u_j(\alpha, x_{i-1})$ the sum of all players' payoffs when agent $i$ plays strategy $\alpha$ and the rest of the agents play $x_{-i}$. The polymatrix game $G$ is a constant-sum game if and only if the optimal objective value of the problem
\begin{equation*}
\max_{x_{-i}} \quad W(\beta, x_{-i})-W(\alpha, x_{-i})  
\end{equation*}
equals zero for all $i \in V$ and $\alpha, \beta \in \A_i$. Moreover, this condition can be checked in polynomial time in the number of players and strategies.
\end{theorem}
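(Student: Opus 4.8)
This is Theorem~8 of~\citet{CCDP16} and is simply invoked here, but the argument I would give is short. The plan is to prove the biconditional directly from the definition of a constant-sum game, and then to observe that each of the finitely many optimization problems in the statement is in fact a linear program.

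For the forward direction, if $G$ is constant-sum then $\sum_{j\in V}u_j(x)=c$ for a constant $c$ and every $x\in\X$; evaluating at $x=(\alpha,x_{-i})$ and $x=(\beta,x_{-i})$ gives $W(\alpha,x_{-i})=W(\beta,x_{-i})=c$ for all $x_{-i}$, so the objective $W(\beta,x_{-i})-W(\alpha,x_{-i})$ is identically zero and its maximum is zero for every $i,\alpha,\beta$. For the converse, the key observation is a cancellation: changing player $i$'s pure action from $\alpha$ to $\beta$ leaves every edge payoff not incident to $i$ unchanged, so $W(\beta,x_{-i})-W(\alpha,x_{-i})$ depends only on the edges incident to $i$ (plus a constant from $i$'s self-loop) and is therefore an \emph{affine} function of $x_{-i}$. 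If its maximum over $x_{-i}\in\prod_{j\neq i}\X_j$ is zero for every ordered pair $(\alpha,\beta)$, then it is $\le 0$ everywhere, and applying the same to the pair $(\beta,\alpha)$ it is also $\ge 0$ everywhere; hence $W(\alpha,x_{-i})=W(\beta,x_{-i})$ for all $x_{-i}$ and all pure $\alpha,\beta\in\A_i$. Multilinearity of the payoffs lifts this to mixed strategies of $i$, so $\sum_{j\in V}u_j(x)$ does not depend on $x_i$; running the argument over all $i\in V$ shows this sum is constant, i.e.\ $G$ is constant-sum.

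For the complexity claim, fix $i$ and $(\alpha,\beta)$: by the cancellation above the objective is affine in $x_{-i}$, and $\prod_{j\neq i}\X_j$ is a product of simplices, so maximizing it is a linear program solvable in time polynomial in the total number of strategies; there are at most $\sum_{i\in V}n_i^2$ such triples $(i,\alpha,\beta)$, so the whole test is polynomial. The step I would be most careful about is the converse direction --- verifying that the telescoping cancellation indeed leaves an affine (not merely multilinear) function and that the pure-to-mixed and per-player compositions are legitimate; once that bookkeeping is done the statement follows. To use this for the \emph{rescaled} zero-sum games of interest here one runs the same test on the rescaled payoffs $\sconst_i u_i$, which again form a polymatrix game.
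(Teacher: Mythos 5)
The paper does not prove this statement at all: it is imported verbatim as Theorem~8 of \citet{CCDP16} and used purely as a black box to certify, in polynomial time, that a polymatrix game obtained from the reduction is (rescaled) constant-sum. So there is no in-paper argument to compare against; your reconstruction is a self-contained proof and it is essentially the standard one. The forward direction is immediate, and the cancellation argument is right: in $W(\beta,x_{-i})-W(\alpha,x_{-i})$ every edge not incident to $i$ contributes identically to both terms and drops out, leaving a constant (from the diagonal of $i$'s self-loop, if any) plus terms each linear in a single $x_j$, so the objective is affine and in fact separable over $j\neq i$; each of the polynomially many tests is a trivial linear program whose optimum sits at a vertex of the product of simplices. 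The two-sided inequality from the ordered pairs $(\alpha,\beta)$ and $(\beta,\alpha)$ then gives $W(\alpha,\cdot)\equiv W(\beta,\cdot)$, and iterating over players yields constancy.

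The one step you flagged as delicate is exactly where the real issue lies: the pure-to-mixed lifting is \emph{not} legitimate if the game has self-loops with a nontrivial symmetric part. The total payoff $\sum_{j\in V}u_j(x)$ contains $x_i^\top A^{ii}x_i$, which is quadratic in $x_i$ and therefore not equal to the convex combination $\sum_{\alpha\in\A_i}x_{i\alpha}W(\alpha,x_{-i})$. Concretely, a single player with self-loop $A^{11}=\I$ passes the test ($W(\alpha)=1$ for every pure $\alpha$) yet $\sum_j u_j(x_1)=\|x_1\|^2$ is not constant on the simplex, so the theorem as literally stated here (with the paper's self-loop-permitting definition of polymatrix games) is false without an extra hypothesis. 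The original result of \citet{CCDP16} is for polymatrix games without self-loops, where your multilinearity step is exact; in this paper's application the self-loops are required to be antisymmetric, so $x_i^\top A^{ii}x_i\equiv 0$ and the lifting again goes through. You should either restrict to the self-loop-free setting or add the antisymmetry assumption explicitly. Your closing remark about running the test on the rescaled payoffs $\sconst_i u_i$ matches exactly how the paper uses the result.
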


\subsection{Proof of Poincar\'e Recurrence in Rescaled Zero-Sum Polymatrix Games from Section~\ref{sec:recurrence}}
\label{appsecs:recurrence_proof}
We now provide the proofs of Lemma~\ref{l:divergence} and Lemma~\ref{t:invariant} from Section~\ref{sec:recurrence} in Appendix~\ref{appsecs:divergence_proof} and Appendix~\ref{appsecs:invariant_proof}, respectively. Recall that Lemma~\ref{l:divergence} and Lemma~\ref{t:invariant} show that the replicator dynamics are volume preserving and have bounded orbits in rescaled zero-sum polymatrix games, respectively. Moreover, as shown in Section~\ref{sec:recurrence} via the proof of Theorem~\ref{t:main}, Lemma~\ref{l:divergence} and Lemma~\ref{t:invariant} nearly immediately imply Theorem~\ref{t:main}, which guarantees the replicator dynamics are Poincar\'{e} recurrent in rescaled zero-sum polymatrix games with interior Nash equilibria.
\subsubsection{Proof of Lemma~\ref{l:divergence}}
\label{appsecs:divergence_proof}
We need to show 
\begin{equation*}
\sum_{i \in V}\sum_{\alpha \in \A_i}\frac{d }{dz_{i\alpha}}F_{i\alpha}(z)=0.
\end{equation*}
Recall from~\eqref{eq:zdynamics} that for each $\alpha \in \A_i$ and $i \in V$,
\begin{equation*}
F_{i\alpha}(z)
= \sum_{j \in V} \sum_{\beta \in \A_j} A^{ij}_{\alpha \beta}  \frac{e^{z_{j \beta}}}{\sum_{\ell \in \A_j}e^{z_{j \ell}}}-\sum_{j \in V} \sum_{\beta \in \A_j} A^{ij}_{1 \beta} \frac{e^{z_{j \beta}}}{\sum_{\ell \in \A_j} e^{z_{j \ell}}}.
\end{equation*}
It follows that for any agent $i \in V$, 
\begin{equation*}
\sum_{\alpha \in \A_i}\frac{d }{dz_{i\alpha}}F_{i\alpha}(z)
= \sum_{\alpha \in \A_i} \sum_{j \in V}\sum_{\beta \in \A_j}A^{ij}_{\alpha \beta} \frac{d}{dz_{i\alpha}}\frac{ e^{z_{j\beta}}}{\sum_{\ell\in \A_j} e^{z_{j\ell}}}- \sum_{\alpha \in \A_i} \sum_{j \in V}\sum_{\beta \in \A_j} A^{ij}_{1 \beta} \frac{d}{dz_{i\alpha}}\frac{ e^{z_{j\beta}}}{ \sum_{\ell \in \A_j} e^{z_{j\ell}}}.    
\end{equation*}
Moreover, observe that for $i \neq j$, 
\begin{equation*}
\frac{d}{dz_{i\alpha}} \frac{ 
e^{z_{j\beta}}}{\sum_{\ell \in \A_j} e^{z_{j \ell}}} =0.
\end{equation*}
Consequently, we get that
\begin{align*}
\sum_{\alpha \in \A_i}\frac{d }{dz_{i\alpha}}F_{i\alpha}(z)
= \sum_{\alpha \in \A_i} \sum_{\beta \in \A_i}A^{ii}_{\alpha \beta} \frac{d}{dz_{i\alpha}}\frac{ e^{z_{i\beta}}}{\sum_{\ell \in \A_i} e^{z_{i\ell}}}- \sum_{\alpha \in \A_i} \sum_{\beta \in \A_i} A^{ii}_{1 \beta} \frac{d}{dz_{i\alpha}}\frac{ e^{z_{i\beta}}}{ \sum_{\ell \in \A_i} e^{z_{i\ell}}}.
\end{align*}
We now separate each sum over $\beta \in \A_i$ into a pair of sums over $\beta\neq \alpha$ and $\beta=\alpha$ for $\alpha \in \A_i$ and any $i \in V$ to get that
\begin{align}
\sum_{\alpha \in \A_i}\frac{d }{dz_{i\alpha}}F_{i\alpha}(z)
&= \sum_{\alpha \in \A_i} \sum_{\beta \neq \alpha}A^{ii}_{\alpha \beta} \frac{d}{dz_{i\alpha}}\frac{ e^{z_{i\beta}}}{\sum_{\ell \in \A_i} e^{z_{i\ell}}}- \sum_{\alpha \in \A_i} A^{ii}_{\alpha \alpha} \frac{d}{dz_{i\alpha}}\frac{ e^{z_{i\alpha}}}{\sum_{\ell \in \A_i} e^{z_{i\ell}}}\notag \\& \qquad- \sum_{\alpha \in \A_i} \sum_{\beta \neq \alpha} A^{ii}_{1 \beta} \frac{d}{dz_{i\alpha}}\frac{ e^{z_{i\beta}}}{ \sum_{\ell \in \A_i} e^{z_{i\ell}}}- \sum_{\alpha \in \A_i} A^{ii}_{1 \alpha} \frac{d}{dz_{i\alpha}}\frac{ e^{z_{i\alpha}}}{ \sum_{\ell \in \A_i} e^{z_{i\ell}}}.
\label{eq:separated}
\end{align}
Recall that the self-loops are antisymmetric, which means that $A^{ii}_{\alpha \alpha}=0$ for any $\alpha \in \A_i$ and $i \in V$. From this property of the game class, 
\begin{equation*}
\sum_{\alpha \in \A_i} A^{ii}_{\alpha \alpha} \frac{d}{dz_{i\alpha}}\frac{ e^{z_{i\alpha}}}{\sum_{\ell \in \A_i} e^{z_{i\ell}}} = 0.
\end{equation*}
Accordingly, an equivalent form of~\eqref{eq:separated} is the expression
\begin{equation}
\sum_{\alpha \in \A_i}\frac{d }{dz_{i\alpha}}F_{i\alpha}(z)
= \sum_{\alpha \in \A_i} \sum_{\beta \neq \alpha}A^{ii}_{\alpha \beta} \frac{d}{dz_{i\alpha}}\frac{ e^{z_{i\beta}}}{\sum_{\ell \in \A_i} e^{z_{i\ell}}}- \sum_{\alpha \in \A_i} \sum_{\beta \neq \alpha} A^{ii}_{1 \beta} \frac{d}{dz_{i\alpha}}\frac{ e^{z_{i\beta}}}{ \sum_{\ell \in \A_i} e^{z_{i\ell}}} - \sum_{\alpha \in \A_i} A^{ii}_{1 \alpha} \frac{d}{dz_{i\alpha}}\frac{ e^{z_{i\alpha}}}{ \sum_{\ell \in \A_i} e^{z_{i\ell}}}.
\label{eq:separated_dropped}
\end{equation}
The derivatives in~\eqref{eq:separated_dropped} are given by
\begin{equation*}
\frac{d}{dz_{i\alpha}}\frac{ e^{z_{i\beta}}}{\sum_{\ell \in \A_i} e^{z_{i\ell}}} = 
\begin{cases}
\frac{\sum_{\ell \in \A_i}e^{z_{i\alpha} + z_{i\ell}} - e^{z_{i\alpha}+z_{i\alpha}}}{(\sum_{\ell \in \A_i} e^{z_{i\ell}})^2} , &  \alpha = \beta \\
-e^{z_{i\beta} + z_{i\alpha}}/(\sum_{\ell \in \A_i} e^{z_{i\ell}})^2, & \alpha \neq \beta.
\end{cases}
\end{equation*}
Evaluating the derivatives in~\eqref{eq:separated_dropped}, we get that
\begin{equation}
\begin{split}
\sum_{\alpha \in \A_i}\frac{d }{dz_{i\alpha}}F_{i\alpha}(z)&= -\sum_{\alpha \in \A_i} \sum_{\beta
\neq \alpha}A^{ii}_{\alpha\beta} 
\frac{e^{z_{i\beta} + z_{i\alpha}}}{(\sum_{\ell \in \A_i} e^{z_{i\ell}})^2}\\
&\quad + \sum_{\alpha \in \A_i} \sum_{\beta \neq \alpha} A^{ii}_{1\beta }
\frac{e^{z_{i\beta} + z_{i\alpha}}}{(\sum_{\ell \in \A_i} e^{z_{i\ell}})^2} -\sum_{\alpha \in \A_i}A_{1\alpha}^{ii} \frac{\sum_{\beta\in \A_i}e^{z_{i\beta}+z_{i\alpha}} - e^{z_{i\alpha}+z_{i\alpha}}}{(\sum_{\ell \in \A_{i}}e^{z_{i\ell}})^2}.
\end{split}
\label{eq:precombine}
\end{equation}

Moreover, from a series of algebraic manipulations, we find that
\begin{align*}
&\sum_{\alpha \in \A_i} \sum_{\beta \neq \alpha} A^{ii}_{1\beta }
\frac{e^{z_{i\beta} + z_{i\alpha}}}{(\sum_{\ell \in \A_i} e^{z_{i\ell}})^2}-\sum_{\alpha \in \A_i} \A_{1\alpha}^{ii} \frac{\sum_{\beta\in \A_i}e^{z_{i\beta}+z_{i\alpha}} - e^{z_{i\alpha}+z_{i\alpha}}}{(\sum_{\ell \in \A_{i}}e^{z_{i\ell}})^2} \\
&= \sum_{\alpha \in \A_i} \sum_{\beta \neq \alpha} A^{ii}_{1\beta }
\frac{e^{z_{i\beta} + z_{i\alpha}}}{(\sum_{\ell \in \A_i} e^{z_{i\ell}})^2}+ \sum_{\alpha \in \A_i}A_{1\alpha}^{ii} \frac{e^{z_{i\alpha}+z_{i\alpha}}}{(\sum_{\ell \in \A_{i}}e^{z_{i\ell}})^2} - \sum_{\alpha \in \A_i}\sum_{\beta\in \A_i}A_{1\alpha}^{ii} \frac{e^{z_{i\beta}+z_{i\alpha}}}{(\sum_{\ell \in \A_i}e^{z_{i\ell}})^2} \\
&= \sum_{\alpha \in \A_i} \sum_{\beta \in  \A_i} A^{ii}_{1\beta }
\frac{e^{z_{i\beta} + z_{i\alpha}}}{(\sum_{\ell \in \A_i} e^{z_{i\ell}})^2}- \sum_{\alpha \in \A_i}\sum_{\beta\in \A_i}A_{1\alpha}^{ii} \frac{e^{z_{i\beta}+z_{i\alpha}}}{(\sum_{\ell \in \A_i}e^{z_{i\ell}})^2} \\
&= 0. 
\end{align*}
It follows that~\eqref{eq:precombine} is equivalent to
\begin{equation}
\begin{split}
\sum_{\alpha \in \A_i}\frac{d }{dz_{i\alpha}}F_{i\alpha}(z) &= -\sum_{\alpha \in \A_i} \sum_{\beta
\neq \alpha}A^{ii}_{\alpha\beta} 
\frac{e^{z_{i\beta} + z_{i\alpha}}}{(\sum_{\ell \in \A_i} e^{z_{i\ell}})^2}.
\end{split}
\label{eq:prefinal}
\end{equation}
Finally, we reorganize the sums in~\eqref{eq:prefinal} over pairs $(\alpha,\beta)$ such that $\beta\neq \alpha$ and invoke the fact that each matrix $A^{ii}$ is antisymmetric (meaning that $(A^{ii})^\top = - A^{ii}$) to conclude
\begin{equation}
\sum_{\alpha \in \A_i}\frac{d }{dz_{i\alpha}}F_{i\alpha}(z) = -\sum_{\alpha \in \A_i} \sum_{\beta
\neq \alpha}A^{ii}_{\alpha\beta} 
\frac{e^{z_{i\beta} + z_{i\alpha}}}{(\sum_{\ell \in \A_i} e^{z_{i\ell}})^2} 
= \sum_{(\alpha,\beta):~\beta \neq \alpha}
(- A_{\alpha \beta}^{ii} -A_{\beta \alpha}^{ii}) \frac{e^{z_{i\alpha}} + e^{z_{i\beta}}}{(\sum_{\ell \in \A_i}e^{z_{i\ell}})^2} \notag 
=0.
 \label{eq:final_zero}
\end{equation}
So, by summing \eqref{eq:final_zero} over $i\in V$, we obtain \begin{equation*}
\sum_{i \in V}\sum_{\alpha \in \A_i}\frac{d }{dz_{i\alpha}}F_{i\alpha}(z)=0.
\end{equation*}

\subsubsection{Proof of Lemma~\ref{t:invariant}}
\label{appsecs:invariant_proof}
Consider an $N$-player rescaled zero-sum polymatrix game with an interior Nash equilibrium such that for positive coefficients $\{\sconst\}_{i\in V}$, $\sum_{i\in V} \sconst_i  u_i(x) =0 $ for any $x \in \X$. We need to show the function
\begin{equation}
\Phi(t) = \sum_{i \in V} \sum_{\alpha \in \A_i}\sconst_i  x_{i\alpha}^\ast \ln x_{i\alpha}
\label{eq:phi}
\end{equation}
is time invariant for any trajectory generated by the replicator dynamics, meaning that $\Phi(t) =\Phi(0)$ for all $t\geq 0$.

In order to prove $\Phi(t)$ as given in~\eqref{eq:phi} is time-invariant, we show that the time derivative of the function is equal to zero. 
To begin, recall the form of the replicator dynamics from~\eqref{eq:replicator} given by
\begin{equation}
\dot{x}_{i \alpha} = x_{i \alpha} 
(u_{i\alpha}(x) - u_i(x)),\ \ \forall \alpha\in \A_i.
\label{eq:replicator_repeat}
\end{equation}
We simplify the time derivative of $\Phi(t)$ using the structure of the dynamics given in~\eqref{eq:replicator_repeat} as follows:
\begin{align}
\frac{d \Phi(t)}{dt} &= \sum_{i \in V}\sconst_i\sum_{\alpha \in \A_i} x_{i\alpha}^{\ast}\frac{d\ln x_{i\alpha}}{dt} \notag \\
&= \sum_{i \in V}\sconst_i\sum_{\alpha \in \A_i} x_{i\alpha}^{\ast}\frac{\dot{x}_{i\alpha}}{x_{i\alpha}} \notag  \\
&= \sum_{i \in V}\sconst_i\sum_{\alpha \in \A_i} x_{i\alpha}^{\ast}(u_{i\alpha}(x) - u_i(x))\notag \\
&= \sum_{i \in V}\sconst_i\sum_{\alpha \in \A_i} x_{i\alpha}^{\ast} u_{i\alpha}(x) - \sum_{i\in V}\eta_i\sum_{\alpha \in \A_i}x_{i\alpha}^{\ast}u_i(x). \label{eq:invariant_simplify}
\end{align}
Let $E' = \{(i, j): i\neq j, (i, j)\in E\}$ denote the edge set of the polymatrix game excluding self-loops. In the remainder of the proof, denote by $e_{\alpha}$ a one-hot vector of appropriate dimension containing all zeros, except for a one in the $\alpha$--th entry. Furthermore, recall $u_{i\alpha}(x)$ denotes the utility of player $i\in V$ for playing the pure strategy $\alpha \in \A_i$, which can be represented by $x_i=e_{\alpha}$, when the rest of the agents play $x_{-i}$.
Then, from the fact that $A^{ii}_{\alpha\alpha} = 0$ for all $\alpha \in \A_i$ and $i \in V$, we obtain
\begin{align}
\sum_{i \in V}\sconst_i\sum_{\alpha \in \A_i} x_{i\alpha}^{\ast} u_{i\alpha}(x) &= \sum_{i \in V}\sconst_i\sum_{\alpha \in \A_i} x_{i\alpha}^{\ast}  \sum_{j: (i, j)\in E}  (A^{ij}x_j)_{\alpha}\notag  \\
&= \sum_{i\in V}\eta_i\sum_{\alpha \in \A_i} x_{i\alpha}^\ast (A^{ii}_{\alpha}e_\alpha)_\alpha+\sum_{i \in V}\sconst_i\sum_{\alpha \in \A_i} x_{i\alpha}^{\ast}  \sum_{j: (i, j)\in E'}  (A^{ij}x_j)_{\alpha} \notag  \\
&= \sum_{i\in V}\eta_i\sum_{\alpha \in \A_i} x_{i\alpha}^\ast A^{ii}_{\alpha\alpha}+\sum_{i \in V}\sconst_i\sum_{\alpha \in \A_i} x_{i\alpha}^{\ast}  \sum_{j: (i, j)\in E'}  (A^{ij}x_j)_{\alpha} \notag  \\
&= \sum_{i \in V}\sconst_i (x_{i}^{\ast})^{\top}  \sum_{j: (i, j)\in E'}  A^{ij}x_j.
\label{eq:drop_loop}
\end{align}

Moreover, since $\sum_{\alpha \in \A_i}x_{i\alpha}^{\ast}=1$ for each $i \in V$ and $\sum_{i\in V}\eta_iu_i(x)=0$ for any strategy profile $x \in \X$ from the game being rescaled zero-sum, we get that
\begin{equation}
\sum_{i\in V}\eta_i\sum_{\alpha \in \A_i}x_{i\alpha}^{\ast}u_i(x)= \sum_{i\in V}\eta_iu_i(x)\sum_{\alpha \in \A_i}x_{i\alpha}^{\ast}= \sum_{i\in V}\eta_iu_i(x)=0.
\label{eq:zero_trick}
\end{equation}
Combining~\eqref{eq:invariant_simplify},~\eqref{eq:drop_loop}, and~\eqref{eq:zero_trick}, we have
\begin{equation}
\frac{d \Phi(t)}{dt}  = \sum_{i \in V} \sconst_i (x_i^*)^\top \sum_{j:(i,j) \in E'}A^{ij}x_j.
\label{eq:pre_theorem1}
\end{equation}
For the interior Nash equilibrium $x^{\ast}$ under consideration,
\begin{align}
\sum_{i\in V}\eta_i u_i(x^\ast) &= \sum_{i\in V}\eta_i (x_i^\ast)^\top \sum_{j:(i, j)\in E}A^{ij}x_j^\ast \notag \\
&= \sum_{i\in V}\eta_i (x_i^\ast)^\top \sum_{j:(i, j)\in E'}A^{ij}x_j^\ast \label{eq:rescale_zero} \\
&=0. \label{eq:rescale_zero2}
\end{align}
Note that~\eqref{eq:rescale_zero} holds from the fact that $(x_i^{\ast})^{\top}A^{ii} x_i^{\ast}=0$ for all $x_i^{\ast}\in \X_i$ and $i \in V$ since the self-loops are antisymmetric and~\eqref{eq:rescale_zero2} as a result of the polymatrix game being rescaled zero-sum.
We continue by subtracting~\eqref{eq:rescale_zero} from~\eqref{eq:pre_theorem1} since it is equal to zero and obtain
\begin{align}
\frac{d \Phi(t)}{dt} = \sum_{i \in V} \sconst_i (x_i^*)^\top \sum_{j:(i,j) \in E'}A^{ij}x_j-\sum_{i\in V}\eta_i (x_i^\ast)^\top \sum_{j:(i, j)\in E'}A^{ij}x_j^\ast=\sum_{i\in V}\eta_i\sum_{j:(i, j)\in E'}(x_i^\ast)^\top A^{ij}(x_j-x_j^\ast).
\label{eq:pre_theorem2}
\end{align}
We now prove that~\eqref{eq:pre_theorem2} is equal to zero. To do so, we rely on the results of \citet[Section 4]{cai2011minmax}, who show that any zero-sum polymatrix game without self-loops can be transformed to a payoff equivalent, pairwise constant-sum game. 
Indeed, the results of~\citet{cai2011minmax} apply to rescaled zero-sum polymatrix games since for any strategy profile $x \in \X$, 
\begin{align*}
\sum_{i \in V}\eta_i u_i(x) = \sum_{i\in V}x_i^\top \sum_{j:(i, j)\in E} \eta_iA^{ij}  x_j= \sum_{i\in V}x_i^\top \sum_{j:(i, j)\in E'} \eta_iA^{ij}  x_j  = 0.
\end{align*}
This means that for each edge $(i,j) \in E'$ there exists a matrix $B^{ij}$ such that the following properties hold (see Lemma~3.1, 3.2, and 3.4, respectively in \cite{cai2011minmax}):
\begin{description}
    \item[Property 1.] $\eta_i A^{ij}_{\alpha \beta} - \eta_i A^{ij}_{\alpha \gamma} = B^{ij}_{\alpha \beta} - B^{ij}_{\alpha \gamma}$ for any $\alpha \in \A_i$ and $\beta,\gamma \in \A_j$.

    \item[Property 2.] $B^{ij} + (B^{ji})^\top = c_{ij} \cdot \onev_{n_i \times n_j}$, where $\onev_{n_i \times n_j}$ is an $n_i\times n_j$ matrix of all ones. 
    
    \item[Property 3.] In every joint pure strategy profile, every player $i \in V$ has the same utility in the game defined by the payoff matrices $\{\eta_i A^{ij}\}_{(i,j)\in E'}$ as in the game defined by the payoff matrices $\{B^{ij}\}_{(i,j)\in E'}$ . 
\end{description}
\smallskip
\smallskip
\smallskip
\smallskip
\noindent Fixing a strategy $\gamma \in \A_j$, we can express the summand of~\eqref{eq:pre_theorem2} using Property 1~as follows:
\begin{align}
(x_i^\ast)^\top  \eta_i A^{ij}  (x_j-x_j^\ast) &= 
\sum_{\alpha \in \A_i}\sum_{\beta \in \A_j}
x_{i\alpha}^\ast \eta_i A^{ij}_{\alpha \beta}(x_{j\beta}-x_{j\beta}^\ast) \notag \\
&= 
\sum_{\alpha \in \A_i}\sum_{\beta \in \A_j}
x_{i\alpha}^\ast \big( B^{ij}_{\alpha \beta} - B^{ij}_{\alpha \gamma} +  \eta_i A^{ij}_{\alpha \gamma} \big) (x_{j_\beta}-x_{j_\beta}^\ast) \notag \\
&= (x_i^\ast)^\top  B^{ij}  (x_j-x_j^\ast) + \sum_{\alpha \in \A_i} x_{i\alpha}\left(\eta_i A^{ij}_{\alpha \gamma}-B^{ij}_{\alpha \gamma} \right) \sum_{\beta \in \A_j}(x_{j\beta}-x_{j\beta}^\ast). 
\label{eq:property1_last}
\end{align}
Moreover, observe that since both $x_j$ and $x_j^{\ast}$ are on the simplex, $\sum_{\beta \in \A_j}(x_{j\beta}-x_{j\beta}^\ast)=0$, and consequently
\begin{equation}
\sum_{\alpha \in \A_i} x_{i\alpha}\left(- B^{ij}_{\alpha \gamma} +  \eta_i A^{ij}_{\alpha \gamma} \right) \sum_{\beta \in \A_j}(x_{j\beta}-x_{j\beta}^\ast) = 0.
\label{eq:property1_last2}
\end{equation}
Then, relating~\eqref{eq:property1_last} and~\eqref{eq:property1_last2}, we obtain
\begin{equation*}
(x_i^\ast)^\top  \eta_i A^{ij}  (x_j-x_j^\ast) =(x_i^\ast)^\top  B^{ij}  (x_j-x_j^\ast).
\end{equation*}
As a result, \eqref{eq:pre_theorem2} is equivalently
\begin{equation}
\frac{d \Phi(t)}{dt}
=\sum_{i\in V} \sum_{j:(i, j)\in E'}(x_i^\ast)^\top  \eta_i A^{ij}  (x_j-x_j^\ast).
\end{equation}
Then, swapping the sum indexing and taking the transpose of the quadratic form $(x_i^\ast)^\top B^{ij}(x_j-x_j^\ast)$, 
\begin{align}
\frac{d \Phi(t)}{dt} &= \sum_{j \in V}\sum_{i:(j, i)\in E'}(x_i^\ast)^\top  B^{ij}  (x_j - x_j^\ast)\notag\\
&= \sum_{i \in V}\sum_{i:(j, i)\in E'}(x_j - x_j^\ast)^\top  (B^{ij})^\top x_i^\ast. \notag 
\end{align}
We now invoke Property 2 to replace $(B^{ij})^\top$ with $c^{ji}\onev_{n_j\times n_i}-B^{ji}$ in the previous equation, which results in 
\begin{equation}
\frac{d \Phi(t)}{dt}
= \sum_{j \in V}\sum_{i:(j, i)\in E'}(x_j - x_j^\ast)^\top  \left (c^{ji}  \onev_{n_j \times n_i}  - B^{ji} \right )  x_i^\ast
\label{eq:sumindex}
\end{equation}
For any $x_j \in \X_j$, $x_j^{\ast} \in \X_j$ and $x_i^{\ast} \in \X_i$, we have
\begin{equation*}
c^{ji} (x_j - x_j^\ast)^\top \onev_{n_j \times n_i}  x_i^\ast = c^{ji} (x_j - x_j^\ast)^\top \onev_{n_j} = c^{ji}-c^{ji}  = 0,
\end{equation*}
since $\X_j=\Delta^{n_j}$ and $\X_i=\Delta^{n_i}$ so that $\sum_{\alpha\in \A_j}x_{j\alpha} = \sum_{\alpha\in \A_j}x_{j\alpha}^{\ast} = \sum_{\alpha\in \A_i}x_{i\alpha}^{\ast} =1$.
Accordingly, we simplify~\eqref{eq:sumindex} and get 
\begin{equation}
\frac{d \Phi(t)}{dt} = -\sum_{j \in V}\sum_{i:(j, i)\in E'}(x_j - x_j^\ast)^\top  B^{ji}  x_i^\ast.
\label{eq:sumindex2}
\end{equation}

Following a similar argument as above, we analyze the summand in \eqref{eq:sumindex2} for some $j \in V$. Using 
 Property 1 and fixing any strategy $\gamma_i\in \A_i$ for each $i\in V\setminus \{j\}$, we have that
\begin{align}
 \sum_{i:(j,i)\in E'}(x_j-x_j^\ast)^\top B^{ji}x_i^\ast&=\sum_{i:(j,i)\in E'}\sum_{\alpha\in \A_j}\sum_{\beta\in \A_i}  z_{j\alpha} B^{ji}_{\alpha\beta}x_{i\beta}^\ast \notag \\
&=\sum_{i:(j,i)\in E'}\sum_{\alpha\in \A_j}\sum_{\beta\in \A_i} z_{j\alpha} \big(\eta_jA^{ji}_{\alpha\beta}-\eta_jA^{ji}_{\alpha\gamma_i}+B^{ji}_{\alpha\gamma_i}\big)x_{i\beta}^\ast \notag \\
&=\sum_{i:(j,i)\in E'}\eta_j(x_j-x_j^\ast)^\top A^{ji}x_i^\ast  +\sum_{i:(j,i)\in E'}\sum_{\alpha\in \A_j}\sum_{\beta\in\A_i}z_{j\alpha} (-\eta_jA^{ji}_{\alpha\gamma_i}+B_{\alpha\gamma_i}^{ji})x_{i\beta}^\ast.
\label{eq:pre_sep}
\end{align}
where $z_{j\alpha}:=x_{j\alpha}-x_{j\alpha}^\ast$.
We now examine the last term in the equation overhead, and use the fact $\sum_{\beta\in\A_i}x_{i\beta}^\ast=1$ since $x_i^{\ast}\in \X_i^{\ast}=\Delta^{n_i-1}$ to get that
\begin{align}
\sum_{i:(j,i)\in E'}\sum_{\alpha\in \A_j}\sum_{\beta\in\A_i}(x_{j\alpha}-x_{j\alpha}^\ast) (-\eta_jA^{ji}_{\alpha\gamma_i}+B_{\alpha\gamma_i}^{ji})x_{i\beta}^\ast 
&= \sum_{i:(j,i)\in E'}\sum_{\alpha\in \A_j}(x_{j\alpha}-x_{j\alpha}^\ast) (-\eta_jA^{ji}_{\alpha\gamma_i}+B_{\alpha\gamma_i}^{ji})\sum_{\beta\in\A_i}x_{i\beta}^\ast \notag \\
&= \sum_{\alpha\in \A_j}(x_{j\alpha}-x_{j\alpha}^\ast) \sum_{i:(j,i)\in E'}(-\eta_jA^{ji}_{\alpha\gamma_i}+B_{\alpha\gamma_i}^{ji}) 
\label{eq:pre_zero}
\end{align}
For each $\alpha \in \A_j$, the terms $\sum_{i:(j,i)\in E'}\eta_jA^{ji}_{\alpha\gamma_i}$ and $\sum_{i:(j,i)\in E'}B^{ji}_{\alpha\gamma_i}$ give the utility of player $j \in V$ in the games defined by $\{\eta_j A^{ji}\}_{(j,i)\in E'}$ and $\{B^{ji}\}_{(j,i)\in E'}$ under a pure strategy profile such that agent $j$ plays $\alpha$ and each other agent $i \in V\setminus \{j\}$ plays some $\gamma_i \in \A_i$. From Property 3, we conclude for each $\alpha \in \A_j$ that
\begin{equation}
 \sum_{i:(j,i)\in E'}(-\eta_jA^{ji}_{\alpha\gamma_i}+B_{\alpha\gamma_i}^{ji}) =0.
\label{eq:zero}
\end{equation}
Relating~\eqref{eq:zero} back to~\eqref{eq:pre_zero} and then~\eqref{eq:pre_sep}, for each $j \in V$, we obtain
\begin{equation}
\sum_{i:(j,i)\in E'}(x_j-x_j^\ast)^\top B^{ji}x_i^\ast = \sum_{i:(j,i)\in E'}\eta_j(x_j-x_j^\ast)^\top A^{ji}x_i^\ast.
\label{eq:finaleq}
\end{equation}

Finally, combining~\eqref{eq:finaleq} and~\eqref{eq:sumindex2}, we have 
\begin{equation*}
\frac{d \Phi(t)}{dt}
=-\sum_{j \in V}\sum_{i:(j, i)\in E}\eta_j(x_j-x_j^\ast)^\top A^{ji}x_i^\ast=0
\end{equation*}
where the final equality holds since $x^\ast$ is an interior Nash equilibrium, which means $u_{j\alpha}(x^\ast)=u_j(x^\ast)$ for all strategies $\alpha\in \A_j$ and any linear combination thereof.
Consequently, we conclude that $\Phi(t) =\Phi(0)$ for all $t\geq 0$.

\paragraph{Orbits remain bounded away from the boundary.} To complete the proof, we use the constant of motion to show that the orbits of the replicator dynamics for rescaled zero sum polymatrix games remain bounded away from the boundary. Indeed, let $x$ be an interior point which is not an equilibrium. That is, each $x_i\in \mathrm{int}(\Delta^{n-1})$. Let $\gamma(x)$ be the forward orbit of $x$ i.e., \[\gamma(x)=\{\phi^t(x):\ t\geq 0\}\] Then, Lemma~\ref{t:invariant} implies that for any $y\in \gamma(x)$, \[-c=\Phi(t)=\sum_{i\in V}\sum_{\alpha\in\A_i}\eta_ix_{i\alpha}^\ast\ln x_{i\alpha}=\sum_{i\in V}\sum_{\alpha\in\A_i}\eta_ix_{i\alpha}^\ast\ln y_{i\alpha}<0\] since $\Phi(t)$ is a constant of motion. 
For any $i$ and any $y\in \gamma(x)$, $\sum_{\alpha\in \A_i}\eta_ix_{i\alpha}^\ast\ln y_{i\alpha}\in [-c,0]$, since $\sum_{\alpha\in \A_i}\eta_ix_{i\alpha}^\ast\ln y_{i\alpha}\leq 0$ for any $y$ and any player $i$. Hence, for any $\beta\in \A_i$,
\[-c\leq -c-\sum_{\alpha\neq \beta}x_{i\alpha}^\ast\ln y_{i\alpha}<x_{i\beta}^\ast\ln y_{i\beta}\]
since $\sum_{\alpha\neq \beta}x_{i\alpha}^\ast\ln y_{i\alpha}\leq 0$. This implies that
\[y_{i\beta}\geq \exp\big(-c/x_{i\beta}^\ast\big).\]
Let \[\varepsilon=\min_{i\in V,\beta\in \A_i}\exp\big(-c/x_{i\beta}^\ast\big)\]
so that for any $y\in \gamma(x)$ and any player $i\in V$ and strategy $\beta\in \A_i$, $y_{i\beta}\geq \varepsilon>0$ which, in turn, implies that $\gamma(x)$ is bounded away from the boundary.

\paragraph{KL-Divergence Constant of Motion.}
The constant of motion from Lemma~\ref{t:invariant} is equivalently given as 
\[\Phi(x)=-\sum_{i\in V}\eta_i(\mathrm{KL}(x^*_i||x_i)-\mathcal{I}(x^*_i))\]
where $\mathcal{I}(\cdot)$ denotes the entropy and $\mathrm{KL}(\cdot||\cdot)$ denotes the Kullback-Leibler (KL) divergence.  Since $\sum_{i\in V}\eta_i\mathcal{I}(x^*_i))$ is a constant, it does not change the time-invariant property, and hence the weighted sum of KL divergences is itself a constant of motion. 
\begin{corollary}
Under the assumptions of Lemma~\ref{t:invariant}, 
$\Psi(t)=\sum_{i\in V}\eta_i(\mathrm{KL}(x^*_i||x_i)$ is also a constant of motion.
\label{cor:newconstant}
\end{corollary}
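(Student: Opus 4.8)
The plan is to deduce this directly from Lemma~\ref{t:invariant}, using only the observation that $\Psi(t)$ and the constant of motion $\Phi(t)$ differ by a quantity with no dependence on $t$. First I would expand the Kullback--Leibler divergence player-by-player: for each $i \in V$ and each $t \ge 0$,
\[
\mathrm{KL}(x^*_i||x_i(t)) = \sum_{\alpha \in \A_i} x^*_{i\alpha}\ln\frac{x^*_{i\alpha}}{x_{i\alpha}(t)} = \sum_{\alpha\in\A_i} x^*_{i\alpha}\ln x^*_{i\alpha} - \sum_{\alpha\in\A_i} x^*_{i\alpha}\ln x_{i\alpha}(t) .
\]
This expansion is legitimate for all $t$ precisely because Lemma~\ref{t:invariant} also asserts that an orbit started in the interior of $\X$ stays bounded away from the boundary, so each $x_{i\alpha}(t) > 0$ and every logarithm above is finite along the whole trajectory.

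Next I would multiply through by $\eta_i$, sum over $i \in V$, and recognize the second double sum as $\Phi(t)$:
\[
\Psi(t) = \sum_{i\in V}\eta_i\,\mathrm{KL}(x^*_i||x_i(t)) = \Big(\sum_{i\in V}\eta_i\sum_{\alpha\in\A_i} x^*_{i\alpha}\ln x^*_{i\alpha}\Big) - \Phi(t),
\]
where $\Phi(t) = \sum_{i\in V}\sum_{\alpha\in\A_i}\eta_i x^*_{i\alpha}\ln x_{i\alpha}(t)$ is exactly the function shown time-invariant in Lemma~\ref{t:invariant}. The bracketed term depends only on the fixed interior Nash equilibrium $x^*$ --- it is the $\eta$-weighted quantity $\sum_{i\in V}\eta_i\mathcal{I}(x^*_i)$ appearing in the identity $\Phi(x) = -\sum_{i\in V}\eta_i(\mathrm{KL}(x^*_i||x_i) - \mathcal{I}(x^*_i))$ recorded just above the corollary --- hence it is a genuine constant, call it $C$.

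Finally I would conclude: $\Psi(t) = C - \Phi(t)$ with $C$ independent of $t$ and $\Phi(t) \equiv \Phi(0)$ by Lemma~\ref{t:invariant}, so $\Psi(t) = C - \Phi(0) = \Psi(0)$ for all $t \ge 0$. There is no substantive obstacle here; the corollary is a bookkeeping consequence of the lemma once the interior-invariance part of Lemma~\ref{t:invariant} is invoked to make $\Psi(t)$ well defined along the entire orbit. The only point I would flag is that $C$ is generally nonzero, so $\Psi$ and $\Phi$ are both constants of motion but take different values along a given trajectory; if desired, one can instead phrase everything through the identity above and read the corollary off immediately, since dropping the $t$-independent entropy term from a constant of motion leaves a constant of motion.
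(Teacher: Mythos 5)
Your proposal is correct and takes essentially the same route as the paper: both decompose $\Psi(t)$ as a $t$-independent constant (the weighted sum $\sum_{i\in V}\eta_i\sum_{\alpha\in\A_i}x^*_{i\alpha}\ln x^*_{i\alpha}$, i.e.\ the entropy-type term) minus $\Phi(t)$, and then invoke the time-invariance of $\Phi$ from Lemma~\ref{t:invariant}. Your additional remark that the interior-invariance part of Lemma~\ref{t:invariant} is what keeps every $\ln x_{i\alpha}(t)$ finite along the orbit is a sensible bit of extra care, but it does not change the argument.
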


\subsection{Proofs of Time-Average Convergence, Equilibrium Computation, \& Bounded Regret from Section~\ref{sec:timeaverage}}
\label{appsecs:regret}
We now provide the proofs of Theorem~\ref{t:avg1}, Theorem~\ref{thm:lp}, and Proposition~\ref{thm:regretbdd} from Section~\ref{sec:timeaverage}. Appendix~\ref{appsecs:average_proof} contains the proof of Theorem~\ref{t:avg1}, which shows the time-average equilibria and utility convergence of the replicator dynamics in rescaled zero-sum polymatrix games. The proof of Theorem~\ref{thm:lp}, which provides a linear program to compute Nash equilibrium in rescaled zero-sum polymatrix games is in Appendix~\ref{appsecs:lp_proof}. Finally, the proof of Proposition~\ref{thm:regretbdd}, which shows that the replicator dynamics achieve bounded regret, is given in Appendix~\ref{appsecs:regret_proof}. 
\subsubsection{Proof of Theorem~\ref{t:avg1}}
\label{appsecs:average_proof}

Let $x^\ast$ denote the unique Nash equilibrium of the game. Recall that the trajectory $x(t)$ remains on the interior of the simplex for all $t\geq 0$ as a result of Lemma~\ref{t:invariant}. Integrating the replicator dynamics from~\eqref{eq:replicator} given by
\[\dot{x}_{i \alpha}(t) = x_{i \alpha}(t) 
(u_{i\alpha}(x(t)) - u_i(x(t))\]
for each agent $i \in V$ and each strategy $\alpha \in \A_i$, we obtain
\begin{equation*}
   \frac{1}{T}\int_{x(0)}^{x(T)} \frac{1}{x_{i\alpha}(\tau)} d x(\tau) =  \frac{1}{T}\int_{0}^T \left( u_{i\alpha}(x(\tau)) - u_{i}(x(\tau))
\right) d \tau .
\end{equation*}
Furthermore, 
\begin{equation*}
      \frac{1}{T}\int_{x(0)}^{x(T)} \frac{1}{x_{i\alpha}(\tau)} d x(\tau)=\frac{1}{T}\left( \log x_{i\alpha}(T) - \log x_{i\alpha}(0) \right) 
\end{equation*}
so that
\begin{equation}
 \frac{1}{T}\int_{0}^T \left( u_{i\alpha}(x(\tau)) - u_{i}(x(\tau))
\right) d \tau =\frac{1}{T}\left( \log x_{i\alpha}(T) - \log x_{i\alpha}(0) \right).  \label{eq:prelimit}
\end{equation}

 Define
\[z_{i\alpha}(T)=\frac{1}{T}\int_0^Tx_{i\alpha}(\tau)\ d\tau.\]
Clearly $z_{i\alpha}(T)$ is bounded for all $T$ since $x_{i\alpha}(T)$ remains bounded. Moreover, the bounds on $z_{i\alpha}(T)$ are the same as those on $x_{i\alpha}(T)$. Consider any sequence $T_k$ converging to infinity. The Bolzano–Weierstrass theorem guarantees that the bounded sequence $z_{i\alpha}(T_k)$ admits a convergent subsequence $z_{i\alpha}(T_{k_\ell})$
such that $z_{i\alpha}(T_{k_\ell})$ converges towards some limit which we denote by $\bar{x}_{i\alpha}$. Since we can repeat this argument for all $i\in V$ and all $\alpha\in \A_i$, let $\bar{x}_i=(\bar{x}_1, \ldots, \bar{x}_{n_i})$ for each $i\in V$. 

The sequences $\log( x_{i\alpha}(T_{k}))-\log(x_{i\alpha}(0))$ are also bounded.  Passing to the limit in \eqref{eq:prelimit} and using the fact that $x_{i\alpha}(t)$ remains bounded away from zero for all $t\geq 0$, for each $i \in V$, we have that
\begin{equation}\lim_{T\to \infty}\frac{1}{T}\int_{0}^T \left( u_{i\alpha}(x(\tau)) - u_{i}(x(\tau))
\right) d \tau=0, \ \ \forall \alpha\in \A_i.\label{eq:passingtolimit}
\end{equation}
Rearranging \eqref{eq:passingtolimit}, for each $i \in V$, we have that
\begin{align}
  \lim_{T\to \infty}\frac{1}{T}\int_{0}^T   u_{i}(x(\tau))d\tau &=\lim_{T\to \infty}\frac{1}{T}\int_{0}^T u_{i\alpha}(x(\tau))d\tau, \notag \ \ \forall \alpha\in \A_i\\
  &=\lim_{T\to \infty}\frac{1}{T}\int_{0}^T\sum_{j:(i,j)\in E}(A^{ij}x_j(\tau))_\alpha d\tau, \notag \ \ \forall \alpha\in \A_i\\
  &=\sum_{j:(i,j)\in E}(A^{ij}\bar{x}_j)_\alpha, \ \ \forall \alpha \in \A_i,\label{eq:limsumtwo}
\end{align}
where the last equality follows from linearity of the integral, finiteness of the sum, and the well-defined limit. Hence, weighting by $\bar{x}_{i\alpha}$ and summing across $\alpha\in \A_i$, we have that 
\begin{align*}
    u_i(\bar{x})&=\sum_{\alpha\in\A_i}\bar{x}_{i\alpha} \sum_{j:(i,j)\in E}(A^{ij}\bar{x}_j)_\alpha\\
    &=\sum_{\alpha\in\A_i}\bar{x}_{i\alpha}\left( \lim_{T\to \infty}\frac{1}{T}\int_{0}^T   u_{i}(x(\tau))d\tau\right)\\
    &=\lim_{T\to \infty}\frac{1}{T}\int_{0}^T   u_{i}(x(\tau))d\tau
\end{align*}
where the last equality holds since $\sum_{\alpha\in \A_i}\bar{x}_{i\alpha}=1$.
In turn, the above implies that
\[u_i(\bar{x})=\sum_{j:(i,j)\in E}(A^{ij}\bar{x}_j)_\alpha=u_{i\alpha}(\bar{x}), \ \forall \alpha\in \A_i\]
so that $\bar{x}=(\bar{x}_1, \ldots, \bar{x}_N)$ is a Nash Equilibrium.  Since there exists a unique Nash equilibrium by assumption, we have that $\bar{x} = x^\ast$ which proves $(i)$. Combining this fact with \eqref{eq:limsumtwo}, 
we have that
\[\lim_{T\rightarrow \infty} \frac{1}{T}\int_{0}^{T} u_{i}(x(\tau))d\tau = u_{i\alpha}(x^\ast) = u_{i}(x^\ast)\] 
which proves $(ii)$.

\subsubsection{Proof of Theorem~\ref{thm:lp}}
\label{appsecs:lp_proof}
Let $\mathrm{OPT}$ denote the optimal value of the linear program
\begin{equation}
\begin{split}
\min_{x \in \X} \quad&   \sum_{i=1}^{n} \sconst_iv_i  \\
\text{s.t} \quad  &v_i \geq u_{i\alpha}(x),\ \forall \ i \in V, \  \forall\  \alpha \in \A_i.
\end{split}
\label{eq:lpproof}
\end{equation}
We begin by proving that $\mathrm{OPT} \leq 0$. Since a Nash equilibrium always exists~\citep{nash1951non}, there exists a strategy profile $x$ such that $\max_{\alpha\in \A_i}u_{i\alpha}(x)=u_i(x)$. That is,
\begin{equation} \max_{\alpha \in \A_i}\sum_{j:(i,j)\in E} (A^{ij} x_j)_\alpha = \sum_{\alpha \in \A_i} x_{i\alpha}  \sum_{j:(i,j)\in E} (A^{ij} x_j)_\alpha, \ \forall \ i\in V.\label{eq:feasiblelp}
\end{equation}
Let $v_i = \max_{\alpha \in \A_i}\sum_{j:(i,j)\in E} (A^{ij} x_j)_\alpha$ for all $i \in V$. Then, the pair of vectors $(v,x)$ forms a feasible solution for the linear program in~\eqref{eq:lpproof}. As a result, using~\eqref{eq:feasiblelp}, we have that
\begin{equation*}
 \mathrm{OPT} \leq \sum_{i \in V} \sconst_i v_i = 
 \sum_{i \in V}\sconst_i \sum_{\alpha \in \A_i}x_{i\alpha}\sum_{j:(i,j)\in E} (A^{ij} x_j)_\alpha = 0  
\end{equation*}
where the last equality follows by the fact that $\sum_{i=1}^n \sconst_i u_i(x) =0$ since the game is rescaled zero-sum.

Let $(v^\ast,x^\ast)$ denote the optimal solution of the  linear program in~\eqref{eq:lpproof}. We now prove that $x^\ast$ is a Nash equilibrium using the fact that $\mathrm{OPT}\leq 0$ . For the sake of contradiction, assume $x^\ast$ is not a Nash equilibrium, which would mean there exists an agent $i\in V$ and a strategy $\alpha \in \A_i$ satisfying
\begin{align}
\max_{\alpha\in \A_i}u_{i\alpha}(x^\ast)&=\max_{\alpha \in \A_i}\sum_{j:(i,j)\in E} (A^{ij} x^\ast_j)_\alpha > \sum_{\alpha \in \A_i}x^\ast_{i\alpha}  \sum_{j:(i,j)\in E} (A^{ij} x_j^\ast)_\alpha=u_i(x^\ast).
\label{eq:pre_final_lp}
\end{align}
Moreover, since $(v^\ast,x^\ast)$ is the optimal solution of the  linear program in~\eqref{eq:lpproof}, we know that $v_i^{\ast} \geq u_{i\alpha}(x^{\ast})$ for all $i \in V$ and $\alpha \in \A_i$, which then implies $v_i^{\ast} \geq \max_{\alpha \in \A_i}u_{i\alpha}(x^{\ast})$ for all $i \in V$. As a direct result, we obtain the inequality
\begin{equation}
\mathrm{OPT} =\sum_{i \in V} \sconst_i v_i^\ast
\geq \sum_{i \in V} \sconst_i \max_{\alpha \in \A_i}\sum_{j:(i,j)\in E} (A^{ij} x^\ast_j)_\alpha 
\label{eq:pre_final_lp2}
\end{equation}
Finally, combining~\eqref{eq:pre_final_lp} and~\eqref{eq:pre_final_lp2}, we get that
\begin{equation*}
\mathrm{OPT} > \sum_{i\in V} \sconst_i \sum_{\alpha \in \A_i}x^\ast_{i\alpha} \sum_{j:(i,j)\in E} (A^{ij} x^\ast_j)_\alpha=0,
\end{equation*}
where the last equality follows by the fact that $\sum_{i=1}^n \sconst_i u_i(x) =0$ since the game is rescaled zero-sum.
Yet, this leads to contradiction since $\mathrm{OPT} \leq 0$, which means $x^\ast$ must be a Nash equilibrium.

\subsubsection{Proof of Proposition~\ref{thm:regretbdd}}
\label{appsecs:regret_proof}
We begin by presenting preliminaries and notation needed for an intermediate technical result. Denote by $v_i(x)=(u_{i\alpha}(x))_{\alpha\in \A_i}$ the payoff vector for any agent $i \in V$ that includes the utility of each pure strategy $\alpha\in \A_i$ under the joint profile $x = (\alpha, x_{-i})\in \X$. The utility of the player $i \in V$ under the joint strategy profile $x = (x_i, x_{-i})\in \X$ is then given by $u_i(x) = \langle v_i(x), x_i\rangle$. The learning dynamics given by
\begin{equation}
\begin{split}
y_i(t)&=\int_0^t v_i(x(s)) ds \\
x_i(t)&=Q_i(y_i(t))
\end{split}
\label{eq:fotrl}
\end{equation}
characterize the ``Follow the Regularized Leader'' updates for player $i\in V$ at time $t\geq 0$. The so-called choice map $Q_i:\mathbb{R}^{n_i}\rightarrow \X_i$ is defined by
\begin{equation*}
Q_i(y_i)=\arg\max_{x_i\in \X_i}\{\langle y_i,x_i\rangle-h_i(x_i)\}
\end{equation*}
for a strongly convex and continuously differentiable regularizer function $h_i:\X_i\rightarrow \mathbb{R}$. The strong convexity of $h_i$ along with the convexity and compactness of $\X_i$ ensure a unique solution exists for the update $x_i(t)$ so that it is well-defined. The negative entropy regularizer function
\begin{equation*}
h_i(x_i)=\sum_{\alpha\in \A_i}x_{i\alpha}\log(x_{i\alpha})
\end{equation*}
gives rise to the replicator dynamics we study in this work. Furthermore, the convex conjugate of the regularizer function $h_i$ is given by
\begin{equation*}
h^\ast_i(y_i)=\max_{x_i\in \X_i}\{\langle y_i, x_i\rangle-h_i(x_i)\}.
\end{equation*}
A simple corollary of this definition is Fenchel's inquality, which says for every $x_i \in \X_i$ and $y_i\in \mathbb{R}^{n_i}$, 
\begin{equation*}
\langle y_i, x_i\rangle \leq h_i(x_i)+ h_i^{\ast}(y_i).
\end{equation*}
Moreover, by the maximizing argument (see e.g.,~\cite[Ch.~2]{shalev2012online}), $x_i(t)=Q_i(y_i(t))= \nabla h_i^\ast(y_i(t))$. 

We now state and prove an intermediate result, which we then invoke to conclude Proposition~\ref{thm:regretbdd}.
\begin{lemma} Let $h_{\max, i}=\max_{x_i\in \X_i} h_i(x_i)$ and $h_{\min, i}=\min_{x_i\in \X_i} h_i(x_i)$.
If player $i \in V$ follows the replicator dynamics from~\eqref{eq:replicator}, then independent of the rest of the players in the game,
\setlength\belowdisplayskip{-10pt}
\begin{equation*}
\max_{x_i\in \X_i}\int_0^t\langle v_i(x(s)),x_i\rangle ds-\int_0^t\langle v_i(x(s)),x_i(s)\rangle ds\leq h_{\max, i}-h_{\min, i}.
\end{equation*}
\label{lem:support}
\end{lemma}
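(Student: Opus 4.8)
The plan is to track the convex conjugate $h_i^\ast$ evaluated along the aggregate payoff path $y_i(t)=\int_0^t v_i(x(s))\,ds$, exploiting the two facts already recorded in the excerpt: that $x_i(t)=Q_i(y_i(t))=\nabla h_i^\ast(y_i(t))$, and that Fenchel's inequality $\langle y_i,x_i\rangle\le h_i(x_i)+h_i^\ast(y_i)$ holds for every $x_i\in\X_i$. First I would observe that the replicator trajectory $x(\cdot)$ is continuous (it is the flow of a Lipschitz field), hence $s\mapsto v_i(x(s))$ is continuous and $y_i(\cdot)$ is $C^1$ with $\dot y_i(t)=v_i(x(t))$ by the fundamental theorem of calculus; and since $h_i$ is strongly convex, $h_i^\ast$ is differentiable, so the chain rule gives
\[
\frac{d}{dt}\,h_i^\ast(y_i(t))=\big\langle \nabla h_i^\ast(y_i(t)),\dot y_i(t)\big\rangle=\big\langle x_i(t),v_i(x(t))\big\rangle .
\]
Integrating from $0$ to $t$ and using $y_i(0)=0$ yields
\[
\int_0^t \langle v_i(x(s)),x_i(s)\rangle\,ds \;=\; h_i^\ast(y_i(t))-h_i^\ast(0).
\]

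Next I would evaluate the boundary term: $h_i^\ast(0)=\max_{x_i\in\X_i}\{-h_i(x_i)\}=-\min_{x_i\in\X_i}h_i(x_i)=-h_{\min,i}$. For the comparator side, for any fixed $x_i\in\X_i$ we have $\int_0^t\langle v_i(x(s)),x_i\rangle\,ds=\langle y_i(t),x_i\rangle$, and Fenchel's inequality together with $h_i(x_i)\le h_{\max,i}$ gives
\[
\int_0^t\langle v_i(x(s)),x_i\rangle\,ds \;\le\; h_i(x_i)+h_i^\ast(y_i(t)) \;\le\; h_{\max,i}+h_i^\ast(y_i(t)).
\]
Subtracting the displayed identity for $\int_0^t\langle v_i(x(s)),x_i(s)\rangle\,ds$ from this bound, the term $h_i^\ast(y_i(t))$ cancels and the $h_i^\ast(0)=-h_{\min,i}$ contribution reappears, leaving
\[
\int_0^t\langle v_i(x(s)),x_i\rangle\,ds-\int_0^t\langle v_i(x(s)),x_i(s)\rangle\,ds \;\le\; h_{\max,i}-h_{\min,i}
\]
for every $x_i\in\X_i$; taking the maximum over $x_i\in\X_i$ on the left gives exactly the claimed bound. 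Note this argument is fully agnostic to the behavior of $x_{-i}$, since everything is phrased in terms of the realized payoff path $v_i(x(s))$ only, which delivers the ``independent of the rest of the players'' clause.

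The only real obstacle is the regularity bookkeeping in the first display: one must confirm that $h_i^\ast(y_i(t))$ is genuinely differentiable in $t$ with the stated derivative. This reduces to (i) differentiability of $y_i(\cdot)$, immediate from continuity of the integrand; (ii) differentiability of $h_i^\ast$ with $\nabla h_i^\ast=Q_i$, which is the standard consequence of strong convexity and compactness of $\X_i$ cited just before the lemma; and (iii) the envelope/Danskin-type identity $x_i(t)=\nabla h_i^\ast(y_i(t))$, also already stated. Everything else is an application of Fenchel duality and the fundamental theorem of calculus, so beyond this smoothness check there is nothing deep to grind through.
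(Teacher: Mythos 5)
Your proposal is correct and follows essentially the same route as the paper: Fenchel's inequality for the comparator term, and the identity $\tfrac{d}{dt}h_i^\ast(y_i(t))=\langle v_i(x(t)),x_i(t)\rangle$ integrated from $h_i^\ast(0)=-h_{\min,i}$ for the realized-payoff term. The only cosmetic difference is that you obtain this derivative by the chain rule via $\nabla h_i^\ast=Q_i$, whereas the paper re-derives it through an explicit envelope-theorem argument; these are the same computation.
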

\begin{proof}[Proof of Lemma~\ref{lem:support}]
We begin by deriving a bound for every fixed $x_i\in \X_i$ on the expression
\begin{equation}
\int_0^t\langle v_i(x(s)),x_i\rangle ds.
\label{eq:seq1}
\end{equation}
From the definition of the utility dynamics given by 
\begin{equation*}
y_i(t) =\int_0^t v_i(x(s)) ds,
\end{equation*}
an equivalent representation of~\eqref{eq:seq1} is
\begin{equation}
\int_0^t \langle v_i(x(s)),x_i\rangle ds =\langle y_i(t), x_i\rangle.
\label{eq:seq2}
\end{equation}
From Fenchel's inequality $\langle y_i(t), x_i\rangle\leq h_i(x_i)+h_i^\ast(y_{i}(t))$ and by definition $h_i(x_i) \leq h_{\max, i}$. Combining each inequality with~\eqref{eq:seq2}, we get
\begin{equation}
\int_0^t \langle v_i(x(s)),x_i\rangle ds \leq h_i^\ast(y_{i}(t))+h_{\max, i}.
\label{eq:seq3}
\end{equation}
We now work on obtaining a bound for $h_i^\ast(y_{i}(t))$. Observe that by definition
\begin{align*}
h^\ast_i(y_i(t))&=\langle y_i(t),Q_i(y_i(t))\rangle-h_i(Q_i(y_i(t)))\\&=\int_0^t \langle v_i(x(s)),Q_i(y_i(s))\rangle ds-h_i(Q_i(y_i(t))).
\end{align*}
Now define the function
\[\phi:(z,t)\mapsto \int_0^t \langle v_i(z(s)),z(s)\rangle ds-h(z(t)).\]
For any fixed $t\geq 0$, we can verify by the maximizing argument (see, e.g., \cite[\S2.7]{shalev2012online}) that $Q_i(y_i(t))$ maximizes $\phi(\cdot,t)$, so we can apply the envelope theorem~\citep{milgrom2002envelope}
to take the partial derivative of $\phi(Q_i(y_i(t)),t)$ with respect to the argument $t$. In doing so, we get
\[\frac{d}{dt}h^\ast_i(y_i(t))=\frac{\partial }{\partial t}\phi(Q_i(y_i(t)),t))= \langle v_i(x_i(t)),Q_i(y_i(t))\rangle.\]
Then, integrating the equation overhead, we obtain
\[h_i^\ast(y_i(t))-h_i^\ast(y_i(0))=\int_0^t\langle v_i(x(s)),Q_i(y_i(s))\rangle ds.\]
Since $h_i^\ast(y_i(0))=-h_{\min, i}$, we get the bound
\[ h_i^\ast(y_i(t))\leq \int_0^t \langle v_i(x(s)),Q_i(y_i(s))\rangle ds-h_{\min, i}.\]
Finally, combining the previous equation with~\eqref{eq:seq3}, we conclude the stated result of
\begin{equation*}
\begin{aligned}
\max_{x_i\in \X_i}&\int_0^t\langle v_i(x(s)),x_i\rangle ds-\int_0^t\langle v_i(x(s)),x_i(s)\rangle ds\leq h_{\max, i}-h_{\min, i}.
\end{aligned}
\end{equation*}
\end{proof}
We now return to proving Proposition~\ref{thm:regretbdd}.
By definition $u_i(x)=\langle v_i(x),x_i\rangle$, which means we can directly apply Lemma~\ref{lem:support} to the regret definition. We now do so and obtain the stated result:
\begin{align*}
\mathrm{Reg}_i(t)&=\max_{x_i\in \X_i}\frac{1}{t}\int_0^t (u_i(x_i, x_{-i}(s))-u_i(x(s)))ds \\
&= \max_{x_i\in \X_i}\frac{1}{t}\int_0^t\langle v_i(x(s)),x_i-x_i(s)\rangle ds\\
&\leq \frac{h_{\max, i}-h_{\min, i}}{t}=\frac{\Omega_i}{t}.
\end{align*}

\label{appsecs:lp}

\section{Supplementary Experiments and Details}
\label{appsecs:experiments}
The focus of this section is to provide supplementary simulations and details on our experimental methodology. 
We provide further simulations of the time-evolving generalized rock-paper-scissors game in Appendix~\ref{appsec:further_rps}. Then, in Appendix~\ref{appsecs:multiplayer}, we present simulations on a 5-player rescaled zero-sum polymatrix game. In Appendix~\ref{appsecs:largescalesim}, we provide simulations for larger systems.
Finally, Appendix~\ref{appsecs:reproduce} contains a description of our simulation environment and methods to allow for easy reproduction of the experimental results.

\subsection{Simulations of Time-Evolving Generalized Rock-Paper-Scissors Game}
\label{appsec:further_rps}
As mentioned in the introduction, there is a breadth of work studying the emergence of recurrent behavior of replicator dynamics in network zero-sum games~\cite{piliouras2014persistent,piliouras2014optimization,boone2019darwin, mertikopoulos2018cycles,nagarajanchaos20,perolat2020poincar}. The canonical method to prove such a result is showing that the Kullback-Leibler (KL) divergence between the replicator dynamics trajectory and the Nash equilibrium remains constant.  However, it is not clear how to apply this proof method when the game is no longer static since the Nash equilibrium of the game is not fixed. This is in fact a key technical challenge that we overcome in this work. To illustrate this, we consider the time-evolving generalized rock-paper-scissors game proposed by~\citet{mai2018cycles}. We show the evolution of the Nash equilibrium over time in Figure~\ref{fig:sub1nash}, the evolution of the population strategy vector in Figure~\ref{fig:sub2y}, and the KL divergence between the evolving equilibrium and the replicator trajectory in Figure~\ref{fig:sub3kl}. Clearly, the Nash equilibrium is no longer static and furthermore the KL divergence is not a constant of motion. This precludes the opportunity to follow standard proof techniques for showing replicator dynamics are recurrent in time-evolving games.
\begin{figure}[h]
\centering
\hfill
 \subfloat[][Evolution of Nash 
 ]{\includegraphics[width=0.32\linewidth]{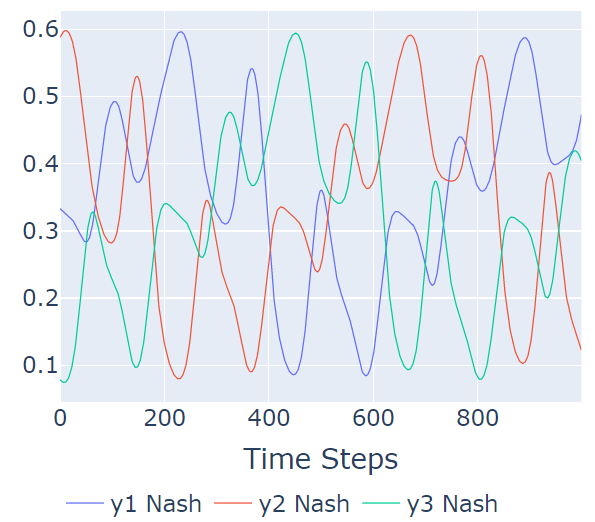}\label{fig:sub1nash}}\hfill
 \subfloat[][Evolution of $y$-player
 strategies]{\includegraphics[width=0.32\linewidth]{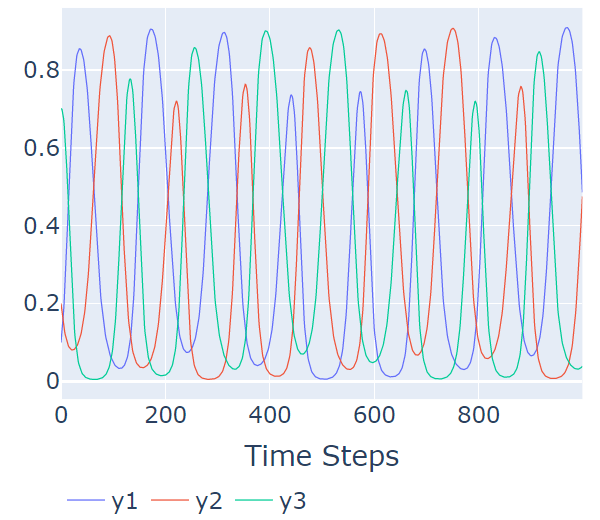}\label{fig:sub2y}}\hfill
 \subfloat[][KL-divergence]{\includegraphics[width=0.34\linewidth]{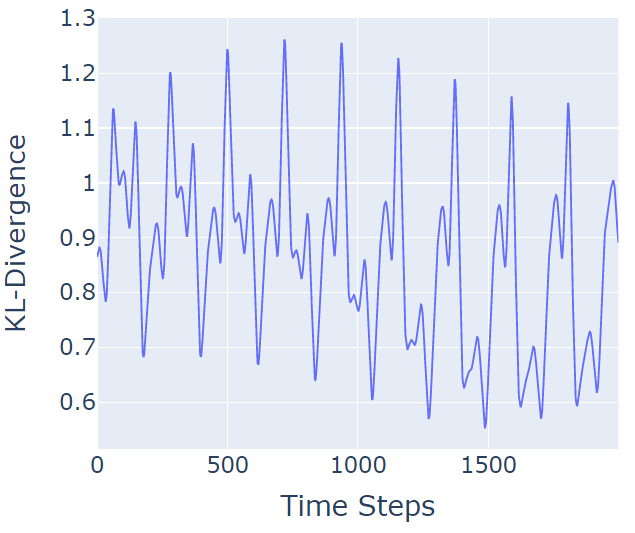}\label{fig:sub3kl}}\hfill
  \caption{Nash equilibrium of the time-evolving game, evolving strategies and the KL divergence between the Nash and strategies, from left to right, under replicator dynamics for the time-evolving generalized RPS mode \cite{mai2018cycles}.}
    \label{fig:klevolve}
\end{figure}

We solve this technical challenge by reducing time-evolving dynamics to a static polymatrix game and then proving a constant of motion. Indeed, for the time-evolving generalized rock-paper-scisscors game, we can verify that the constant of motion from Corollary~\ref{cor:newconstant} holds empirically.   Figure~\ref{fig:kl_rps} shows the weighted sum of KL-divergences from the equilibrium of the rescaled zero-sum game we obtain from our reduction to the strategy of each player along the replicator trajectory is constant. In other words, while a constant of motion exists for the time-evolving generalized rock-paper-scissors game, it not the obvious choice.
\begin{figure}[h!]
    \centering
    \includegraphics[width=0.75\linewidth]{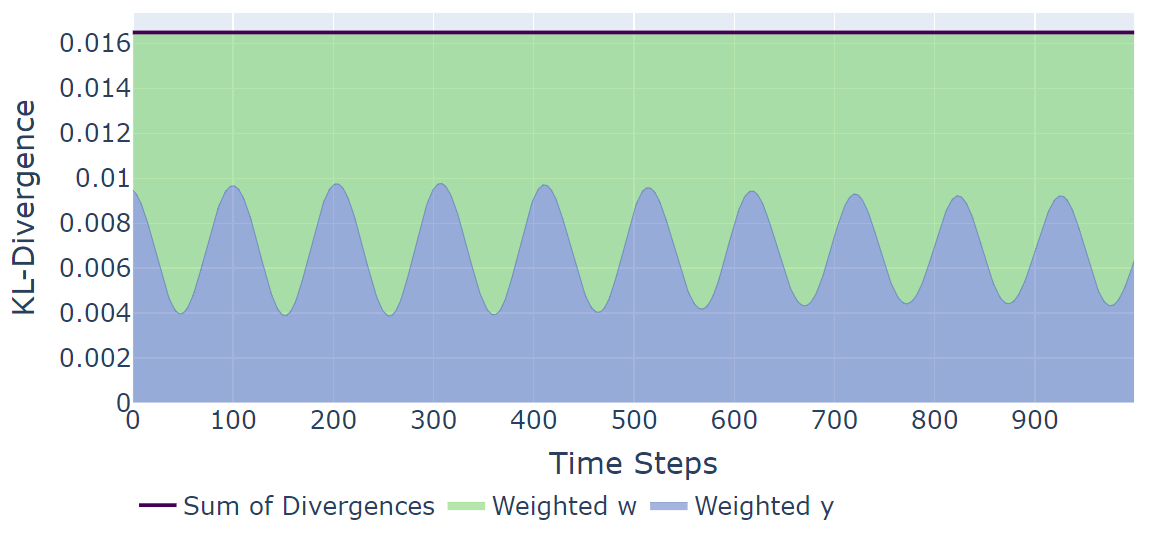}
    \caption{Constant sum of KL-divergence for time-evolving generalized rock-paper-scissors game.}
    \label{fig:kl_rps}
\end{figure}

In Figure~\ref{fig:psection2}, we present a Poincar\'e section developed from simulating 10 trajectories of initial conditions $\{[0.5, 0.01k, 0.5-0.01k, 0.5, 0.25, 0.25]\}_{k=1}^{10}$ and taking the points that intersect the hyperplane $y_2-y_1-w_2+w_1=0$. 
In Figure~\ref{fig:psectionside}, we show another example of a Poincar\'e section by simulating 10 trajectories using initial conditions $\{[1/3, 0.03k, 2/3-0.03k, 1/3, 1/3, 1/3]\}_{k=1}^{10}$ and marking where the trajectories intersected the hyperplane $y_2+y+1+w_2+w_1=4/3$. The intersection points indicate the system is quasi-periodic since they lie on closed curves.

To visualize the multidimensional system behavior while retaining the maximum amount of information, we generated Figure~\ref{fig:embedding1}, which transforms the 3-dimensional data for players $y$ and $w$ respectively to two dimensions. To be precise, the transformations are given as follows:
\begin{align*}
&y_1' = \frac{\sqrt{2}}{2} y_3 - \frac{\sqrt{2}}{2} y_2, \qquad y_2' =  - \frac{1}{\sqrt{6}} y_3 - \frac{1}{\sqrt{6}} y_2 + \frac{\sqrt{2}}{\sqrt{3}} y_1 \\
&w_1'= \frac{\sqrt{2}}{2} w_3 - \frac{\sqrt{2}}{2} w_2, \quad w_2' = - \frac{1}{\sqrt{6}} w_3 - \frac{1}{\sqrt{6}} w_2 + \frac{\sqrt{2}}{\sqrt{3}} w_1
\end{align*}
The 4-dimensional system is now visualized in the 3-dimensional plane, with color acting as the final dimension. The simulations are run for a range of initial conditions to show that when we start closer to the interior fixed point, trajectories are bounded closer to zero. These simulations were then compiled into an animation, which can be found in the supplementary code repository. Figure~\ref{fig:embedding1} and the corresponding animation are analogous to Figure~\ref{fig:sub3} and its corresponding animation, but the transformation method allows for visualization of all 6 dimensions instead of a subset of 4 dimensions.

\begin{figure}[!ht]
    \centering
    \subfloat[][Constant of motion]{\includegraphics[width=0.45\textwidth]{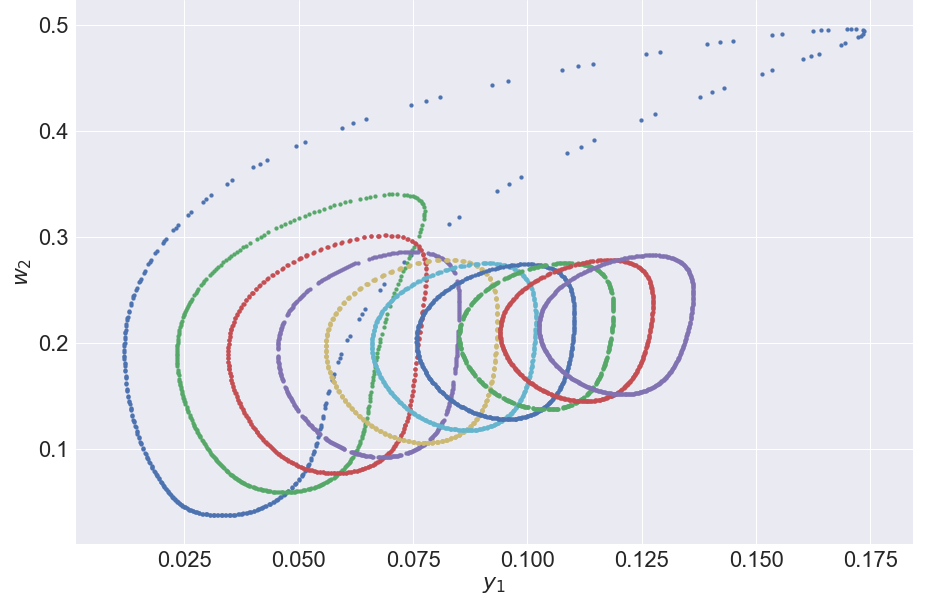}\label{fig:psection2}}\hfill
    \subfloat[][Poincar\'e section]{\includegraphics[width=0.55\textwidth]{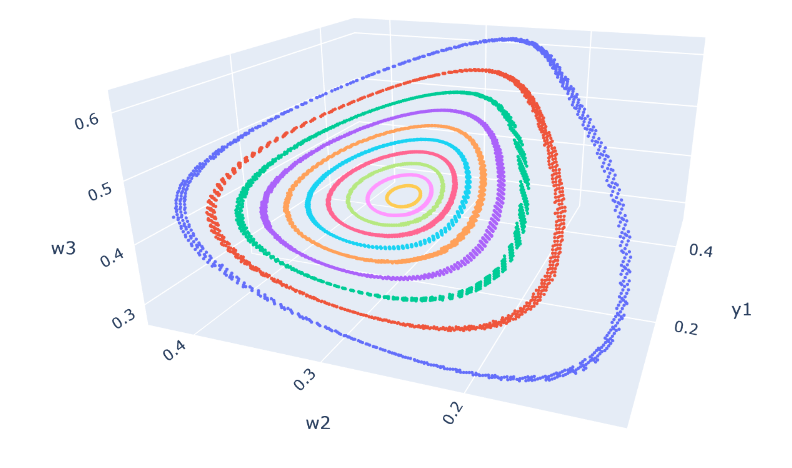}\label{fig:psectionside}}\hfill
    
    \caption{(a) 2D Poincar\'e section at $\species_2-\species_1-\weights_2+\weights_1=0$ with 10 trajectories of initial conditions $\{[0.5, 0.01k, 0.5-0.01k, 0.5, 0.25, 0.25]\}_{k=1}^{10}$, (b) Side view of Poincar\'e section at $\species_2+\species_1+\weights_2+\weights_1=4/3$ with 10 trajectories of initial conditions $\{[1/3, 0.03k, 2/3-0.03k, 1/3, 1/3, 1/3]\}_{k=1}^{10}$.}
    \label{fig:psection1}
\end{figure}

\begin{figure}[!ht]
    \centering
    \includegraphics[width=0.8\textwidth]{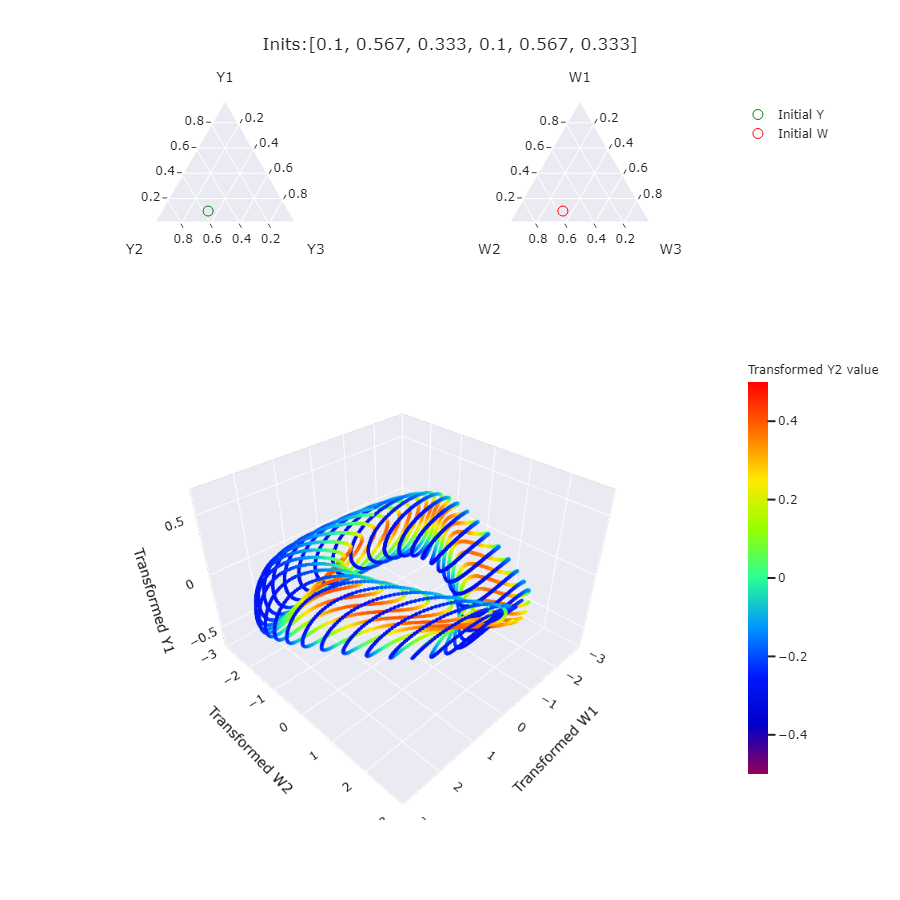}
    \caption{4D embedding of trajectories for a range of initial conditions.}
    \label{fig:embedding1}
\end{figure}

Another important property of the considered population/environment dynamics is that both the time-average vector produced by the replicator dynamics and the time-average utilities converge to the equilibrium values (see Theorem~\ref{t:avg1} of Section~\ref{sec:recurrence}). In Figures~\ref{fig:yavg} and~\ref{fig:time_avg1} we plot the time-average of the population $y$ and environment $w$ in the time-evolving generalized rock-paper-scissors game, all of which converge to $1/3$ which is the equilibrium strategy. Similarly in Figures~\ref{fig:time_avg3} and~\ref{fig:time_avg2} we plot the time-average utilities of the population
$y$ and environment $w$ converging to the equilibrium utility which in the considered instance is zero.

\begin{figure}[H]
    \centering
    \subfloat[][Time-average strategy: $\species$]{\includegraphics[width=0.24\linewidth]{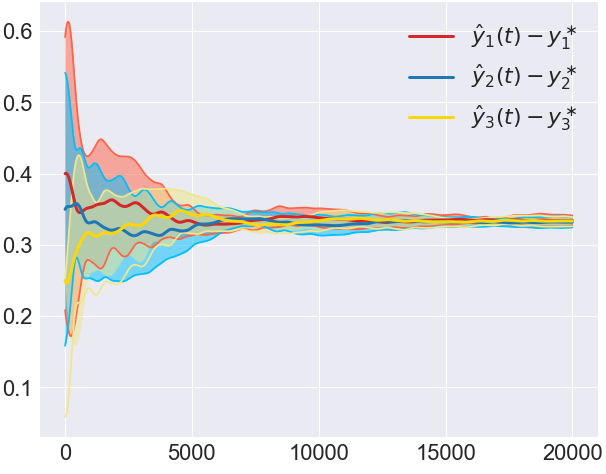}\label{fig:yavg}}\hfill
    \subfloat[][Time-average strategy: $w$]{\includegraphics[width=0.24\linewidth]{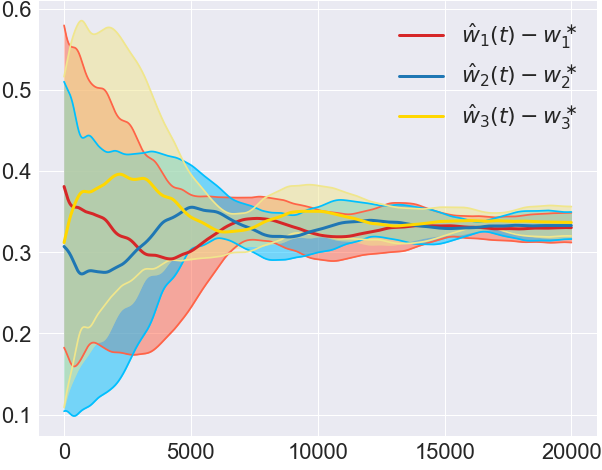}\label{fig:time_avg1}}\hfill
     \subfloat[][Time-average strategy: $\species$]{\includegraphics[width=0.255\linewidth]{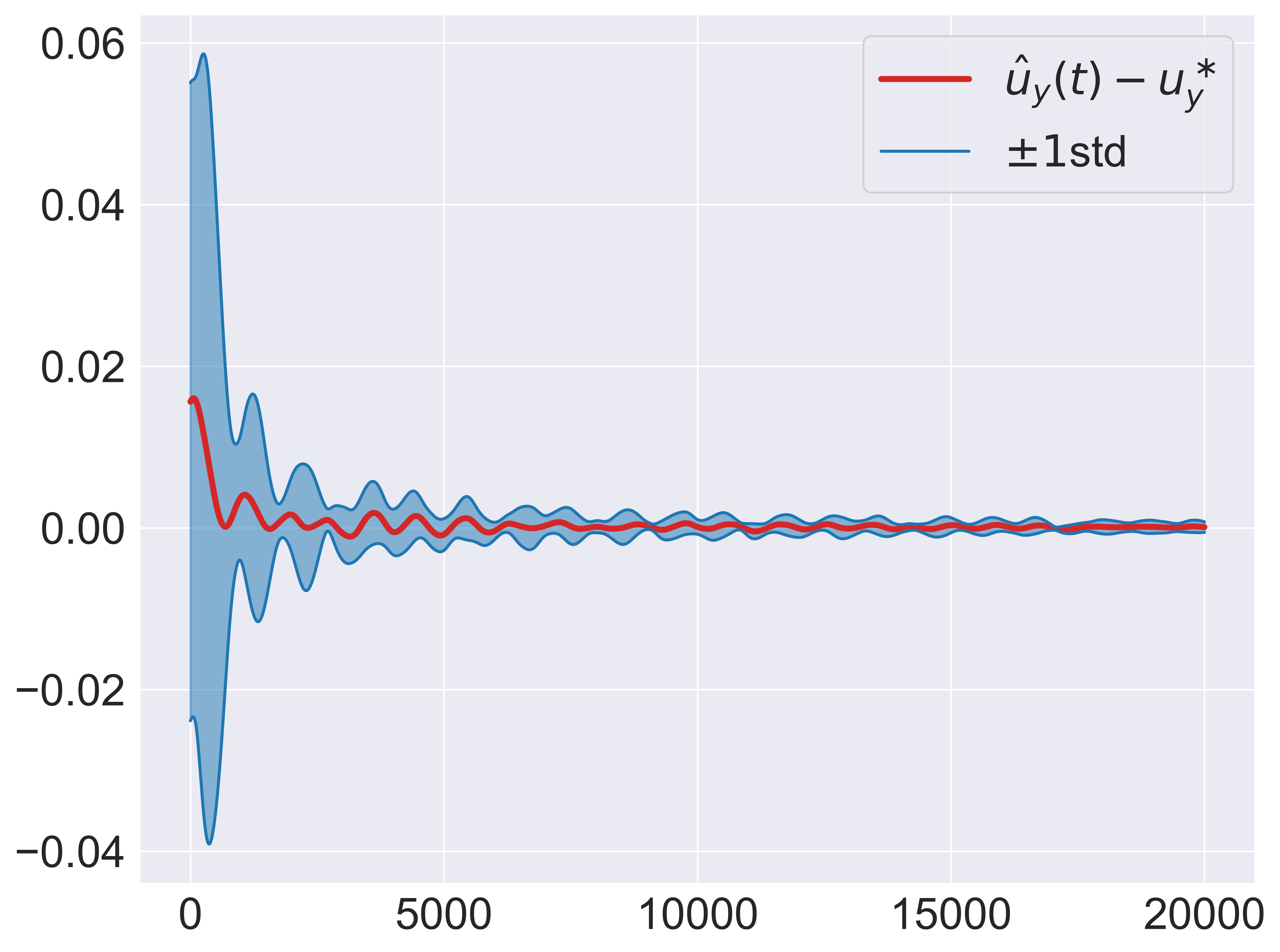}\label{fig:time_avg3}}\hfill
     \subfloat[][Time-average strategy: $w$]{\includegraphics[width=0.255\linewidth]{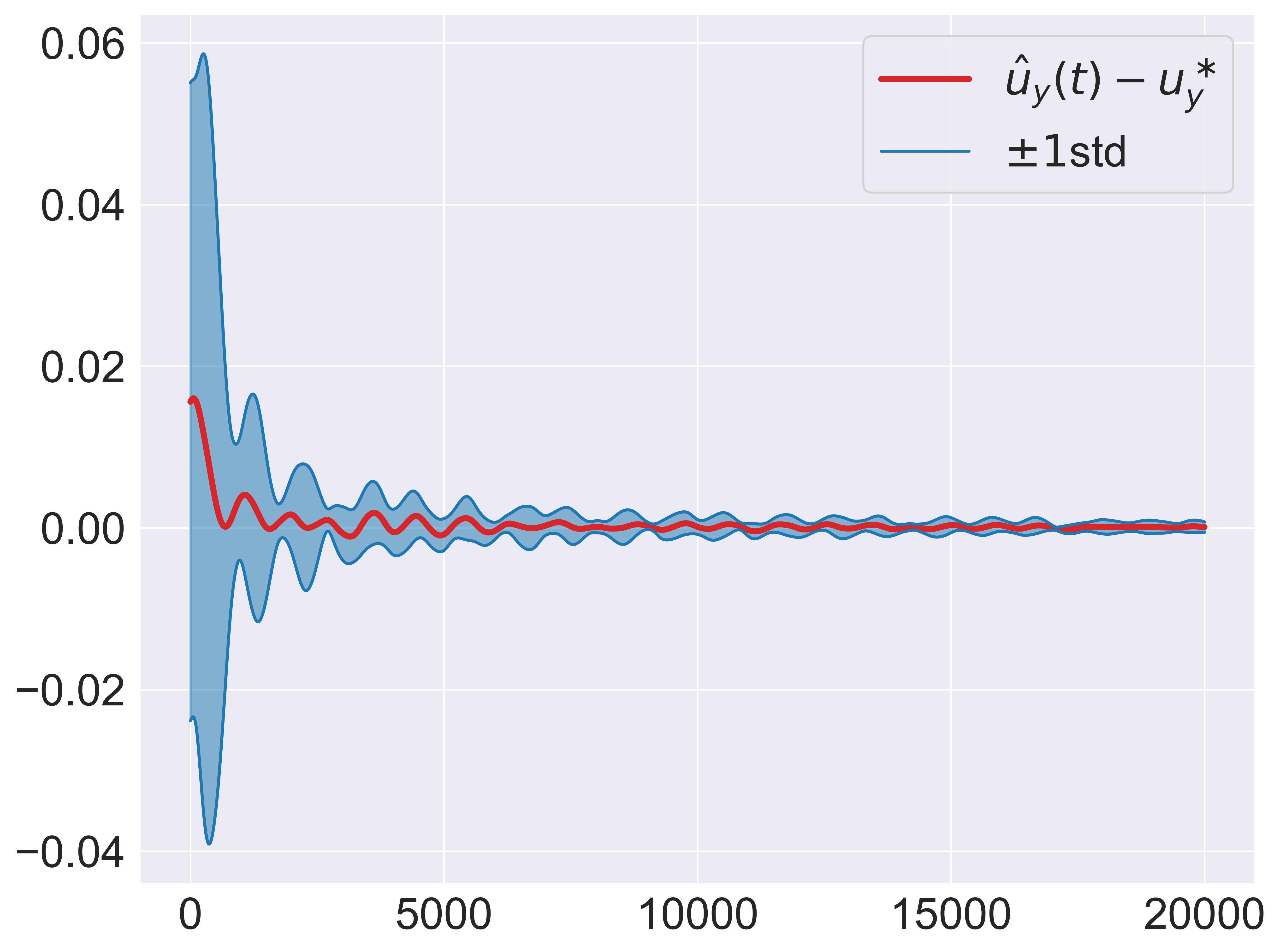}\label{fig:time_avg2}}\hfill
     \caption{(a-b) Time-average for $\species$ and $\weights$ converging to Nash, (c-d) Time-average utility converging with bounded regret.}
    \label{fig:first_timeaverage}
\end{figure}

\subsection{Simulations on 5-player Rescaled Zero-Sum Polymatrix Game}
\label{appsecs:multiplayer}
We simulate the rescaled zero-sum polymatrix game depicted in Figure~\ref{fig:graphicalgamesim1} where each player has 3 actions. 
As shown in Figure~\ref{fig:kldiv5player}, the weighted sum of Kullback-Leibler (KL) divergences of each agent's strategy from the equilibrium is a constant of motion, demonstrating the bounded orbits property of Lemma~\ref{t:invariant}.
In the simulation $\mu_1= 0.1, \mu_2=0.5, \mu_3=0.8, \mu_4=0.5$ and the initial condition was $[0.3, 0.4, 0.3, 0.2, 0.1, 0.7, 0.5, 0.3, 0.2, 0.7, 0.2, 0.1, 0.4, 0.2, 0.4]$. We also include the time averages of the trajectories and utility for player $x_3$ in Figure~\ref{fig:5player}. The plots show that the player's trajectories converge to the interior Nash equilibrium at $(1/3,1/3,1/3)$ and that the time average utility converges to the utility at this interior Nash equilibrium. 

 \begin{figure}[h]
    \centering
    \scalebox{0.95}{%
    \begin{tikzpicture}
    \begin{scope}[every node/.style={circle,thick,draw}]
        \node[fill =red!30] (x1) at (-6,2) {$x_1$};
        \node[fill =blue!30] (x2) at (-3,2) {$x_2$};
        \node[fill =red!30] (x3) at (0,2) {$x_3$};
        \node[fill =blue!30] (x4) at (3,2) {$x_4$};
        \node[fill =red!30] (x5) at (6,2) {$x_5$};
        
    \end{scope}
    \begin{scope}[>={Stealth[black]},
                 every node/.style={fill=white,circle},
                 every edge/.style={draw=black}]
        \path [<-] (x1) edge [bend right=30] node [below] {$-\I$}(x2);
        \path [<-] (x2) edge [bend right=30] node [above] {$\mu_1 \I$} (x1);
        \path (x1) edge [loop above] node {$\P$} (x1);
        
        \path [<-] (x2) edge [bend right=30] node [below] {$-\I$}(x3);
        \path [<-] (x3) edge [bend right=30] node [above] {$\mu_2 \I$} (x2);
        \path (x3) edge [loop above] node {$\P$} (x3);
        
        \path [<-] (x3) edge [bend right=30] node [below] {$-\I$}(x4);
        \path [<-] (x4) edge [bend right=30] node [above] {$\mu_3 \I$} (x3);
    
        \path [<-] (x4) edge [bend right=30] node [below] {$-\I$}(x5);
        \path [<-] (x5) edge [bend right=30] node [above] {$\mu_4 \I$} (x4);
        \path (x5) edge [loop above] node {$\P$} (x5);
    \end{scope}
    \end{tikzpicture}
    }
    \caption{Each node represents a player, with different initial strategies and values of $\mu_i$.}
    \label{fig:graphicalgamesim1}
\end{figure}
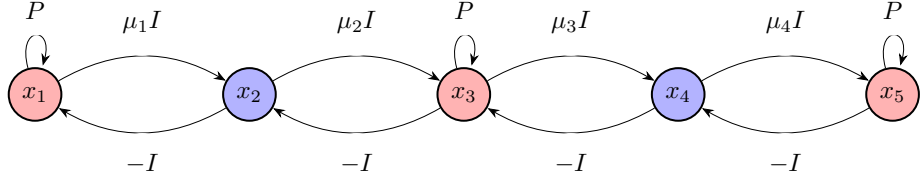
\begin{figure}[!htb]
     \centering
    {\includegraphics[width=0.95\linewidth]{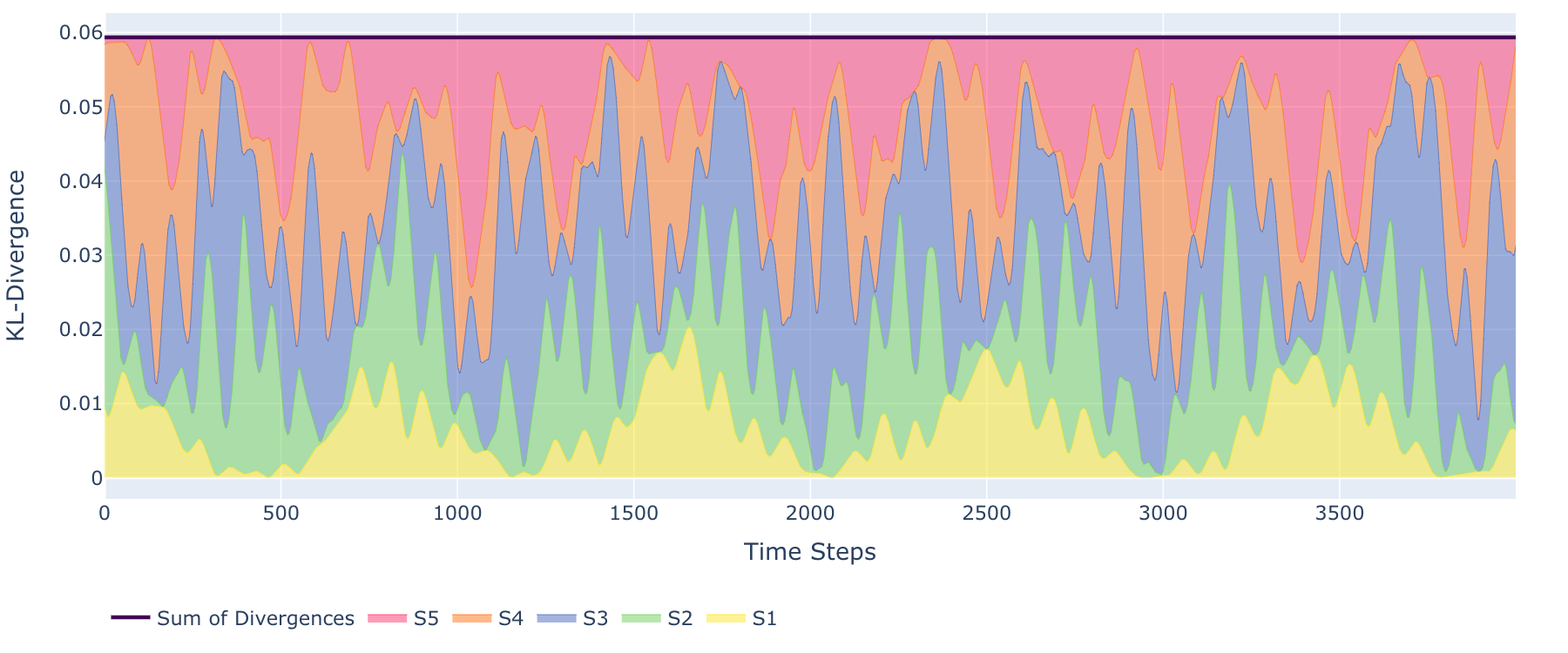}
     }
     
     \caption{Weighted KL divergence for five player time-evolving RPS game for $1000$ iterations.}
     \label{fig:kldiv5player}
 \end{figure}
 
  \begin{figure}[!htb]
     \centering
     \subfloat[][Time-average strategy: $x_3$]{\includegraphics[width=0.32\linewidth]{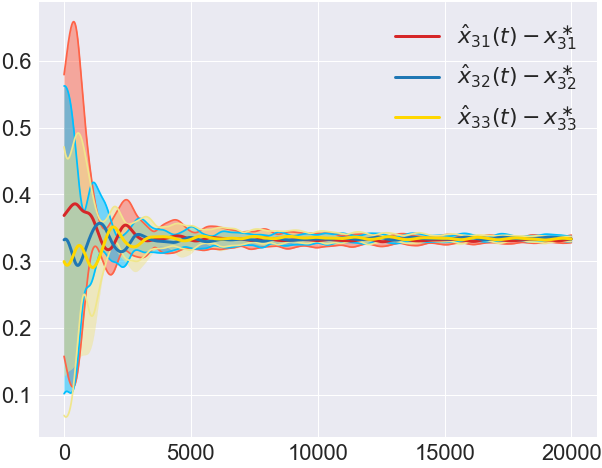}\label{fig:yavg5player}}
     \subfloat[][Time-average utility: $x_3$ ]{\includegraphics[width=0.35\linewidth]{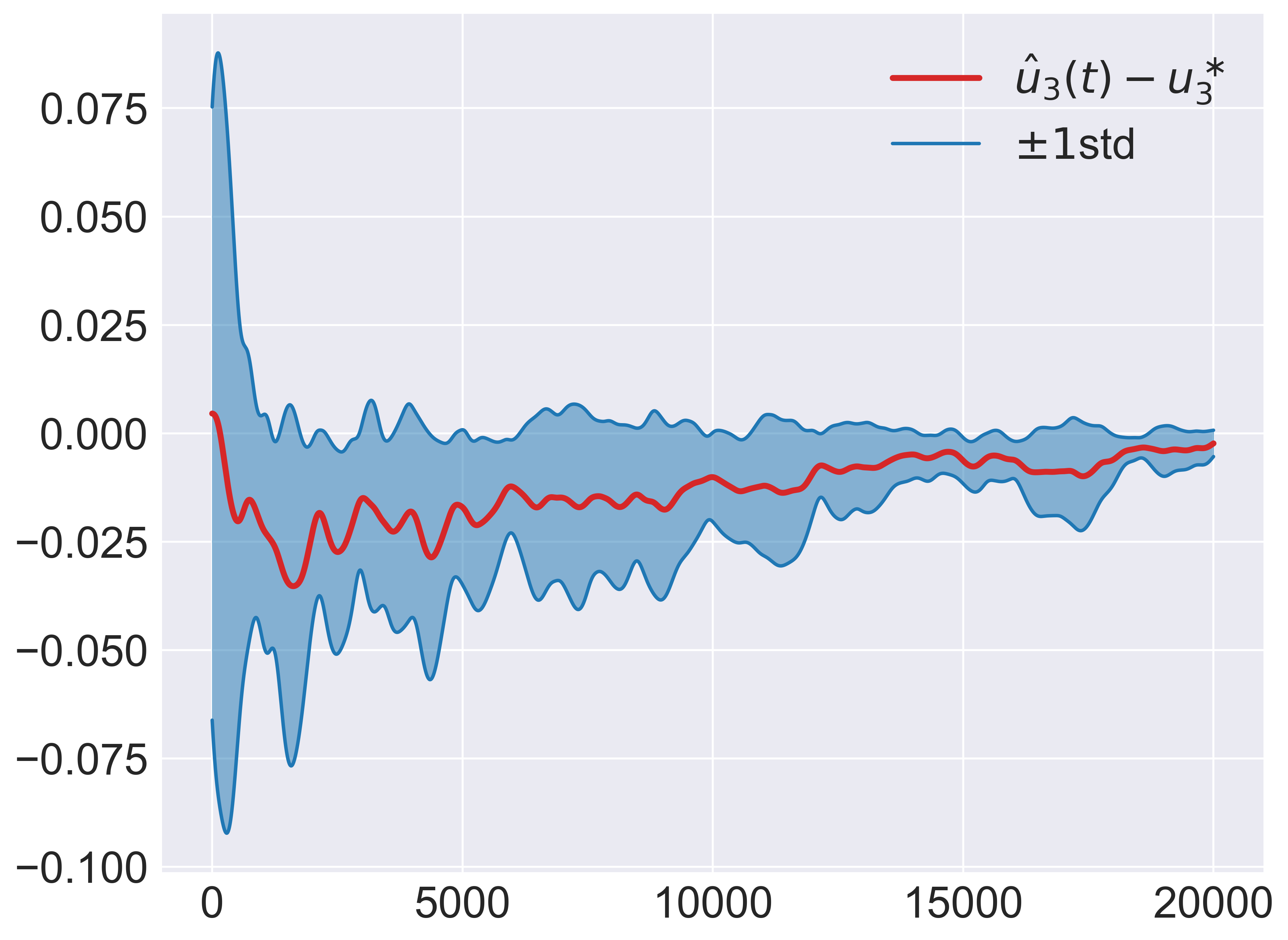}\label{fig:tavg5player}}
     \caption{ (a) Time-average trajectories for $x_3$ showing convergence to Nash. (b) Time-average utility convergence for $x_3$ player with bounded regret. Initial conditions are fixed values chosen uniformly at random on the simplex for $x_1,x_2,x_4,x_5$ and for $x_3$, they take values in $(z, 0.75-z, 0.25)$ for each $z\in\{0.1+\tfrac{2k}{30}, k\in\{0,\ldots, 9\}\}$.}
     \label{fig:5player}
 \end{figure}

\subsection{Simulations on Large-Scale Zero-Sum Polymatrix Games}
\label{appsecs:largescalesim}
To show the potential for the scalability of this theory, we simulated larger systems with more complex dynamics between players, and experimentally confirm that our theorems still hold in these contexts. 

\paragraph{Showing Poincar\'e recurrence in large-scale games.}
In order to obtain the initial conditions of this simulation, we used an 1200x1200 pixel image of Pikachu and converted that image into an 8x8 array of RGB values. Then, we convert these RGB values into a set of initial conditions for the replicator dynamics. In order to see more obvious differences between colors as the strategies evolve, we applied a sigmoid function centered at 0.5 to each RGB value.

Due to the presence of the sigmoid function, we expect to see a mostly dark or mostly bright screen whenever the strategies are far from the central value of 0.5, and after creating several animations, this hypothesis is confirmed. Indeed, in Figure \ref{fig:pikachumain} we see that the grid very quickly transforms into something that does not resemble the original at all. After a number of iterations, the recurrence property causes the Pikachu (or at least, something that looks similar to Pikachu) to reappear.

An additional point to note is that our code for simulating such large-scale rescaled zero-sum polymatrix games was refactored from the previous, smaller scaled code such that it now works for a general number of nodes N. Hence, future simulations could potentially model multiagent systems at a much wider level than shown in our work.

\paragraph{Constant KL-divergence with complex graph structures.}
In the simulations up to this point, we have been looking at rescaled zero-sum polymatrix games of a particular structure. Indeed, these simulations are extensions to the example polymatrix game as defined in Figure \ref{fig:graphicalgame}. However, our theory extends to more than just graphs of that form.  So long as the graphs are formed from the basic building blocks in Figure \ref{fig:buildingblock}, we will see similar results. We performed experiments using extensions of the `butterfly' graph shown in Figure \ref{fig:butterfly1}, where each red node is a population of species and each blue node is an environment. The connections between blue and red nodes represent bimatrix games of the form $(-I,\mu I)$ and the self-loops represent self-play zero sum games. For the simulations, we use RPS as the self-play game.

As shown in Figures \ref{fig:kldivtorus} and \ref{fig:kldivtoruslarge}, we see that despite the much more complex graph structure and many nodes, the weighted sum of divergences again sums up to a constant value. In the supplementary code, we also present an animation that shows a grid where each element represents the strategy of a node. Similarly to the Pikachu example above, we expect to see the same image after some number of iterations, but unfortunately due to the density of the graph, this would take a far larger number of iterations than our integration allows in order to achieve recurrence.

\begin{figure}[H]
     \centering
    \includegraphics[width=0.23\linewidth]{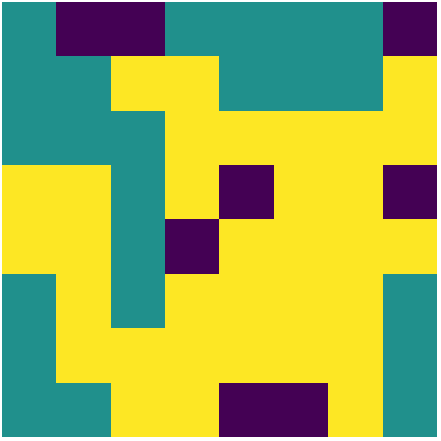}
     \caption{$8\times8$ grid of colors generated by sigmoid function}
     \label{fig:pikachu}
\end{figure}
\begin{figure}[H]
    \centering
    \includegraphics[width=0.95\linewidth]{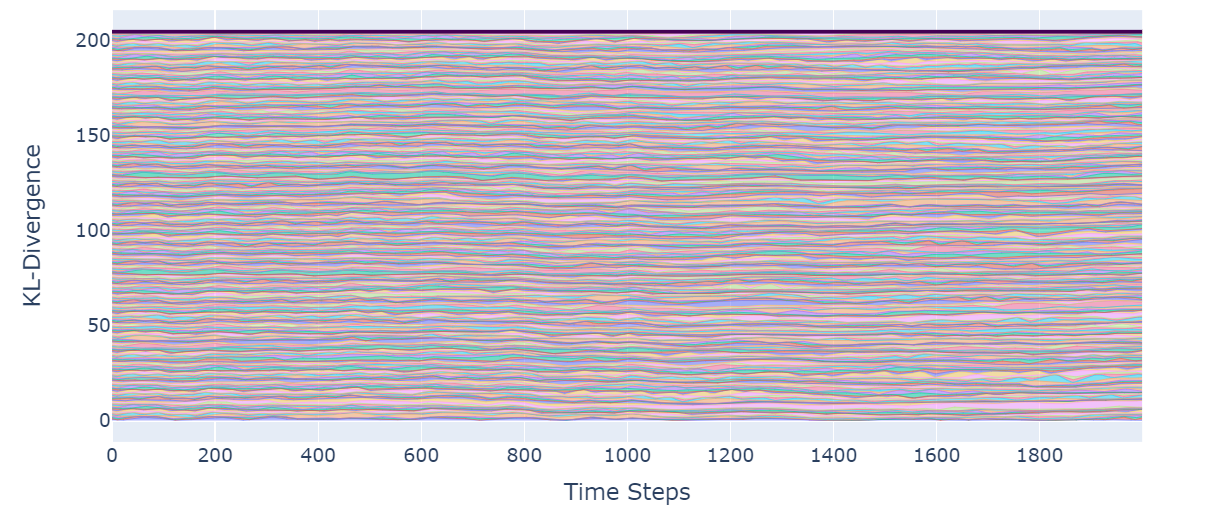}\hfill
    \caption{KL Divergences in 400-player rescaled zero-sum RPS game. Note that the sum of divergences is still constant despite the large number of nodes/players.}
    \label{fig:kldivtoruslarge}
\end{figure}

\subsection{Implementation Details}
\label{appsecs:reproduce}
The code used to generate the simulations in this paper has been compiled into a Jupyter notebook for ease of viewing. A HTML render of the notebook can be found at the following \href{https://nbviewer.jupyter.org/github/ryanndelion/egt-squared/blob/main/Evolutionary\%20Game\%20Theory\%20Squared\%20AAAI.html}{link}, while the full repository is \href{https://github.com/ryanndelion/egt-squared}{here}. 
Our code is in Python~3.6 and only requires basic scientific computing packages such as NumPy and SciPy and data visualization packages such as Matplotlib and Plotly. Most of the code in the submission has been edited so that it can easily be executed on any standard computer in a matter of minutes as it is not computationally intensive.

\end{appendices}

\end{document}